\keywords{Linear Logic, Graded Logic, Differential Operators, Denotational Semantics}
\theoremstyle{plain} 
\newcommand\Rel{\mathtt{Rel}}
\newcommand\R{\mathbb{R}}
\newcommand\N{\mathbb{N}}
\newcommand{\Sr}{\mathcal{S}}
\newcommand{\C}{\mathcal{C}}
\newcommand{\Lin}{\mathcal{L}}
\newcommand\D{\mathcal{D}}
\renewcommand\LL{\mathsf{LL}}
\newcommand\DiLL{\mathsf{DiLL}}
\newcommand\DiLLo{\mathsf{DiLL}_0}
\newcommand\MALL{\mathsf{MALL}}
\newcommand\DDiLL{\mathsf{D-DiLL}}
\newcommand\IndLL{\mathsf{IndLL}}
\newcommand\BLL{\mathsf{BLL}}
\newcommand\BSLL{\mathsf{B_{\mathcal{S}}LL}}
\newcommand\DBSLL{\mathsf{DB_{\mathcal{S}}LL}}
\newcommand\IDiLL{\mathsf{IDiLL}}
\newcommand{\der}{\mathsf{d}}
\newcommand{\contr}{\mathsf{c}}
\newcommand{\w}{\mathsf{w}}
\newcommand{\p}{\mathsf{p}}
\newcommand{\m}{\mathsf{m}}
\newcommand\bw{\bar{\mathsf{w}}}
\newcommand\bwi{\bar{\mathsf{w}}_I\xspace}
\newcommand\bc{\bar{\mathsf{c}}\xspace}
\newcommand\bd{\bar{\mathsf{d}}\xspace}
\newcommand\bdi{\bar{\mathsf{d}}_I\xspace}
\newcommand\wnD{\wn_D\xspace}
\newcommand\ocD{\oc_D\xspace}
\newcommand\wnu{\wn_{D_1} E}
\newcommand\wnd{\wn_{D_2} E}
\newcommand\wnud{\wn_{D_1 \circ D_2} E}
\newcommand\ocu{\oc_{D_1} E}
\newcommand\ocd{\oc_{D_2} E}
\newcommand\ocud{\oc_{D_1 \circ D_2} E}
\newcommand\br[1]{\llbracket #1 \rrbracket}
\newenvironment{bpt}
  {\leavevmode\hbox\bgroup}
  {\DisplayProof\egroup}
  \newenvironment{smallreduc}[2]
  {\begin{center}#1\quad$\lto$\quad#2}
  {\end{center}}
  \newenvironment{bigreduc}[2]{
  \begin{minipage}[t]{1.0\linewidth}
  \begin{flalign*}
  #1
  \quad\lto\quad
  &&
  \end{flalign*}
  \begin{flalign*}
  &&
  #2
  \end{flalign*}
  \end{minipage}
  }
\newcommand\lto{\leadsto}
\newcommand\Rc{\mathcal{R}}
\newcommand\Ci{\mathcal{C}^\infty}
\newcommand\Xo{X^{\left(\omega\right)}}
\newcommand\nb[1]{{#1}^{\bot}}
\newcommand{\id}{\mathsf{id}}
\newcommand{\sem}[1]{\llbracket #1 \rrbracket}
\newcommand*\fonction[5]{#1 \colon \left\{\begin{alignedat}{2}  #2  &\to      #3\\
    #4    &\mapsto  #5
    \end{alignedat} \right. \kern-\nulldelimiterspace}
\begin{document}

\title{Unifying Graded Linear Logic and Differential Operators}

\author[F.~Breuvart]{Flavien Breuvart\lmcsorcid{0000-0002-0425-3270}}[a]
\author[M.~Kerjean]{Marie Kerjean\lmcsorcid{0000-0001-6141-6251}}[b,a]
\author[S.~Mirwasser]{Simon Mirwasser\lmcsorcid{0000-0002-2714-8123}}[a]
\thanks{The authors are grateful to the reviewers for providing with numerous feedback and suggestions, which greatly improved the paper.}

\address{Université Sorbonne Paris Nord, 99 Avenue Jean Baptiste Clément, 93430 Villetaneuse, France}	
\email{breuvart@lipn.fr, kerjean@lipn.fr, mirwasser@lipn.fr}  

\address{CNRS}	





\begin{abstract}
  Linear Logic refines Classical Logic by taking into account the resources used during the proof and program computation. In the past decades, it has been extended to various frameworks. The most famous are indexed linear logics which can describe the resource management or the complexity analysis of a program. From another perspective, Differential Linear Logic is an extension which allows the linearization of proofs. In this article, we merge these two directions by first defining a differential version of Graded linear logic: this is made by indexing exponential connectives with a monoid of differential operators. We prove that it is equivalent to a graded version of previously defined extension of finitary differential linear logic. We give a denotational model of our logic, based on distribution theory and linear partial differential operators with constant coefficients.
\end{abstract}

\maketitle

\section{Introduction}\label{sec:intro}

Linear logic ($\LL$)~\cite{girardll} provides a framework for studying the use of resources in proofs and programs.
It was developed by enriching the syntax of proofs with new constructions observed in denotational models of $\lambda$-calculus~\cite{girard_normal_1988}. 
In this work, we present a first combination of two a priori distinct versions of linear logic: graded logics~\cite{bll} and differential linear logic~\cite{dill}. Adopting a semantically oriented approach,  we build on  Kerjean's former works \cite{ddill} and focus on a particular model of this logic involving differential operators. This work therefore provides both new models for graded logics and a fresh perspective on differentiation in $\LL$.

\subsection{Historical context}
The fundamental linear decomposition of $\LL$ is the decomposition of the usual non-linear implication $\Rightarrow$ into a linear one $\multimap$ from a set of resources represented by the new connective~$\ oc$:
$ (A \Rightarrow B) \equiv (\oc A \multimap B). $ 
The exponential connective $\oc$ introduces non-linearity in the context of linear proofs and encapsulates the notion of \emph{resource} usage.
 This notion has been refined into \emph{parametrised exponentials}~\cite{bll,ill,effcoeff-grading,GhicaSmith}, where exponential connectives are indexed by annotations that specify different behaviors.
Bounded Linear Logic ($\BLL$)~\cite{bll} was introduced as the first attempt to use typing systems for the purpose of complexity analysis. In essence, indices are used to track the resource usage within proofs. For instance, a proof of $\oc_n A \to B$ will use $n$ times the hypothesis $A$.
More generally, $\BLL$ extends $\LL$ with several exponential connectives which are indexed by \emph{polynomially bounded intervals}. 
Since then, some other indexations of $\LL$ have been developed for many purposes, for example $\IndLL$~\cite{ill}, where the exponential modalities are indexed by some functions, or the graded logic 
$\BSLL$~\cite{corequantitative,GhicaSmith,melliesmonads} where they are indexed by the elements of a semiring $\mathcal{S}$. 
This theoretical development has found applications in the field of programming languages~\cite{haskell,diffprivacy}.

Differential linear logic~\cite{dill} ($\DiLL$) was also invented following a denotational study of a model of $\LL$~\cite{ehrhard_finiteness_2005}. However, it employs of an a priori distinct approach to linearity than graded logics, and is based on the denotational semantics of linear proofs in terms of linear functions.
In the syntax of $\LL$, the dereliction rule stipulates that if a proof is linear, it can then be considered non-linear. 

In order to capture differentiation, $\DiLL$ is based on a codereliction rule, which is the syntactical opposite of the dereliction.
It states that from a non-linear proof (or a non-linear function) one can extract a linear approximation of it, which, in terms of functions, is exactly the differential. Notice that here, the analogy with resources does not work as well. Our focus has shifted from tracking resources  forcing to some proofs to make a linear use of some of their resources.
Models of $\DiLL$ naturally interpret the codereliction by different kinds of differentiation~\cite{ehrhard_kosequence_2002,blute_convenient_2012}.

A first step towards merging the graded and the differential extension of $\LL$ was made by Kerjean in 2018~\cite{ddill}. In this paper, she shifts from the use of differentiation in $\DiLL$ to the use of linear partial differential operators with constant coefficients (LPDOcc). These are pervasive in applied mathematics and characterise many physical phenomena, such as the heat equation or the wave equation.  This is a significant step forward in our aim to make the theory of programming languages and functional analysis closer, with a Curry-Howard perspective. This work aligns logic and applied mathematics more closely and might result in formal systems where types can determine the correctness of an approximation to a differential equation.
To be more precise, the paper defines an extension of $\DiLL$, named $\DDiLL$, in which the exponential connectives $\wn$ and $\oc$ are indexed with a \emph{fixed} LPDOcc $D$. 
In this context, we interpret the formulas $\ocD A$ and $\wnD A$ as, respectively, spaces of functions or distributions which are solutions of the differential equation induced by $D$. 
The dereliction and codereliction rules then represent, respectively, the resolution of a differential equation and the application of a differential operator.
In this work, we will generalise $\DDiLL$ to a logic indexed by a monoid of LPDOcc.

\subsection{Contributions} 
This work considerably generalises and consolidates the extension of $\DiLL$ to differential operators sketched in~\cite{ddill}. It extends $\DDiLL$, enabling proofs to deal with all LPDOcc and combine their actions. It corrects $\DDiLL$ as the denotational interpretation of indexed exponential $\wn_D$ and $\oc_D$ are changed, leaving the interpretation of \emph{inference rules} unchanged but reversing their type in a way that is now compatible with graded logics. Finally, this work consolidates $\DDiLL$ by proving a cut-elimination procedure in the graded case. This is achieved by making use of an algebraic property on the monoid of LPDOcc. 

Let us give a summary of our results. The central rule in $\DiLL$ is the codereliction $\bd$, which symmetric to the dereliction $\der$ already present as an exponential rule in $\LL$. 
 \vspace{\baselineskip}
\begin{center}
  \AXC{$\Gamma, A \vdash \Delta$}
  \RL{$\der$}
  \UIC{$\Gamma, \oc A \vdash \Delta$}
  \DisplayProof
  \qquad
    \qquad
  \AXC{$\Gamma\vdash\Delta,A$}
  \RL{$\bd$}
  \UIC{$\Gamma\vdash\Delta,\oc A$}
  \DisplayProof
\end{center}
 \vspace{\baselineskip}
The intuition behind $\der$ is that it maps a linear function $\ell : A \multimap B$ to the same function but with the type of a non-linear map $\ell :  \oc A \multimap B$. The intuition behind $\bd$ is that it maps a vector $ v : A$ to the operation $ f \mapsto D_0(f)(v)$, mapping a function to its differential at $0$ along the vector $v$. Then, $\bd(v)$ is a distribution, that is, a linear map acting on smooth functions, and typed by $\oc A$. The central observation behind the construction of $\DDiLL$ was that $\bd$ could be assimilated to $ \delta_v \mapsto \delta_v  \circ D_0(\_)(\_)$, the operation mapping a distribution to the same distribution composed with the differential operator differentiating at $0$. Dereliction could in turn be understood as a function mapping $\ell$ to the solution of the equation $D_0(\_) = \ell$, itself in this particular case. In the present work, we take inspiration from graded linear logic to correctly type these rules. While indices are traditionally used to track the usage of resources, here they will track the usage of differential operators, where $D$ is another differential operator. 
 \vspace{\baselineskip}
 \begin{center}
  \AXC{$\Gamma, \oc A \vdash \Delta$}
  \RL{$\der_D$}
  \UIC{$\Gamma, \oc_D A \vdash \Delta$}
  \DisplayProof
  \qquad
    \qquad
  \AXC{$\Gamma\vdash\Delta, \oc A$}
  \RL{$\bd_D$}
  \UIC{$\Gamma\vdash\Delta, \oc_D A$}
  \DisplayProof
\end{center}
 \vspace{\baselineskip}
The indexed codereliction $\bd_D$ composes a distribution with a differential operator while the indexed dereliction $\der_D$ solves a differential equation. It can be interpreted in a denotational model when the equation $D (\_) = g$ always has a solution. This is the case in particular for linear partial differential operators with constant coefficients. They also enjoy another property, as they behave particularly well with respect to the convolution of distributions. If $D_1$ and $D_2$ are linear partial differential operators with constant coefficients, and $\phi$ and $\psi$ are two distributions, then $(D_1 \ast \phi) \circ (D_2 \ast \psi) = (D_1 \circ D_2) \circ (\phi \ast \psi)$. This means that the set of linear partial differential operators with constant coefficients is well-behaved with respect to the co-contraction $\bc$ rule of $\DiLL$, interpreted by convolution.
 \vspace{\baselineskip}
\begin{center}
  \AXC{$\Gamma, \oc A , \oc A \vdash \Delta $}
  \RL{$\contr$}
  \UIC{$ \Gamma, \oc A \vdash\Delta$}
  \DisplayProof
\qquad
  \AXC{$\Gamma_1\vdash\Delta_1, \oc A$}
  \AXC{$\Gamma_2\vdash \Delta_2,  \oc A$}
  \RL{$\bc$}
  \BIC{$\Gamma_1,\Gamma_2\vdash\,\Delta_1, \Delta_2,  \oc A$}
  \DisplayProof
\end{center}
 \vspace{\baselineskip}
In graded linear logic, the contraction rule typically introduces the additive law on the semi-ring of indices. If graded contraction is to be typed as graded co-contraction, this means that in our case, we will consider the set of linear partial differential operators with constant coefficients endowed with composition as an additive law.
 \vspace{\baselineskip}
\begin{center}
  \AXC{$\Gamma, \oc_{D_1} A , \oc_{D_2} A \vdash \Delta $}
  \RL{$\contr$}
  \UIC{$ \Gamma, \oc_{D_1 \circ D_2} A \vdash\Delta$}
  \DisplayProof
\qquad
\AXC{$\Gamma_1\vdash\Delta_1, \oc_{D_1} A$}
\AXC{$\Gamma_2\vdash \Delta_2,  \oc_{D_2} A$}
\RL{$\bc$}
\BIC{$\Gamma_1,\Gamma_2\vdash\,\Delta_1, \Delta_2,  \oc_{D_1\circ D_2} A$}
\DisplayProof
\end{center}
 \vspace{\baselineskip}
 One other crucial property is the fact that each linear partial differential operator with constant coefficients $D$ has a \emph{fundamental solution} $\Phi_D$. This is a distribution such that $D(\Phi_D) = \delta_0$, where $\delta_0$ denotes the the Dirac distribution at 0. This property implies that for each operator $D_1$ and $D_2$, if $f$ is in $\wn_{D_1} A$ then $\Phi_{D_2}\ast f$ is in $\wn_{D_1 \circ D_2}A$. And if $\psi$ is in $\oc_{D_1}A$, then $\psi \circ D_2$ is in $\oc_{D_1 \circ D_2}A$. These results give us a way to interpret the following rules.
 \vspace{\baselineskip}
 \begin{center}
  \AXC{$\Gamma, \oc_{D_1} A \vdash \Delta$}
  \RL{$\der_I$}
  \UIC{$\Gamma, \oc_{D_1\circ D_2} A \vdash \Delta$}
  \DisplayProof
  \qquad
    \qquad
  \AXC{$\Gamma\vdash\Delta, \oc_{D_1} A$}
  \RL{$\bd_I$}
  \UIC{$\Gamma\vdash\Delta, \oc_{D_1\circ D_2} A$}
  \DisplayProof
\end{center}
 \vspace{\baselineskip}
 These rules are generalizations of $\der_D$ and $\bd_D$, with $D_1 = id$.
\par In our work, we begin by constructing a differential version of $\BSLL$ without promotion, by giving symmetric costructural rules to every structural rule. We prove a cut-elimination theorem, which depends on an algebraic property of the semiring used to index the rules, and define a procedure which mimics partly that of $\DiLL$ or $\BSLL$. However, some parts of the cut elimination procedure are completely different, since we need to deal with new interactions between inference rules. These are the cases involving $\der_I$ and $\bd_I$, which have not been studied in a differential setting. We make explicit a relational model for this syntax in Section~\ref{ssec:semantics}. These developments result in a syntax that we call $\DBSLL$, presented below:
 \vspace{\baselineskip}
  \begin{center}
 \vbox{
          \begin{center}
              \AXC{$\vdash \Gamma$}
              \RL{$\w$}
              \UIC{$\vdash \Gamma,\wn_0 A$}
              \DisplayProof
              \qquad
              \AXC{$\vdash \Gamma,\wn_x A,\wn_y A$}
              \RL{$\contr$}
              \UIC{$\vdash \Gamma,\wn_{x + y} A$}
              \DisplayProof
              \qquad
              \AXC{$\vdash \Gamma,\wn_x A$}
              \AXC{$x \leq y$}
              \RL{$\der_I$}
              \BIC{$\vdash \Gamma,\wn_y A$}
              \DisplayProof
              \qquad
              \AXC{$\vdash \Gamma, A$}
              \RL{$\der$}
              \UIC{$\vdash \Gamma, \wn A$}
              \DisplayProof
          \end{center}
          \begin{center}
              \AXC{}
              \RL{$\bw$}
              \UIC{$\vdash \oc_0 A$}
              \DisplayProof
              \qquad
              \AXC{$\vdash \Gamma, \oc_x A$}
              \AXC{$\vdash \Delta, \oc_y A$}
              \RL{$\bc$}
              \BIC{$\vdash \Gamma,\Delta, \oc_{x + y} A$}
              \DisplayProof
              \qquad
              \AXC{$\vdash \Gamma,\oc_x A$}
              \AXC{$x \leq y$}
              \RL{$\bdi$}
              \BIC{$\vdash \Gamma,\oc_y A$}
              \DisplayProof
              \qquad
              \AXC{$\vdash \Gamma, A$}
              \RL{$\bd$}
              \UIC{$\vdash \Gamma, \oc A$}
              \DisplayProof
          \end{center}
      }
  \end{center}
 \vspace{\baselineskip}
Then we provide a graded version of $\DiLL$ without promotion, that we call $\IDiLL$:
 \vspace{\baselineskip}
    \begin{center}
        \AXC{$\vdash \Gamma$}
        \RL{$\w_I$}
        \UIC{$\vdash \Gamma, \wnD A$}
        \DisplayProof
        \qquad\qquad
        \AXC{$\vdash \Gamma,\wn_{D_1} A,\wn_{D_2} A$}
        \RL{$\contr$}
        \UIC{$\vdash \Gamma,\wn_{D_1 \circ D_2} A$}
        \DisplayProof
        \qquad\qquad
        \AXC{$\vdash \Gamma,\wn_{D_1} A$}
        \RL{$\der_I$}
        \UIC{$\vdash \Gamma,\wn_{D_1 \circ D_2} A$}
        \DisplayProof
    \end{center}
    \begin{center}
        \AXC{}
        \RL{$\bwi$}
        \UIC{$\vdash \ocD A$}
        \DisplayProof
        \qquad\qquad
        \AXC{$\vdash \Gamma, \oc_{D_1} A$}
        \AXC{$\vdash \Delta, \oc_{D_2} A$}
        \RL{$\bc$}
        \BIC{$\vdash \Gamma, \Delta, \oc_{D_1 \circ D_2} A$}
        \DisplayProof
        \qquad\qquad
        \AXC{$\vdash \Gamma,\oc_{D_1} A$}
        \RL{$\bdi$}
        \UIC{$\vdash \Gamma,\oc_{D_1 \circ D_2} A$}
        \DisplayProof
    \end{center}
 \vspace{\baselineskip}
$\IDiLL$ generalizes and corrects $\DDiLL$. We show that it consists of admissible rules of $\DBSLL$ and that rules of $\DBSLL$, except $\der$ and $\bd$, are admissible in $\IDiLL$. We connect back to our first intuitions by providing a denotational model of $\IDiLL$ based on distributions and smooth functions, where the set of indices is the monoid of linear partial differential operators with constant coefficients endowed with composition as an additive law.  Finally, we discuss the addition of an indexed promotion to differential $\BSLL$ and possible definitions for a semiring of differential operators.

\subsection{Outline}
We begin this paper in Section~\ref{sec:LL} by reviewing Differential Linear Logic and its semantics in terms of functions and distributions. We also recall the definition of $\BSLL$. Section~\ref{sec:dbsll} focuses on the definition of an extension of $\BSLL$, in which we construct a finitary differential version for it and prove a cut-elimination theorem and exhibit a relational model for this syntax. Then, Section~\ref{sec:idill} generalises $\DDiLL$ into a framework with several indexes and shows that it corresponds to our finitary differential $\BSLL$ indexed by a monoid of LPDOcc. It formally constructs a denotational model for it based on spaces of functions and distributions. This gives, in particular, a new semantics for $\BSLL$. Finally, the conclusion discusses related works as well as the addition of an indexed promotion to differential $\BSLL$, and potential definitions for a semiring of differential operators.
\newpage
\tableofcontents
 
  \section{Linear logic and its extensions}\label{sec:LL}
  Linear Logic refines Intuitionistic Logic by introducing a notion of linear proofs. Formulas are defined according to the following grammar (omitting neutral elements which do not play a role here):
  \[ A,B := A \otimes B \mid A \parr B \mid A \with B \mid A \oplus B \mid \wn A \mid \oc A \mid \cdots.\]
  
  The linear negation $\nb{(\_)}$ of a formula is \emph{defined} on the syntax and is involutive, with in particular $\nb{(\oc A)} := \wn \nb{(A)}$.
  The connector $\oc$ enjoys structural rules, respectively called weakening $\w$, contraction $\contr$, dereliction $\der$ and promotion $\p$:
  \begin{center}
  \AXC{$\Gamma \vdash \Delta$}
  \RL{$\w$}
  \UIC{$\Gamma, \oc A \vdash \Delta $}
  \DisplayProof
  \qquad
  \AXC{$\Gamma, \oc A , \oc A \vdash \Delta $}
  \RL{$\contr$}
  \UIC{$ \Gamma, \oc A \vdash\Delta$}
  \DisplayProof
  \qquad
  \AXC{$\Gamma, A \vdash \Delta$}
  \RL{$\der$}
  \UIC{$\Gamma, \oc A \vdash \Delta$}
  \DisplayProof
  \qquad
  \AXC{$\oc \Gamma  \vdash \wn\Delta, A$}
  \RL{$\p$}
  \UIC{$\oc \Gamma  \vdash \wn \Delta, \oc A $}
  \DisplayProof 
  \end{center}
  These structural rules can be understood in terms of resources: a proof of $A \vdash B$ uses exactly once the hypothesis $A$ while a proof of $\oc A \vdash B$ might use $A$ an arbitrary number of times. Notice that the dereliction allows one to forget the linearity of a proof by making it non-linear. Weakening means that the use of $\oc A$ can mean the use of no resources of type $A$ at all, while the contraction rule represents the glueing of resources: using twice an arbitrary amount of data of type $A$ corresponds to using once an arbitrary amount of data of type $A$.  
  \begin{rem}
      The exponential rules for $\LL$ are recalled here in a two-sided flavour, making their denotational interpretation in Section \ref{subsec:distrib} easier. However, we always consider a \emph{classical} sequent calculus, and the new $\DBSLL$ will be introduced later in a one-sided flavour to lighten the formalism.
  \end{rem}
  
These resource intuitions are challenged by Differential Linear Logic.  Differentiation is introduced through a new ``codereliction'' rule $\bd$, which is symmetrical to $\der$ and allows to linearise a non-linear proof~\cite{dill}. To express the cut-elimination with the promotion rule, other costructural rules are needed, which find a natural interpretation in terms of differential calculus. 

Note that the first version of $\DiLL$, called $\DiLL_0$, does not feature the promotion rule, which was introduced in later versions~\cite{pagani_cut-elimination_2009}. The exponential rules of $\DiLL_0$ are then $\w,  \contr, \der$ with the following coweakening $\bw$, cocontraction $\bc$ and codereliction~$\bd$ rules, given here in a one-sided flavour.

  \begin{center}
  \AXC{}
  \RL{$\bw$}
  \UIC{$\vdash \oc A$}
  \DisplayProof
  \qquad
  \AXC{$\vdash\Gamma, \oc A$}
  \AXC{$\vdash \Delta,  \oc A$}
  \RL{$\bc$}
  \BIC{$\vdash\Gamma, \Delta,  \oc A$}
  \DisplayProof
  \qquad
  \AXC{$\vdash\Gamma,A$}
  \RL{$\bd$}
  \UIC{$\vdash\Gamma, \oc A$}
  \DisplayProof
  \end{center}
  
  In the rest of the paper, as a support for the semantical interpretation of $\DiLL$, we denote by~$D_a f$ the differential\footnote{Differentiation at higher-order might be tricky, and to be precise, we are here using the Gateaux differential of a function. Most of the time however, we will only consider smooth functions $f : \R^n \to R$ where these distinctions do not matter.} of a function $f$ at a point $a$, that is:
  \begin{center}
  $\displaystyle D_a f : v \mapsto \lim_{h \to 0} \frac{f(a +hv) - f(a)}{h}$
  \end{center}
  \subsection{Distribution theory as a semantical interpretation of DiLL}\label{subsec:distrib}
  $\DiLL$ originates from vectorial refinements of models of $\LL$~\cite{ehrhard_finiteness_2005}, which mainly keep their discrete structure. 

Consider the interpretation $f : A \Rightarrow B$ of a proof of $ \oc A \vdash B$. Then by cut-elimination, the codereliction creates a proof $\bd; f : A \multimap B$. Other exponential rules also have an easy functional interpretation by pre-composition:
\begin{itemize}
\item $\bw;f : 1 \multimap B$ maps $1$ to $f(0)$,
\item $\bc;f : A \times A \Rightarrow B$ maps $(x,y)$ to $f(x +y)$,
\item for a function $g :A \times A \Rightarrow B$, $\contr;g$ maps $x :A$ to $g(x,x)$
\item for a pointed object $b : 1 \Rightarrow B$, $\w;b$ maps any $x :A$ to $b:B$,
\item dereliction maps a linear function $\ell: A \multimap B$ to the same function with a non-linear type : $\ell : A \Rightarrow B$.
\end{itemize} 
 
\paragraph{Smooth models of Classical $\DiLL$}
The above interpretations all have an intuitionistic flavour: they are valid up to composition with a non-linear function, and correspond to bilateral rules for $\w$, $\contr$ and $\der$ as presented above. In a model interpreting the involutive linear duality of Classical Linear Logic, exponential rules have a stand-alone interpretation, and distribution theory provides particularly relevant intuitions.
Exponential connectives and rules of $\DiLL$ can be understood as operations on smooth functions or distributions~\cite{schwartz_theorie_1966}.
When smooth functions rightfully interpret proofs of non-linear sequents $\oc A \vdash B$, distribution spaces give an interpretation for the exponential formula $\oc A$. 

 In the whole paper, $(\_)':= \Lin(\_,\R)$ is the dual of a (topological) vector space, and \emph{distributions with compact support} are by definition linear continuous maps on the space of smooth scalar maps, that is elements of $\left(\Ci(\R^n,\R)\right)'$. Distributions are sometimes described as  ``generalised functions''. Indeed, any smooth function with compact support $g \in \Ci_c(\R^n,\R)$ acts as a distribution $T_g \in \left(\Ci(\R^n,\R)\right)'$ with compact support, through integration:
  $T_g : f \mapsto \int g f $. It is indeed a distribution, as it acts linearly (and continuously) on smooth functions.
   Let us recall the notation for the Dirac operator, which is a distribution with compact support and is used a lot in the rest of the paper: 
  $\delta: v \in \R^n  \mapsto (f \mapsto f(v)) \in \left(\Ci(\R^n,\R)\right)'.$
  
  Recently, Kerjean~\cite{ddill} gave an interpretation of the connective $\wn$ by a space of smooth scalar functions, while $\oc$ is interpreted as the space of linear maps acting on those functions, that is a space of \emph{distributions with compact support\footnote{Here we are using smooth function without any condition on their support, and hence the distributions are said to have compact support. This is due to the necessity of interpreting the dereliction rule $\der : E' \to \wn (E')$, embedding linear functions (which do not have a compact support) into smooth functions.}}:
   \vspace{\baselineskip}
  \begin{center}
  $ \sem{? A} := \Ci(\sem{A}', \R) \qquad\qquad \sem{\oc A} := \Ci(\sem{A},\R)' $. 
  \end{center}
   \vspace{\baselineskip} 
While the language of distributions applies to all models of $\DiLL$, as noticed by Ehrhard on K\"othe spaces~\cite{ehrhard_kosequence_2002}, the focus of this model was to find smooth infinite-dimensional models of $\DiLL$, making the interpretations of maps and formulas objects of distributions theory as studied in the literature.  Another focus on top of that was to construct a model of \emph{classical} $\DiLL$, in which objects are invariant under double negation. 
 We will not dive into the details of these definitions, see~\cite{Jarchow} for a fine-grained exposition, but the reader should keep in mind that the formulas are always interpreted as \emph{reflexive topological vector spaces}. The model of functions and distribution is thus a model of \emph{classical} $\DiLL$, in which $\sem{(\_)^{\bot}}:=(\_)'$.

A locally convex and separated vector space is said to be \emph{reflexive} when it is linearly homeomorphic to its double dual: \[E \simeq E''.\]
This means two things. On the one hand, $E$ and $E''$ are the same vector spaces, meaning any linear form $\phi \in \Lin (E',\R)$ corresponds in fact to a point $x \in E$:
\[ \phi = (\ell \in E' \mapsto \ell(x)) .\]
On the other hand, $E$ and $E''$ must correspond topologically. This is an intricate issue. Traditionally, $E'$ is endowed with the topology of uniform convergence on bounded subsets of $E$, and likewise $E''$ is endowed with the topology of uniform convergence on bounded subsets of $E'$. The fact that this topology corresponds to the original one on $E$ is called a barrelledness condition, saying that absorbing sets of $E$ are in fact neighbourhoods of $0$. This idea is hard to grasp as it holds trivially on any finite-dimensional space and on any Hilbert space. One should just know that by default, Banach spaces are not reflexive. Moreover, the subclass of reflexive topological vector spaces does not enjoy good stability properties: they are not stable by tensor product, making them unqualified to be a model of Linear Logic.

\paragraph{Metrizable spaces and their dual}

In fact, most spaces of functions are not normed but \emph{metrizable}: as a consequence, we will not make use of Banach spaces but of complete metrizable spaces, also called \emph{Fréchet} spaces. Duals of Fréchet spaces are not metrizable: they are called DF-spaces, and spaces of distributions are particular examples of DF-spaces. An interesting condition to add is the \emph{nuclearity} of the space~\cite{grothendieck_produits_1966}, corresponding to a condition on topological tensor products. Indeed, any nuclear Fréchet space or nuclear DF-space is a reflexive space! Spaces of functions as $\Ci(\R^n,\R)$ and spaces of distributions as $\Ci(\R^n,\R)'$ are examples of nuclear Fréchet spaces and nuclear DF-spaces, respectively. 
We refer the interested reader to the literature~\cite{ddill,Jarchow}.
 
We thus interpret formulas of $\DiLL_0$ by spaces which are either complete Nuclear DF-spaces or Nuclear Fréchet spaces. Every construction of $\MALL$ applies to both classes of formulas: $\otimes$ and $\parr$ are both interpreted by the completion of the projective topological tensor product, and $\with$ and $\oplus$ are both interpreted by the biproduct on topological vector spaces. The connectives $\otimes$ and $\parr$ are both interpreted by the same connective as a result of nuclearity. 

\paragraph{Vector spaces of finite dimension} Our model is only a model of $\DiLLo$ where promotion is not interpreted. Even more, the exponential connective $\oc$ and $\wn$ apply only to finite-dimensional vector spaces. The reason is that there is no easy way to make $\Ci(E,\R)$ a Fréchet space when $E$ is a DF-space of infinite dimension. Hence, while we do have an interpretation of $\wn \R^n$ and $\oc \R^n$, we do not have a general interpretation for $\wn F$ and $\oc E$ for $E$ and $F$ DF and Fréchet nuclear spaces, respectively. Another reason for that is that we will be interested in applying partial differential operators to these spaces. These are only associated with a canonical base for the space, and traditionally defined on finite-dimensional vector spaces. Had we not had this restriction, we would have had to interpret to restrict our calculus to a version of polarised Linear Logic~\cite{laurent_phd}. In this case, we would interpret positive formulas (left stable by $\otimes$ $\oc$) as nuclear DF-spaces, while Negative formulas (left stable by $\parr$ $\wn$) are interpreted by nuclear Fréchet spaces. A higher-order version of our calculus is work in progress, based on higher-order functions defined on Fréchet and DF-spaces~\cite{Gannoun_dualite}.

\paragraph{Interpreting $\DiLLo$}
 We now describe the interpretation of every exponential rule of $\DiLL$ in terms of functions and distributions, through the following natural transformations. In the whole paper, $E$ and $F$ denote locally convex topological vector spaces, which will represent the interpretation $\sem{A}$ and $\sem{B}$ of formulas $A,B$ of $\DiLL$. Beware that we do not interpret higher-order functions, so that $\wn E$ and $\oc E$ will always be interpreted by $\Ci (\R^n, \R)$ and $\Ci (\R^n, \R)'$ for some $n$. The $\MALL$ connectives are easily interpreted on any locally convex topological vector space: $\otimes$ and $\parr$ are both interpreted by the completed projective tensor product (they are isomorphic thanks to the nuclearity condition), and $\oplus$ and $\with$ are both interpreted by the binary product (which is isomorphic to the biproduct in topological vector spaces).
 
For the sake of readability, we will denote the natural transformations (\textit{e.g.} $\der,\bd$) by the same label as the deriving rule they interpret, and likewise for connectors (\textit{e.g.} $\wn, \otimes, \oc$) and their associated functors. We use the fact that, as we are working with reflexive spaces, one can, without loss of generality, define a linear map $\ell : E \to F$ by its dual $\ell' : F' \to E'$ : taking the dual of $\ell'$ will give us $\ell$. 
  \begin{itemize}
  \item The weakening $\w : \R \to \wn E$ maps $1 \in \R$ to the constant function at $1$, while the coweakening $\bw : \R \to \oc E$ maps $1 \in \R$ to Dirac distribution at $0$: $\delta_0 : f \mapsto f(0)$.
  \item The dereliction $\der : E' \to \wn (E')$ maps a linear function to itself (but by typing it as a non-linear map, it forgets its linearity). The codereliction $\bd : E \to \oc E$ maps a vector $v$ to the distribution mapping a function to its differential at $0$ according to the vector $v$ : \begin{center}
  $\der : \ell \mapsto \ell \qquad \bd : v \mapsto \left( D_0 (\_)(v) : f \mapsto D_0 (f)(v) \right).$
  \end{center}
  \item The contraction $\contr : \wn E \otimes \wn E \to \wn E$ maps two scalar functions $f,g$ to their pointwise multiplication  $f \cdot g : x \mapsto f(x) \cdot g(x) $.
  \item  The cocontraction $\bc : \oc E  \otimes \oc E \to \oc E$ maps two distributions $\psi$ and $\phi$ to their \emph{convolution product} $\psi \ast \phi : f \mapsto \psi \left( x \mapsto \phi (y \mapsto f(x+y)) \right)$, which is a commutative operation over distributions.
 \item The promotion rule also has an easy interpretation in terms of distributions. This rule is interpreted thanks to the digging operator $\mu : \oc E \to \oc \oc E; \delta_x \mapsto \delta_{\delta_x}$. However, notice that the model we are describing is only a model of $\DiLLo$, as the object $\oc \oc E$ is not trivial to construct. As said before, only $\oc \R^n$ has an easy interpretation. 
  \end{itemize}
  These interpretations are natural while trying to give a semantics of a model with smooth functions and distributions. An important point is that sequent interpretation will often make a transparent scalar product appear. Indeed, the interpretation of $\otimes$ and $\parr$ on $\R = \sem{\bot}= \sem{1}$ is nothing but a plain product, denoted by a dot: $(a\otimes b \mapsto (a \cdot b))$.

  The fact that the contraction is interpreted by the scalar product is a direct consequence of Schwartz's kernel theorem. Indeed, in a differential category, the interpretations for $\contr$ and $\bd$ are direct consequences of the existence of a biproduct $\oplus \simeq \times$ and of the strong monoidality of $\oc $ : $\oc (A \times B) \simeq \oc A  \otimes B$ \cite{fioredill}. In a model where $\oc A$ is interpreted as a space of distributions, this last axiom is exactly Schwartz's kernel theorem, whose proof consists of showing the surjectivity of the map :
  $ (f \otimes g ) \in \wn A \parr \wn B \mapsto (f \cdot g) \in \wn ( A \times B)$. This is explained in more detail in~\cite{ddill}.
Weakening and co-weakening are respectively the neutral elements for the contraction and cocontraction laws.
  
   \subsection{Differential operators as an extension of $\DiLLo$}\label{ssec:ddill}

  A first advance in merging the graded and the differential extensions of $\LL$ was made by Kerjean in 2018~\cite{ddill}. 
  In this paper, she defines an extension of $\DiLL$ named $\DDiLL$. This logic is based on a \emph{fixed single} linear partial differential operator $D$, which appears as a single index in exponential connectives $\ocD$ and $\wnD$.
   
   The abstract interpretation of $\wn$ and $\oc$ as spaces of functions and distributions, respectively, allows us to generalise them to spaces of solutions and parameters of differential equations. To do so, we generalise the action of $D_0(\_)$ in the interpretation of $\bd$ to another differential operator $D$. 
   The interpretation of $\bd$ then corresponds to the application of a differential operator while the interpretation of $\der$ corresponds to the resolution of a differential equation (which is $\ell$ itself when the equation is $D_0(\_)=\ell$, but this is specifically due to the idempotency of $D_0$).
  
   In $\DDiLL$, the exponential connectives can be indexed by a fixed differential operator. It admits a denotational semantics for a specific class of those whose resolution is straightforward, thanks to the existence of a fundamental solution.
   A \emph{Linear Partial Differential Operator with constant coefficients  (LPDOcc)} acts linearly on functions $f \in \Ci(\R^n,\R)$, and by duality acts also on distributions. In what follows, each $a_{\alpha}$ will be an element of $\R$. By definition, only a finite number of such $a_{\alpha}$ are non-zero.
   
   \begin{equation}
   \label{eq:Dfundistr}
   D  : f \mapsto \left( z \mapsto \sum_{\alpha \in \N^n} a_{\alpha} \frac{\partial^{|\alpha|} f}{\partial x^{\alpha} }(z) \right) \qquad
   \hat{D}  : f \mapsto \left( z \mapsto \sum_{\alpha \in \N^n} (-1)^{|\alpha|} a_{\alpha} \frac{\partial^{|\alpha|} f}{\partial x^{\alpha} }(z)\right)
   \end{equation}
  
   \begin{rem} 
      The coefficients $(-1)^{|\alpha|}$ in equation \ref{eq:Dfundistr} originate from the intuition of distributions as generalised functions.
      With this intuition, it is natural to want that for each smooth function $f$, $D(T_f) = T_{D(f)}$, where $T_f$ stands for the distribution generalising the function $f$.
     When computing $T_{D(f)}$ on a function $g$ with integration by parts one shows that:
     \\ ${T_{D(f)}(g) = \int D(f) g = \int f (\hat{D}(g)) = (T_{f} \circ \hat{D})(g)}$, hence the definition.
     \end{rem}
  
   We make $D$ act on distributions through the following equation:
  
  \begin{equation}\label{eq:lpdodistrib}
      {D}(\phi) := \left( \phi \circ \hat{D} : f \mapsto \phi(\hat{D}(f)) \right) \in \Ci(\R^n,\R)'.
  \end{equation}
Notice the involutivity of $D\mapsto \hat{D}$, allowing us to state $\hat{D}(\phi) = \phi \circ D$.
   \begin{defi}
   Let $D$ be a LPDOcc. 
   A \emph{fundamental solution}
   of $D$ is a distribution $\Phi_D \in \C^{\infty}(\R^n,\R)'$ such that
   $ D(\Phi_D) = \delta_0. $
   \end{defi}
  
  \begin{prop}[Hormander, 1963]\label{prop:Ddistrconv}
   LPDOcc \emph{distribute over convolution}, meaning that $D(\phi \ast \psi) = D(\phi) \ast \psi = \phi \ast D(\psi)$ for any $\phi, \psi \in \oc E$. 
  \end{prop}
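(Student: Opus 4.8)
The plan is to test the claimed identity against an arbitrary smooth function and reduce everything to a single intertwining property of the substitution $(x,y)\mapsto f(x+y)$ that uses, crucially, the constant coefficients of $D$. Throughout I write $\phi_x$ and $\psi_y$ to indicate that the distribution $\phi$ (resp.\ $\psi$) acts on the first (resp.\ second) slot of a jointly smooth function of $(x,y)$. Recall that $D$ acts on a distribution $\chi$ by $D(\chi)=\chi\circ\hat D$ (equation~\eqref{eq:lpdodistrib}), and that $(\phi\ast\psi)(f)=\phi_x\bigl(\psi_y(f(x+y))\bigr)$. Unfolding the left-hand side on a test function $f$ thus gives
\[ D(\phi\ast\psi)(f)=(\phi\ast\psi)(\hat D f)=\phi_x\Bigl(\psi_y\bigl((\hat D f)(x+y)\bigr)\Bigr). \]

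The key step is the observation that, because the coefficients $a_\alpha$ are constant, differentiating $f(x+y)$ in the $x$-slot or the $y$-slot both reproduce the derivative of $f$ evaluated at $x+y$: $\partial_{x_i}[f(x+y)]=(\partial_i f)(x+y)=\partial_{y_i}[f(x+y)]$. Iterating over multi-indices yields the intertwining identity $(\hat D f)(x+y)=\hat D_y[f(x+y)]=\hat D_x[f(x+y)]$, where $\hat D_x$ and $\hat D_y$ denote $\hat D$ applied in the $x$- and $y$-slot respectively. This is precisely the place where the hypothesis of constant coefficients is indispensable: if the $a_\alpha$ were functions of $z$, the substitution $z=x+y$ would not allow the operator to be transferred to a single variable.

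With this in hand I would finish the first equality directly. Using the $\hat D_y$ form and the definition $D(\psi)=\psi\circ\hat D$, the inner pairing becomes $\psi_y\bigl(\hat D_y[f(x+y)]\bigr)=D(\psi)\bigl(y\mapsto f(x+y)\bigr)$, so that
\[ D(\phi\ast\psi)(f)=\phi_x\Bigl(D(\psi)_y\bigl(f(x+y)\bigr)\Bigr)=(\phi\ast D(\psi))(f). \]
As $f$ is arbitrary, $D(\phi\ast\psi)=\phi\ast D(\psi)$. The remaining equality follows without further analysis from the commutativity of convolution established above: $D(\phi\ast\psi)=D(\psi\ast\phi)=\psi\ast D(\phi)=D(\phi)\ast\psi$ (alternatively, one transfers $\hat D$ to the $x$-slot via $\hat D_x$, which requires differentiating under the $\psi_y$-pairing).

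I expect the main obstacle to be analytic rather than conceptual: justifying the manipulations of the iterated pairings. Concretely one must know that $(x,y)\mapsto f(x+y)$ is jointly smooth, that the iterated pairing $\phi_x\psi_y$ is well defined and insensitive to the order of evaluation (a Fubini theorem for the tensor product of two compactly supported distributions), and that $y\mapsto f(x+y)$ is a legitimate smooth argument for $\psi$ so that $D(\psi)$ may be applied slotwise. All of these are standard properties of distributions with compact support acting on $\Ci(\R^n,\R)$: compact support makes every pairing finite and licenses differentiation under the pairing. I would invoke these facts rather than reprove them, keeping the proof centred on the constant-coefficient intertwining identity, which is the real content of the statement.
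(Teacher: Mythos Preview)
Your argument is correct and is exactly the standard proof of this classical fact. The paper itself does not prove the proposition: it attributes the result to H\"ormander and merely remarks that it ``is easy to check,'' so there is no detailed argument to compare against. Your unfolding via $D(\chi)=\chi\circ\hat D$, the constant-coefficient intertwining $(\hat D f)(x+y)=\hat D_y[f(x+y)]=\hat D_x[f(x+y)]$, and the appeal to commutativity of convolution for the second equality are precisely the computation one would expect the authors had in mind, and your identification of constant coefficients as the crucial hypothesis is on point.
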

  The previous proposition is easy to check and means that knowing the fundamental solution of $D$ gives access to the solution $\psi \ast \Phi_D$ of the equation $D(\_)=\psi$. It is also the reason why indexation with several differential operators is possible. Luckily for us, LPDOcc are particularly well-behaved and always have a fundamental solution. The proof of the following well-known theorem can, for example, be found in~\cite[3.1.1]{hormander1963}.
   \begin{thm}[name = Malgrange-Ehrenpreis]\label{theorem:fondsol}
   Every linear partial differential operator with constant coefficients admits exactly one fundamental solution.
   \end{thm}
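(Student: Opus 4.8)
For existence---which is the genuine Malgrange--Ehrenpreis content---I would use Hörmander's functional-analytic method, treating uniqueness separately since any two fundamental solutions differ by a solution of the homogeneous equation $D(u)=0$. The first move is to rephrase the equation by duality: from the action of $D$ on distributions recalled above, a distribution $\Phi_D$ solves $D(\Phi_D)=\delta_0$ precisely when $\Phi_D(\hat{D}(f)) = f(0)$ for every test function $f \in \Ci_c(\R^n,\R)$. This invites defining $\Phi_D$ first on the range $\{\hat{D}(f) : f \in \Ci_c(\R^n,\R)\}$ by the prescription $\hat{D}(f)\mapsto f(0)$, and then extending it to all test functions by Hahn--Banach. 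Well-definedness reduces to the injectivity of $\hat{D}$ on compactly supported functions: by Paley--Wiener the Fourier--Laplace transform $\widehat{f}$ is entire, $\hat{D}$ acts as multiplication by a nonzero polynomial $P$ (its symbol), and $P\cdot\widehat{f}\equiv 0$ forces $\widehat{f}\equiv 0$, hence $f=0$ and $f(0)=0$.

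The crux---and the step I expect to be the main obstacle---is the continuity estimate that makes the Hahn--Banach extension possible: for each compact $K$ one needs a bound $|f(0)| \le C_K\, p(\hat{D}(f))$ in a suitable seminorm $p$, valid for all $f$ supported in $K$. On the Fourier side this amounts to recovering $f(0) = (2\pi)^{-n}\int \widehat{f}(\xi)\,d\xi$ from $P(\xi)\widehat{f}(\xi)$, that is, to \emph{dividing by the symbol} $P$, and the obstruction is exactly the real zeros of $P$. I would resolve it as Hörmander does, by shifting the contour of integration from $\R^n$ into $\C^n$: using that $\widehat{f}$ is entire, one replaces the integral over $\R^n$ by an integral over a shifted contour on which $|P|$ is large, the key analytic input being that $\sup_{|\eta|\le 1}|P(\xi+i\eta)|$ is bounded below by a positive constant independent of $\xi\in\R^n$, since a nonzero polynomial cannot be uniformly small on the unit imaginary ball. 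Averaging over the shift absorbs the remaining degeneracies and yields the desired bound. Hahn--Banach then produces a continuous functional $\Phi_D \in \mathcal{D}'(\R^n)$, and unwinding the duality shows $D(\Phi_D)=\delta_0$, establishing existence.

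For uniqueness I would be careful about the ambient class, because in $\mathcal{D}'(\R^n)$ uniqueness simply fails: adding any $u$ with $D(u)=0$ produces another fundamental solution. The ``exactly one'' is therefore to be read inside the class of \emph{compactly supported} distributions that the paper works with, and there uniqueness does hold: if $\Phi_D$ and $\Phi_D'$ are both compactly supported with $D(\Phi_D)=D(\Phi_D')=\delta_0$, then $u=\Phi_D-\Phi_D'$ is compactly supported with $D(u)=0$, its Fourier--Laplace transform is entire by Paley--Wiener, the relation $Q\cdot\widehat{u}\equiv 0$ with $Q$ the nonzero symbol of $D$ forces $\widehat{u}\equiv 0$, and hence $u=0$. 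The genuinely delicate point in reconciling the theorem with this framework is then not uniqueness but the \emph{existence of a compactly supported} fundamental solution, which has to be examined for the specific operators of interest rather than read off from the general construction above.
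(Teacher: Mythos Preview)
The paper does not actually prove this theorem: it simply cites H\"ormander~\cite[3.1.1]{hormander1963} and moves on. Your sketch is precisely the H\"ormander functional-analytic argument behind that citation, so as far as the existence part goes you are aligned with what the paper defers to.

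Your critical remarks about the ``exactly one'' are on target and worth flagging: the classical Malgrange--Ehrenpreis theorem asserts only \emph{existence} in $\mathcal{D}'(\R^n)$, and uniqueness genuinely fails there (any homogeneous solution can be added). Your Paley--Wiener argument does give uniqueness within the class of compactly supported distributions $\Ci(\R^n,\R)'$ that the paper works in, but---as you correctly observe---this shifts the burden to \emph{existence} of a compactly supported fundamental solution, which the general H\"ormander construction does not provide and which in fact fails for many standard operators (the Heaviside function for $\partial_x$, or $-1/(4\pi|x|)$ for the Laplacian, are not compactly supported). So the theorem as stated in the paper is not literally correct; the paper is tacitly relying on the classical $\mathcal{D}'$ statement while writing $\Phi_D \in \Ci(\R^n,\R)'$, and your proposal is right to isolate this as the genuinely delicate point rather than paper over it.
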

  Using this result, $\DDiLL$ gives new definitions for $\der$ and $\bd$, depending on a LPDOcc $D$:
  \begin{center}
  $
  \der_D : f \mapsto \Phi_D \ast f  \qquad \bd_{D}  : \phi \mapsto \phi \circ D .
  $
  \end{center}
  These new definitions came from the following ideas. Through the involutory duality, each $v \in E$ corresponds to a unique $\delta_v \in E ''\simeq E$, and $\bd_D$ is then interpreted as $ \phi \in  E''  \mapsto  \phi \circ D$. 
  
  While the interpretation of exponential \emph{rules} will not change, we will change the interpretation of exponential \emph{connectives} described by Kerjean for proof-theoretical reasons. We recall them now for comparison but will define new ones in Section~\ref{sec:idill}. $\DDiLL$ considered that $E'' = (D_0(?(E'),\R))'$ and made an analogy by replacing $D_0$ with~$D$, defining $?_{D} E := D(\Ci(E',\R))$. This gave types $\der_D : \wn_{D} E' \to \wn E'$ and $\bd_D : \oc_{D} E \to  \oc E $. Note that these definitions are sweeping reflexivity under the rug, and that no proof-theoretical constructions are given to account for the isomorphism $A \simeq A''$.
  
  The reader should note that these definitions only work for finite-dimensional vector spaces: one is able to apply a LPDOcc to a smooth function from $\R^n$ to $\R$ using partial differentiation on each dimension, but this is completely different if the function has an infinite-dimensional domain. The exponential connectives indexed by a LPDOcc therefore only apply to \emph{finitary} formulas: that is, the formulas with no exponentials.

  \subsection{Indexed linear logics: resources, effects and coeffects}
  
    Since Girard's original $\BLL$~\cite{bll}, several systems have implemented indexed exponentials to keep track of resource usage~\cite{bll-revisited,bll-cat}. More recently, several authors~\cite{GhicaSmith,effcoeff-grading,corequantitative} have defined a modular (but a bit less expressive) version $\BSLL$ where the exponentials are indexed (more specifically ``graded'', as in graded algebras) by elements of a given semiring $\mathcal{S}$.

  \begin{defi}\label{def:semiring}
      A \emph{semiring} $(\Sr, +, 0, \times, 1)$
      is given by a set $\mathcal{S}$ with two associative binary operations on $\mathcal{S}$: a sum $+$ which is commutative and has a neutral element $0\in\mathcal{S}$ and a product $\times$ which is distributive over the sum and has a neutral element $1\in\mathcal{S}$.
      \\ Such a semiring is said to be \emph{commutative} when the product is commutative.
      \\ An \emph{ordered semiring} is a semiring endowed with a partial order $\leq$ such that the sum and the product are monotonic.
  \end{defi}
  
  This type of indexation, named \emph{grading}, has been used in particular to study effects and coeffects, as well as resources~\cite{corequantitative,splitting,effcoeff-grading}.
  The main feature is to use this grading in a type system where some types are indexed by elements of the semiring. 
  This is exactly what is done in the logic $\BSLL$, where $\Sr$ is an ordered semiring.
  The exponential rules of $\BSLL$ are adapted from those of $\LL$, and agree with the intuitions that the index $x$ in $\oc_x A$ is a witness for the usage of resources of type $A$ by the proof/program.
   \vspace{\baselineskip}
  \begin{center}
      $\infer[\w]{\Gamma,\oc_0 A \vdash B}{\Gamma \vdash B}
  \qquad \infer[\contr]{\Gamma, \oc_{x + y} A \vdash B}{\Gamma,\oc_x A,\oc_y A \vdash B}
  \qquad \infer[\der]{\Gamma, \oc_1 A \vdash B}{\Gamma, A \vdash B}
  \qquad \infer[\p]{\oc_{x_1\times y}A_1,\dots,\oc_{x_n\times y}A_n \vdash \oc_y B}{\oc_{x_1}A_1,\dots,\oc_{x_n}A_n\vdash B}
  $
  \end{center}
   \vspace{\baselineskip}
  Finally, an indexed dereliction rule is also added, which uses the order of $\Sr$. In Section~\ref{sec:dbsll}, we will use an order induced by the additive rule of $\Sr$.
  \vspace{\baselineskip}
  \begin{center}
      $\infer[\der_I]{\Gamma,\oc_y A \vdash B}{\Gamma,\oc_x A \vdash B & x \leq y}$
  \end{center}
  \vspace{\baselineskip}
In graded linear logic, this rule is usually considered as a subtyping rule.
However, semantically in models of $\DiLL$, it corresponds to a variant of the dereliction, hence its name here.

  \section{A differential $\BSLL$}\label{sec:dbsll}
  In this section, we extend a graded linear logic with indexed coexponential rules. We define and prove correct a cut-elimination procedure.
  
  \subsection{Formulas and proofs}
  We define a differential version of $\BSLL$ by extending its set of exponential rules.
  The grammar for this logic is the same as the grammar of $\DiLL$, except that we add the graded exponentials:
    \[
      A,B := 0\mid 1 \mid \top \mid \bot \mid A\otimes B \mid A \parr B \mid A \with B \mid A \oplus B \mid \wn A \mid \oc A \mid \wn_x A \mid \oc_x A \qquad x \in \Sr.
    \]
  For the rules, we will restrict ourselves to a version without promotion, as it has been done for $\DiLL$ originally.
  Following the ideas behind $\DiLL$, we add \emph{costructural} exponential rules: a coweakening $\bw$, a cocontraction $\bc$, an indexed codereliction $\bdi$ and a codereliction $\bd$.
  The set of exponential rules of our new logic $\DBSLL$ is given in Figure \ref{fig:dbsllexp}. Note that by doing so, we study a \emph{classical} version of $\BSLL$, with an involutive linear duality.
  
  \begin{figure}
          \begin{center}
              \AXC{$\vdash \Gamma$}
              \RL{$\w$}
              \UIC{$\vdash \Gamma,\wn_0 A$}
              \DisplayProof
              \qquad
              \AXC{$\vdash \Gamma,\wn_x A,\wn_y A$}
              \RL{$\contr$}
              \UIC{$\vdash \Gamma,\wn_{x + y} A$}
              \DisplayProof
              \qquad
              \AXC{$\vdash \Gamma,\wn_x A$}
              \AXC{$x \leq y$}
              \RL{$\der_I$}
              \BIC{$\vdash \Gamma,\wn_y A$}
              \DisplayProof
              \qquad
              \AXC{$\vdash \Gamma, A$}
              \RL{$\der$}
              \UIC{$\vdash \Gamma, \wn A$}
              \DisplayProof
          \end{center}
          \begin{center}
              \AXC{}
              \RL{$\bw$}
              \UIC{$\vdash \oc_0 A$}
              \DisplayProof
              \qquad
              \AXC{$\vdash \Gamma, \oc_x A$}
              \AXC{$\vdash \Delta, \oc_y A$}
              \RL{$\bc$}
              \BIC{$\vdash \Gamma,\Delta, \oc_{x + y} A$}
              \DisplayProof
              \qquad
              \AXC{$\vdash \Gamma,\oc_x A$}
              \AXC{$x \leq y$}
              \RL{$\bdi$}
              \BIC{$\vdash \Gamma,\oc_y A$}
              \DisplayProof
              \qquad
              \AXC{$\vdash \Gamma, A$}
              \RL{$\bd$}
              \UIC{$\vdash \Gamma, \oc A$}
              \DisplayProof
          \end{center}
      \caption{Exponential rules of $\DBSLL$}\label{fig:dbsllexp}
  \end{figure}
  \begin{rem}
      In $\BSLL$, we consider a semiring $\mathcal{S}$ as a set of indices.
      With $\DBSLL$, we do not need a semiring: since this is a promotion-free version, only one operation (the sum) is important.
      Hence, in $\DBSLL$, $\mathcal{S}$ will only be a monoid. This modification requires two precision:
      \begin{itemize}
          \item The indexed (co)dereliction uses the fact that the elements of $\mathcal{S}$ can be compared through an order. Here, this order will \emph{always} be defined through the sum:
          $\forall x,y \in \mathcal{S},\ x \leq y \Longleftrightarrow \exists x' \in \mathcal{S},\ x+x' = y$. This is due to the fact that for compatibility with coexponential rules, we always need that each element of $\Sr$ is greater than 0.
          To be precise, this is sometimes only a \emph{preorder}, but it is not an issue in what follows.
          \item In $\BSLL$, the dereliction is indexed by $1$, the neutral element of the product. In $\DBSLL$, we will remove this index since we do not have a product operation and simply use~$!$ and~$?$ instead of~$!_1$ and~$?_1$. This implies that these non-indexed connectives, coming from $\der$ and $\bd$, will not interact with the indexed ones. However, we choose to keep these rules and these connectives in order to keep our syntax similar to that of graded linear logic.
      \end{itemize}
  \end{rem}
  Since every element of $\Sr$ is greater than 0, we have two admissible rules which will appear in the cut elimination procedure: an indexed weakening $\w_I$ and an indexed coweakening $\bwi$:
  \[
      \infer[\w_I]{\vdash \Gamma, \wn_x A}{\vdash \Gamma}\quad
      := \quad
      \begin{bpt}
          \AXC{$\vdash \Gamma$}
          \RL{$\w$}
          \UIC{$\vdash \Gamma, \wn_0 A$}
          \RL{$\der_I$}
          \UIC{$\vdash \Gamma, \wn_x A$}
      \end{bpt}
      \qquad \qquad \qquad
      \infer[\bwi]{\vdash \oc_x A}{}\quad
      :=\quad
      \begin{bpt}
          \AXC{}
          \RL{$\bw$}
          \UIC{$\vdash \oc_0 A$}
          \RL{$\bd_I$}
          \UIC{$\vdash \oc_x A$}
      \end{bpt}
  .\]

  \subsection{Definition of the cut elimination procedure}
  Since this work is done with a Curry-Howard perspective, a crucial point is the definition of a cut-elimination procedure.
  The cut rule is the following:
  \begin{center}
      \AXC{$\vdash \Gamma, A$}
      \AXC{$\vdash A^\bot, \Delta$}
      \RL{$cut$}
      \BIC{$\vdash \Gamma, \Delta$}
      \DisplayProof
  \end{center}
  which represents the composition of proofs/programs.
  Defining its elimination corresponds to expressing explicitly how to rewrite a proof with cuts into a proof without any cuts.
  It represents exactly the computations of our logic.
  
  In order to define the cut elimination procedure of $\DBSLL$, we have to consider the cases of cuts after each costructural rule that have been introduced, since
  the cases of cuts after $\MALL$ rules or after $\w,\, \contr,\, \der_I$ and $\der$ are already known.
  An important point is that we will use the formerly introduced indexed (co)weakening rather than the usual one.
  \par Before giving the formal rewriting of each case, we will divide them into three groups.
  Since $\DBSLL$ is highly inspired by $\DiLL$, one can try to adapt the cut-elimination procedure from $\DiLL$.
  This adaptation would mean that the structure of the rewriting is exactly the same, but the exponential connectives have to be indexed. 
  For most cases, this method works and there is exactly one possible way to index these connectives, since $\w_I$, $\bwi$, $\contr$, $\bc$, $\der$ and $\bd$ do not require a choice of the index (at this point, one can think that there is a choice in the indexing of $\w_I$ and $\bwi$, but this is a forced choice thanks to the other rules).
  \par However, the case of the cut between a contraction and a cocontraction will require some work on the indices because these two rules use the addition of the monoid. 
  The index of the principal formula $x$ (resp. $x'$) of a contraction (resp. cocontraction) rule is the sum of two indices $x_1$ and $x_2$ (resp. $x_3$ and $x_4$). 
  But $x = x'$ does not imply that $x_1 = x_3$ and~${x_2 = x_4}$. 
  We will then have to use a technical algebraic notion known as additive splitting to decorate the indices of the cut elimination between $\contr$ and $\bc$ in $\DiLL$.
  
  \begin{defi}\label{def:addsplit}
      A monoid $(\mathcal{M},+,0)$ is \emph{additive splitting} if for each $x_1,x_2,x_3,x_4 \in \mathcal{M}$ such that $x_1 + x_2 = x_3 + x_4$, 
      there are elements $x_{1,3},\ x_{1,4},\ x_{2,3},\ x_{2,4} \in \mathcal{M}$ such that 
      \[ x_1 = x_{1,3} + x_{1,4} \qquad x_2 = x_{2,3} + x_{2,4} \qquad x_3 = x_{1,3} + x_{2,3} \qquad x_4 = x_{1,4} + x_{2,4}.\]
  \end{defi}
  Graphically, this notion of splitting can be represented by the following diagram.
  \begin{center}
    \begin{tikzpicture}[x=0.5cm,y=0.5cm]
    \filldraw[fill=gray!0!white, draw=black,rounded corners = 4] (0,0) rectangle (5,2);
    \filldraw[fill=gray!0!white, draw=black,rounded corners = 4] (0,2.3) rectangle (5,4.3);
    \node at (6.5,2.15) {=};
    \filldraw[fill=gray!0!white, draw=black,rounded corners = 4] (8,0) rectangle (10,4.3);
    \filldraw[fill=gray!0!white, draw=black,rounded corners = 4] (10.3,0) rectangle (12.3,4.3);
    \node at (15.3,2.15) {$\leadsto$};
    \filldraw[fill=gray!0!white, draw=black,rounded corners = 4] (18.3,0) rectangle (20.3,2);
    \filldraw[fill=gray!0!white, draw=black,rounded corners = 4] (18.3,2.3) rectangle (20.3,4.3);
    \filldraw[fill=gray!0!white, draw=black,rounded corners = 4] (20.6,0) rectangle (22.6,2);
    \filldraw[fill=gray!0!white, draw=black,rounded corners = 4] (20.6,2.3) rectangle (22.6,4.3);
    \node at (2.5,3.3) {$x_1$};
    \node at (2.5,1) {$x_2$};
    \node at (9,2.15) {$x_3$};
    \node at (11.3,2.15) {$x_4$};
    \node at (19.3,1) {$x_{2,3}$};
    \node at (19.3,3.3) {$x_{1,3}$};
    \node at (21.6,1) {$x_{2,4}$};
    \node at (21.6,3.3) {$x_{1,4}$};
    \end{tikzpicture}
    \end{center}
  This notion appears in~\cite{splitting}, for describing particular models of $\BSLL$, based on the relational model.
  Here, the purpose is different: it appears from a syntactical point of view. In the rest of this section, we will not only require $\mathcal{S}$ to be a monoid, but to be additive splitting as well. 
  
  \par Now that we have raised some fundamental differences in a possible cut-elimination procedure, one can note that we have not mentioned how to rewrite the cuts following an indexed (co)dereliction. 
  This is because the procedure from $\DiLL$ cannot be adapted at all in order to eliminate those cuts, as  $\der_I$ and $\bdi$ have nothing in common with the exponential rules of $\DiLL$.
  The situation is even worse: these cuts cannot be eliminated since these rules are not deterministic because of the use of the order relation.
  These considerations lead to the following division between the cut elimination cases.
  \begin{description}
      \item[Group 1] The cases where $\DiLL$ can naively be decorated. These will be cuts involving two exponential rules, with at least one being an indexed (co)weakening or a non-indexed (co)dereliction.
      \item[Group 2] The case where $\DiLL$ can be adapted using algebraic technicality, which is the cut between a contraction and a cocontraction.  
      \item[Group 3] The cases highly different from $\DiLL$. Those are the ones involving an indexed dereliction or an indexed codereliction.
  \end{description}
  The formal rewritings for the cases of groups 1 and 2 are given in Figure~\ref{fig:dbsllcut}. 
  The cut-elimination for contraction and a cocontraction uses the additive splitting property with the notations of Definition~\ref{def:addsplit}.
  \begin{figure}
  \begin{small}
      \begin{flalign*}
        &
       \begin{bpt}
        \alwaysNoLine
        \AXC{$\Pi_1$}
        \UIC{$\vdash \Gamma$}
        \alwaysSingleLine
        \RL{$\w_I$}
        \UIC{$\vdash \Gamma, \wn_x A$}
        \alwaysNoLine
        \AXC{}
        \alwaysSingleLine
        \RL{$\bwi$}
        \UIC{$\vdash \oc_x A^\bot$}
        \RL{$cut$}
        \BIC{$\vdash \Gamma$}
        \end{bpt} 
        &
        \lto_{cut} \quad
        &
        \begin{bpt}
        \alwaysNoLine
        \AXC{$\Pi_1$}
        \UIC{$\vdash \Gamma$}
        \end{bpt}
        & 
        &&\\
        &
        \begin{bpt}
            \alwaysNoLine
            \AXC{$\Pi_1$}
            \UIC{$\vdash \Gamma, A$}
            \alwaysSingleLine
            \RL{$\der$}
            \UIC{$\vdash \Gamma, \wn A$}
            \alwaysNoLine
            \AXC{$\Pi_2$}
            \UIC{$\vdash \Delta, A^\bot$}
            \alwaysSingleLine
            \RL{$\bd$}
            \UIC{$\vdash \Delta, \oc A^\bot$}
            \RL{$cut$}
            \BIC{$\vdash \Gamma, \Delta$}
            \end{bpt}
            &
            \lto_{cut} \quad
            &
            \begin{bpt}
            \alwaysNoLine
            \AXC{$\Pi_1$}
            \UIC{$\vdash \Gamma, A$}
            \AXC{$\Pi_2$}
            \UIC{$\vdash \Delta, A^\bot$}
            \alwaysSingleLine
            \RL{$cut$}
            \BIC{$\vdash \Gamma, \Delta$}
        \end{bpt}
        &
        &&\\
        &
        \begin{bpt}
        \alwaysNoLine
        \AXC{$\Pi_1$}
        \UIC{$\vdash \Gamma,\wn_x A,\wn_y A$}
        \alwaysSingleLine
        \RL{$\contr$}
        \UIC{$\vdash \Gamma,\wn_{x+y} A$}
        \alwaysNoLine
        \AXC{}
        \alwaysSingleLine
        \RL{$\bwi$}
        \UIC{$\vdash \oc_{x+y} A^\bot$}
        \RL{$cut$}
        \BIC{$\vdash \Gamma$}
        \end{bpt}
        &
        \lto_{cut} \quad
        &
        \begin{bpt}
        \alwaysNoLine
        \AXC{$\Pi_1$}
        \UIC{$\vdash \Gamma,\wn_x A,\wn_y A$}
        \AXC{}
        \alwaysSingleLine
        \RL{$\bwi$}
        \UIC{$\vdash \oc_y A^\bot$}
        \RL{$cut$}
        \BIC{$\vdash \Gamma,\wn_x A$}
        \alwaysNoLine
        \AXC{}
        \alwaysSingleLine
        \RL{$\bwi$}
        \UIC{$\vdash \oc_x A^\bot$}
        \RL{$cut$}
        \BIC{$\vdash \Gamma$}
        \end{bpt}
        &
        &&\\
        &
      \begin{bpt}
      \alwaysNoLine
      \AXC{$\Pi_1$}
      \UIC{$\vdash \Gamma,\oc_x A$}
      \AXC{$\Pi_2$}
      \UIC{$\vdash \Delta,\oc_y A$}
      \alwaysSingleLine
      \RL{$\bc$}
      \BIC{$\Gamma,\Delta,\oc_{x+y} A$}
      \alwaysNoLine
      \AXC{$\Pi_3$}
      \UIC{$\vdash \Xi$}
      \alwaysSingleLine
      \RL{$\w_I$}
      \UIC{$\vdash \Xi,\wn_{x+y} A^\bot$}
      \RL{$cut$}
      \BIC{$\vdash \Gamma,\Delta,\Xi$}
      \end{bpt}
      &
      \lto_{cut} \quad
      &
      \begin{bpt}
      \alwaysNoLine
      \AXC{$\Pi_1$}
      \UIC{$\vdash \Gamma,\oc_x A$}
      \AXC{$\Pi_3$}
      \UIC{$\vdash \Xi$}
      \alwaysSingleLine
      \RL{$\w_I$}
      \UIC{$\vdash \Xi,\wn_x A^\bot$}
      \RL{$cut$}
      \BIC{$\vdash \Gamma,\Xi$}
      \RL{$\w_I$}
      \UIC{$\vdash \Gamma,\Xi,\wn_y A^\bot$}
      \alwaysNoLine
      \AXC{$\Pi_2$}
      \UIC{$\vdash \Delta,\oc_y A$}
      \alwaysSingleLine
      \RL{$cut$}
      \BIC{$\vdash \Gamma,\Xi,\Delta$}
      \end{bpt}
      &
      &&\\
    \end{flalign*}
        
    \begin{flalign*}
    \begin{bpt}
          \alwaysNoLine
          \AXC{$\Pi_1$}
          \UIC{$\vdash \Gamma, \wn_{x_1} A^\bot, \wn_{x_2} A^\bot$}
          \alwaysSingleLine
          \RL{$\contr$}
          \UIC{$\vdash \Gamma,\wn_{x_1+x_2} A^\bot$}
          \alwaysNoLine
          \AXC{$\Pi_2$}
          \UIC{$\vdash \Delta,\oc_{x_3} A$}
          \AXC{$\Pi_3$}
          \UIC{$\vdash \Xi,\oc_{x_4} A$}
          \alwaysSingleLine
          \RL{$\bc$}
          \BIC{$\vdash \Delta,\Xi,\oc_{x_3+x_4 = x_1 + x_2} A$}
          \RL{$cut$}
          \BIC{$\vdash \Gamma,\Delta,\Xi $}
          \end{bpt}
          \lto_{cut} \quad 
          &&
        \end{flalign*}
            
        \begin{flalign*}
            &&
            \begin{bpt}
                \AXC{$\Pi_b$}
                \alwaysNoLine
                \UIC{$\vdash \Gamma, \wn_{x_{1,4}} A^\bot,\wn_{x_{2,4}} A^\bot, \wn_{x_3} A^\bot$}
                \AXC{$\Pi_2$}
                \alwaysNoLine
                \UIC{$\vdash \Delta,\oc_{x_3} A$}
                \alwaysSingleLine
                \RL{$cut$}
                \BIC{$\vdash \Gamma,\Delta,\wn_{x_{1,4}} A^\bot,\wn_{x_{2,4}} A^\bot$}
                \RL{$\contr$}
                \UIC{$\vdash \Gamma,\Delta,\wn_{x_4} A^\bot$}
                \AXC{$\Pi_3$}
                \alwaysNoLine
                \UIC{$\Xi,\oc_{x_4} A$}
                \alwaysSingleLine
                \RL{$cut$}
                \BIC{$\vdash \Gamma,\Delta,\Xi$}
                \end{bpt} 
        \end{flalign*}

          \begin{align*}
            &\text{\normalsize in which $\Pi_a$ and $\Pi_b$ are as follows:} & \hfill\\
            & \Pi_a = \qquad
            \begin{bpt}
            \AXC{}
            \RL{$ax$}
            \UIC{$\vdash \wn_{x_{2,3}} A^\bot, \oc_{x_{2,3}} A$}
            \AXC{}
            \RL{$ax$}
            \UIC{$\vdash \wn_{x_{2,4}} A^\bot, \oc_{x_{2,4}} A$}
            \RL{$\bc$}
            \BIC{$\vdash \wn_{x_{2,3}} A^\bot,\wn_{x_{2,4}} A^\bot,\oc_{x_2} A$}
            \AXC{$\Pi_1$}
            \alwaysNoLine
            \UIC{$\vdash \Gamma,\wn_{x_1} A^\bot,\wn_{x_2} A^\bot$}
            \alwaysSingleLine
            \RL{$cut$}
            \BIC{$\vdash \Gamma,\wn_{x_{2,3}} A^\bot,\wn_{x_{2,4}} A^\bot,\wn_{x_1} A^\bot$}
            \end{bpt} & \\
            & & \\
          &\Pi_b = \qquad
          \begin{bpt}
          \AXC{$\Pi_a$}
          \alwaysNoLine
          \UIC{$\vdash \Gamma,\wn_{x_{2,3}} A^\bot,\wn_{x_{2,4}} A^\bot,\wn_{x_1} A^\bot$}
          \alwaysSingleLine
          \AXC{}
          \RL{$ax$}
          \UIC{$\vdash \wn_{x_{1,3}} A^\bot,\oc_{x_{1,3}} A$}
          \AXC{}
          \RL{$ax$}
          \UIC{$\vdash \wn_{x_{1,4}} A^\bot,\oc_{x_{1,4}} A$}
          \RL{$\bc$}
          \BIC{$\wn_{x_{1,3}} A^\bot,\wn_{x_{1,4}} A^\bot,\oc_{x_1} A $}
          \RL{$cut$}
          \BIC{$\vdash \Gamma, \wn_{x_{2,3}} A^\bot,\wn_{x_{2,4}} A^\bot, \wn_{x_{1,3}} A^\bot,\wn_{x_{1,4}} A^\bot$}
          \RL{$\contr$}
          \UIC{$\vdash \Gamma, \wn_{x_{1,4}} A^\bot,\wn_{x_{2,4}} A^\bot, \wn_{x_3} A^\bot$}
          \end{bpt}
        \end{align*}
  \end{small}
  \caption{Cut elimination for $\DBSLL$: group 1 and group 2}
  \label{fig:dbsllcut}
  \end{figure}
  
  Finally, the last possible case of an occurrence of a cut in a proof is the one where $\der_I$ or $\bdi$ is applied before the cut: the group 3. The following definition introduces rewritings where these rules go up in the derivation tree, and which will be applied before the cut elimination procedure.
  This technique is inspired by subtyping ideas, which make sense since $\der_I$ is originally defined as a subtyping rule.
  \begin{defi}
      The rewriting procedures $\lto_{\der_I}$ and $\lto_{\bdi}$ are defined on proof trees of $\DBSLL$.
      \begin{enumerate}
      \item When $\der_I$ (resp. $\bdi$) is applied after a rule $r$ and $r$ is either from $\MALL$ (except the axiom) or $r$ is $\bwi$, $\bc$, $\bdi$ (resp. $\w_I$, $\contr$, $\der_I$), $\bd$ or $\der$, the rewriting $\lto_{\der_I,1}$ (resp. $\lto_{\bdi,1}$) exchanges~$r$ and~$\der_I$ (resp.~$\bdi$) which is possible since $r$ and $\der_I$ do not have the same principal formula.
      \item When $\der_I$ or $\bdi$ is applied after a (co)contraction, the rewriting is 
      \begin{center}
      \begin{bpt}
      \alwaysNoLine
      \AXC{$\Pi$}
      \UIC{$\vdash \Gamma,\wn_{x_1} A ,\wn_{x_2} A$}
      \alwaysSingleLine
      \RL{$\contr$}
      \UIC{$\vdash \Gamma,\wn_{x_1 + x_2} A$}
      \RL{$\der_I$}
      \UIC{$\vdash \Gamma, \wn_{x_1 + x_2 + x_3} A$}
      \end{bpt} 
      \qquad$\lto_{\der_I,2}$\qquad
      \begin{bpt}
      \alwaysNoLine
      \AXC{$\Pi$}
      \UIC{$\vdash \Gamma,\wn_{x_1} A ,\wn_{x_2} A$}
      \alwaysSingleLine
      \RL{$\contr$}
      \UIC{$\vdash \Gamma,\wn_{x_1 + x_2} A$}
      \RL{$\w_I$}
      \UIC{$\vdash \Gamma, \wn_{x_1 + x_2} A, \wn_{x_3} A$}
      \RL{$\contr$}
      \UIC{$\vdash \Gamma, \wn_{x_1 + x_2 + x_3} A$}
      \end{bpt}
      \end{center}
      \begin{center}
      \begin{bpt}
      \alwaysNoLine
      \AXC{$\Pi_1$}
      \UIC{$\vdash \Gamma,\oc_{x_1} A$}
      \AXC{$\Pi_2$}
      \UIC{$\vdash \Delta,\oc_{x_2} A$}
      \alwaysSingleLine
      \RL{$\bc$}
      \BIC{$\vdash \Gamma,\Delta,\oc_{x_1 + x_2} A$}
      \RL{$\bdi$}
      \UIC{$\vdash \Gamma,\Delta,\oc_{x_1 + x_2 + x_3} A$}
      \end{bpt}
      \quad$\lto_{\bdi,2}$\quad
      \begin{bpt}
      \alwaysNoLine
      \AXC{$\Pi_1$}
      \UIC{$\vdash \Gamma,\oc_{x_1} A$}
      \alwaysNoLine
      \AXC{$\Pi_2$}
      \UIC{$\vdash \Delta,\oc_{x_2} A$}
      \alwaysSingleLine
      \RL{$\bc$}
      \BIC{$\vdash \Gamma,\Delta,\oc_{x_1 + x_2} A$}
      \AXC{$\vdash$}
      \RL{$\bwi$}
      \UIC{$\vdash \oc_{x_3} A$}
      \RL{$\bc$}
      \BIC{$\vdash \Gamma,\Delta,\oc_{x_1 + x_2 + x_3} A$}
      \end{bpt}
      \end{center}
      \item If it is applied after an indexed (co)weakening, the rewriting is 
      \begin{center}
      \begin{bpt}
      \alwaysNoLine
      \AXC{$\Pi$}
      \UIC{$\vdash \Gamma$}
      \alwaysSingleLine
      \RL{$\w_I$}
      \UIC{$\vdash \Gamma,\wn_x A$}
      \RL{$\der_I$}
      \UIC{$\vdash \Gamma,\wn_{x+y} A$}
      \end{bpt}
      $\lto_{\der_I,3}$
      \begin{bpt}
      \alwaysNoLine
      \AXC{$\Pi$}
      \UIC{$\vdash \Gamma$}
      \alwaysSingleLine
      \RL{$\w_I$}
      \UIC{$\vdash \Gamma,\wn_{x+y} A$}
      \end{bpt}
      \qquad
      \begin{bpt}
      \alwaysNoLine
      \AXC{$\Pi$}
      \UIC{$\vdash$}
      \alwaysSingleLine
      \RL{$\bwi$}
      \UIC{$\vdash \oc_x A$}
      \RL{$\bdi$}
      \UIC{$\vdash \oc_{x+y} A$}
      \end{bpt}
      $\lto_{\bdi,3}$
      \begin{bpt}
      \alwaysNoLine
      \AXC{$\Pi$}
      \UIC{$\vdash$}
      \alwaysSingleLine
      \RL{$\bwi$}
      \UIC{$\vdash \oc_{x+y} A$}
      \end{bpt}
      \end{center}
      \item And if it is after an axiom, we define
      \begin{center}
      \begin{bpt}
      \AXC{}
      \RL{$ax$}
      \UIC{$\vdash \oc_{x} A, \wn_{x} A^\bot $}
      \RL{$\der_I$}
      \UIC{$\vdash \oc_{x} A, \wn_{x+y} A^\bot $}
      \end{bpt}
      $\lto_{\der_I,4}$
      \begin{bpt}
      \AXC{}
      \RL{$ax$}
      \UIC{$\vdash \oc_{x} A,\wn_{x} A^\bot$}
      \RL{$\w_I$}
      \UIC{$\vdash \oc_{x} A,\wn_{x} A^\bot,\wn_{y} A^\bot$}
      \RL{$\contr$}
      \UIC{$\vdash \oc_{x} A,\wn_{x+y} A^\bot $}
      \end{bpt}
      \end{center}
      \begin{center}
      \begin{bpt}
      \AXC{}
      \RL{$ax$}
      \UIC{$\vdash \oc_x A,\wn_x A^\bot$}
      \RL{$\bdi$}
      \UIC{$\vdash \oc_{x+y} A,\wn_x A^\bot$}
      \end{bpt}
      $\lto_{\bdi,4}$
      \begin{bpt}
      \AXC{}
      \RL{$ax$}
      \UIC{$\vdash \oc_x A,\wn_x A^\bot$}
      \AXC{$\vdash$}
      \RL{$\bwi$}
      \UIC{$\vdash \oc_y A$}
      \RL{$\bc$}
      \BIC{$\vdash \oc_{x+y} A, \wn_x A^\bot$}
      \end{bpt}
      \end{center}
      \end{enumerate}
      One defines  $\lto_{\der_I}$ (resp. $\lto_{\bdi}$) as the transitive closure of the union of the $\lto_{\der_I,i}$ (resp. $\lto_{\bdi, i}$).
  \end{defi}
  
  Even if this definition is non-deterministic, this is not a problem.
  Every indexed (co)dereliction goes up in the tree, without meeting another one. This implies that this rewriting is confluent: the result of the rewriting does not depend on the choices made.
  \begin{rem}\label{rem:foncteur idill-dill}
      It is easy to define a forgetful functor $U$, which transforms a formula (resp. a proof) of $\DBSLL$ into a formula (resp. a proof) of $\DiLL$. 
      For a formula $A$ of $\DBSLL$, $U(A)$ is $A$ where each $\oc_x$ (resp. $\wn_x$) is transformed into $\oc$ (resp. $\wn$), which is a formula of $\DiLL$. 
      For a proof-tree without any $\der_I$ and $\bdi$, the idea is the same: when an exponential rule of $\DBSLL$ is applied in a proof-tree $\Pi$, the same rule but not indexed is applied in $U(\Pi)$, which is a proof-tree in $\DiLL$. 
      Moreover, we notice that if $\Pi_1 \lto_{cut} \Pi_2$, $U(\Pi_1) \lto_{\DiLL} U(\Pi_2)$ where $\lto_{\DiLL}$ is the cut-elimination in~\cite{ehrhardpn}.
  \end{rem}
  We can now define a cut-elimination procedure:
  \begin{defi}
      The rewriting $\lto$ is defined on derivation trees. 
      For a tree~$\Pi$, we apply~$\lto_{\der_I}$,~$\lto_{\bdi}$ and~$\lto_{cut}$ as long as it is possible. 
      When there are no more cuts, the rewriting ends.
  \end{defi}
  
  \begin{thm}
  \label{thm:cutelim}
      The rewriting procedure $\lto$ terminates on each derivation tree, and reaches an equivalent tree with no cut.
  \end{thm}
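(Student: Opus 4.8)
The plan is to read $\lto$ as a two-phase procedure---first push every indexed (co)dereliction towards the leaves with $\lto_{\der_I}$ and $\lto_{\bdi}$, then eliminate cuts with $\lto_{cut}$---and to prove termination of the two phases separately before combining them lexicographically. For the pushing phase I would use the descending measure given by the lexicographic pair made of the number of standalone occurrences of $\der_I$ and $\bdi$ and the sum, over these occurrences, of the number of rules lying strictly above each of them. A quick inspection of the four families of rewritings shows that none of them creates a fresh standalone indexed (co)dereliction: a case-1 step moves one occurrence one rule closer to a leaf, strictly lowering the number of rules above it, while cases~2, 3 and~4 absorb the occurrence into a $\contr$/$\bc$, an indexed (co)weakening, or an axiom, strictly lowering the count. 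Hence the measure strictly decreases and this phase terminates; its normal forms contain no standalone $\der_I$ or $\bdi$, every indexed (co)dereliction having either vanished or been packaged inside an indexed (co)weakening $\w_I$ or $\bwi$, which we treat as atomic rules.

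For the cut phase I would exploit the forgetful functor $U$ of Remark~\ref{rem:foncteur idill-dill}. After the pushing phase the derivation uses, as exponential rules, only $\w_I,\bwi,\contr,\bc,\der,\bd$ together with $\MALL$ rules, so $U$ sends it to a genuine $\DiLLo$ proof, collapsing the indexed (co)weakenings to ordinary (co)weakenings. By the same remark, each step $\Pi_1 \lto_{cut} \Pi_2$ is reflected into a nonempty reduction $U(\Pi_1)\lto_{\DiLL}U(\Pi_2)$. Since cut-elimination in $\DiLLo$ is terminating~\cite{ehrhardpn}, there is no infinite sequence of such $\DiLLo$ reductions, hence no infinite sequence of $\lto_{cut}$ steps. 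Two observations let the two phases compose into a terminating whole, even under the nondeterministic, interleaved definition of $\lto$: by inspection of Figure~\ref{fig:dbsllcut} no $\lto_{cut}$ rewriting introduces a standalone indexed (co)dereliction, and---remarkably, and unlike the contraction/promotion reduction of $\LL$---none of them duplicates a subproof. Thus $\lto_{cut}$ never re-enables a pushing redex and leaves the rules above each surviving occurrence intact, so the pushing measure is non-increasing along $\lto_{cut}$; a lexicographic combination (pushing measure, then the $\DiLLo$ cut-elimination measure read off through $U$) is therefore strictly decreasing for the whole of $\lto$.

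It then remains to see that an $\lto$-normal form is cut-free and has the same conclusion as its input. Preservation of the end-sequent is immediate, since every rewriting in both phases manifestly leaves the conclusion unchanged. For cut-freeness, once no standalone $\der_I$ or $\bdi$ survives, the rule immediately above any remaining cut (on either side) is a $\MALL$ rule or one of $\w_I,\bwi,\contr,\bc,\der,\bd$, and each such configuration is a redex of Figure~\ref{fig:dbsllcut}, so a normal form cannot contain a cut. I expect the main obstacle to lie in the contraction/cocontraction case of group~2, the only one that genuinely departs from a naive decoration of $\DiLL$: it must reconcile the two decompositions of the cut index, $x_1+x_2 = x_3+x_4$, through the additive-splitting hypothesis, and the gadget assembled from $\Pi_a$ and $\Pi_b$ creates several fresh axioms and cuts. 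The delicate points are to check that this single $\DBSLL$ reduction is faithfully simulated, under $U$, by the standard $\DiLLo$ contraction/cocontraction reduction---so that termination really transfers---and that the cuts it creates are themselves covered by the procedure; this is exactly where the additive-splitting property of $\Sr$ is indispensable.
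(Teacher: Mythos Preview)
Your core argument is essentially the paper's: the same two-phase decomposition, the same lexicographic measure for the pushing phase (number of occurrences, then a distance-to-leaves quantity), and the same reduction of the cut phase to $\DiLLo$ termination via the forgetful functor $U$ of Remark~\ref{rem:foncteur idill-dill}. The paper packages the pushing phase as a separate Lemma but the content is identical, and it then invokes $\DiLLo$ cut-elimination exactly as you do.

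The one place you go beyond the paper is the interleaving claim, and there your argument has a small gap. You assert that the pushing measure is non-increasing along $\lto_{cut}$ because no subproof is duplicated and the rules above each surviving $\der_I/\bdi$ are ``left intact''. That is true for occurrences lying inside the untouched subproofs $\Pi_i$, but not for an occurrence sitting \emph{below} the cut being reduced: in the $\contr/\bc$ case the rewritten subtree is strictly larger (fresh axioms, cocontractions and cuts are inserted via $\Pi_a$ and $\Pi_b$), so the ``rules above'' count for such an occurrence strictly increases while the first coordinate stays fixed. Hence the lexicographic pair can go up under $\lto_{cut}$, and the combined measure you propose is not well-founded for arbitrary interleavings. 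The paper avoids the issue entirely by reading $\lto$ sequentially (all pushing first, then all cuts); with that reading your proof is complete and matches the paper's, so the simplest fix is to drop the interleaving paragraph.
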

  
  In order to prove this theorem, we first need to prove a lemma, which shows that the (co)dereliction elimination is well-defined.
  
  \begin{lem}\label{lem:derelim}
  For each derivation tree $\Pi$, if we apply $\lto_{\der_I}$ and $\lto_{\bdi}$ to $\Pi$, this procedure terminates such that $\Pi \lto_{\der_I} \Pi_1 \lto_{\bdi} \Pi_2$ without any $\der_I$ and $\bdi$ in $\Pi_2$.
  \end{lem}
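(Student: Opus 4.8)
The plan is to eliminate every $\der_I$ first, producing $\Pi_1$, and then every $\bdi$, producing $\Pi_2$, always rewriting a \emph{topmost} indexed (co)dereliction and bounding the process by a well-founded multiset measure. Throughout I treat $\w_I$ and $\bwi$ as primitive rules, exactly as they occur in Figure~\ref{fig:dbsllcut}, so that a tree counts as ``free of $\der_I$'' as soon as it has no standalone $\der_I$ node, even though the macro $\w_I$ abbreviates $\w$ followed by $\der_I$. The first observation I would record is a non-interference property between the two phases: inspecting the four clauses, no clause of $\lto_{\der_I}$ ever produces a $\bdi$, and symmetrically no clause of $\lto_{\bdi}$ ever produces a $\der_I$ (the type-$1$ clauses only permute, and the type-$2,3,4$ clauses replace the erased rule by $\w_I,\contr$, resp.\ $\bwi,\bc$, together with axioms). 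This justifies running the phases sequentially: once $\Pi_1$ is $\der_I$-free, eliminating its $\bdi$ cannot reintroduce any $\der_I$.

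Next I would make precise why a topmost $\der_I$ — one with no $\der_I$ occurring anywhere above it — can always be rewritten and never blocks. Since nothing above it is a $\der_I$, the rule $r$ sitting immediately above is never an indexed dereliction; hence either the active $\wn$-formula of our $\der_I$ is contextual in $r$, in which case $r$ has a principal formula distinct from it and a type-$1$ clause permutes the $\der_I$ above $r$ (the clause being read, as its own side condition indicates, for every rule whose principal formula differs from that of the $\der_I$), or the active formula is principal in $r$, which forces $r$ to be one of $\w_I$, $\contr$ or an axiom, so that clause $3$, $2$ or $4$ fires and deletes the $\der_I$. This is exactly the remark's statement that an indexed dereliction ``goes up without meeting another one''. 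It is the delicate point of the whole argument, because $\lto_{\der_I}$ deliberately has \emph{no} clause for a $\der_I$ standing above another $\der_I$; consequently, reducing a non-topmost occurrence could get stuck, whereas the topmost discipline guarantees that a redex exists as long as some $\der_I$ survives.

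For termination I would attach to each $\der_I$ occurrence $d$ the height $h(d)$ of the subderivation rooted at the premise sequent of $d$, and track the finite multiset $M=\{\!\{\,h(d)\mid d \text{ a } \der_I \text{ node}\,\}\!\}$ under the well-founded multiset ordering on $\N$. A type-$1$ permutation moves $d$ past one rule and replaces $h(d)$ by a strictly smaller value; in the single case where the active formula lies in the \emph{shared} context of an additive rule $\with$, the $\der_I$ is duplicated into both premises, replacing $h(d)$ by two strictly smaller values — still a strict decrease of $M$. Each deleting clause $2,3,4$ simply removes $h(d)$ from $M$ while adding only the primitive rules $\w_I,\contr$, which contribute no element. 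Thus every step of the topmost strategy strictly decreases $M$, so the $\der_I$-phase terminates; and since a topmost redex is available whenever a $\der_I$ remains, the resulting $\Pi_1$ is $\der_I$-free. I would then replay this argument verbatim for $\bdi$ on $\Pi_1$: as $\Pi_1$ has no $\der_I$, the clause permuting $\bdi$ past $\der_I$ never applies, and by the non-interference property the $\bdi$-phase creates neither a $\der_I$ nor a fresh $\bdi$, yielding a $\Pi_2$ free of both $\bdi$ and $\der_I$, as claimed.

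The hard part is not the measures, which are routine once chosen, but securing \emph{progress}: combining the topmost discipline (so that every surviving indexed dereliction is reducible and never meets a twin) with the correct accounting of context duplication at the additive rule inside a well-founded multiset. Everything else reduces to a finite inspection of the four rewriting clauses.
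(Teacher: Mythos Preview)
Your overall architecture---eliminate $\der_I$ first, then $\bdi$, with an explicit topmost discipline to guarantee progress and a non-interference observation to keep the phases independent---is sound and in fact more careful than the paper on two points: you spell out why a topmost $\der_I$ never sits directly under another $\der_I$ (so a clause always fires), and you explicitly treat the duplication at the additive $\with$ rule, which the paper's lexicographic argument silently ignores.

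There is, however, a genuine gap in your termination argument. You claim that the deleting clauses $2,3,4$ ``simply remove $h(d)$ from $M$'', but this is not true for the multiset as a whole. In clause~$2$ (and likewise in clause~$4$) the one $\der_I$ node is replaced by the two rules $\w_I,\contr$, so the subtree at that position grows by one rule. Consequently, for every $\der_I$ occurrence $d'$ sitting \emph{below} the rewritten one (closer to the root), the subderivation rooted at the premise of $d'$ contains the rewritten position, and $h(d')$ increases by~$1$. Your topmost discipline prevents $\der_I$'s above, not below, so such $d'$ can exist. In the Dershowitz--Manna order, removing the small element $h(d)$ while simultaneously replacing each larger $h(d')$ by $h(d')+1$ is \emph{not} a decrease: the new maximal element $h_{\max}+1$ is dominated by nothing in the old multiset.

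The paper sidesteps this by taking as first coordinate the \emph{number} of $\der_I$'s and ordering lexicographically: clauses $2$--$4$ strictly decrease that count regardless of what happens to heights, while clause~$1$ keeps the count and pushes the single $\der_I$ up. That choice is exactly what absorbs the growth you missed---but, as you noticed, the paper's claim that clause~$1$ ``does not change the number of $\der_I$'' fails at $\with$. A clean repair that keeps the best of both is to argue in two layers: an inner multiset argument (your $h(d)$) tracking only the currently topmost $\der_I$ and its $\with$-duplicates until they are all deleted, wrapped in an outer induction on the total number of $\der_I$'s, which drops by one after each such phase.
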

  \begin{proof}
      Let $\Pi$ be a proof-tree. 
      Each rule has a height (using the usual definition for nodes in a tree). 
      We define the \textit{depth} of a node as the height of the tree minus the height of this node. 
      The procedure $\lto_{\der_I}$ terminates on $\Pi$: let $a(\Pi)$ be the number of indexed derelictions in $\Pi$ and $b(\Pi)$ be the sum of the depth of each indexed dereliction in $\Pi$. Now, we define~${H(\Pi) = (a(\Pi), b(\Pi))}$ and $<_{lex}$ as the lexicographical order on $\N^2$. For each step of $\lto_{\der_I}$ such that $\Pi_i \lto_{\der_I} \Pi_j$, we have $H(\Pi_i) <_{lex} H(\Pi_j)$: 
      \begin{enumerate}
      \item If $\Pi_i \lto_{\der_I,1} \Pi_j$, the number of $\der_I$ does not change, and the sum of depths decreases by 1. Hence, $H(\Pi_i) <_{lex} H(\Pi_j)$.
      \item If $\Pi_i \lto_{\der_I,k} \Pi_j$ with $ 2 \leq k \leq 4$, the number of derelictions decreases, so $H(\Pi_i) <_{lex} H(\Pi_j)$.
      \end{enumerate}
      Using this property and the fact that $<_{lex}$ is a well-founded order on $\N^2$, this rewriting procedure has to terminate on a tree $\Pi_1$. 
      Moreover, if there is an indexed dereliction in $\Pi_1$, this dereliction is below another rule, so $\lto_{\der_I,i}$ for $1 \leq i \leq 4$ can be applied, which leads to a contradiction with the definition of $\Pi_1$. Then, there is no indexed dereliction in $\Pi_1$.
      \par Using similar arguments, the rewriting procedure $\lto_{\bdi}$ on $\Pi_1$ ends on a tree $\Pi_2$ where there is no codereliction (and no dereliction because the procedure $\lto_{\bdi}$ does not introduce any derelictions).
  \end{proof}
  
   \begin{proof}[Proof of Theorem \ref{thm:cutelim}]
      If we apply our procedure $\lto$ on a tree $\Pi$ we will, using Lemma~\ref{lem:derelim}, have a tree $\Pi_{\der_I,\bdi}$ such that $\Pi \lto_{\der_I} \Pi_{\der_I} \lto_{\bdi} \Pi_{\der_I,\bdi}$ and there is no indexed dereliction and no indexed codereliction in $\Pi_{\der_I,\bdi}$. 
      Hence, the procedure $\lto$ applied on $\Pi$ gives a rewriting
      \[ \Pi \lto_{\der_I} \Pi_{\der_I} \lto_{\bdi} \left(\Pi_{\der_I,\bdi} = \Pi_{0}\right) \lto_{cut} \Pi_{1} \lto_{cut} \dots \]
      Applying the forgetful functor $U$ from Remark \ref{rem:foncteur idill-dill} on each tree $\Pi_i$ (for $i \in \N$), the cut-elimination theorem of $\DiLL$~\cite{pagani_cut-elimination_2009} implies that this rewriting terminates at step $n$, because the cut-elimination rules of $\DBSLL$ which are used in $\Pi_0$ are those of $\DiLL$ when the indexes are removed.
      Then, $\Pi \lto^* \Pi_n$ where $\Pi_n$ is cut-free. 
   \end{proof}
  \begin{rem}
    Notice that while $\DiLL$ is famous for introducing \emph{formal sums of proofs} with its cut-elimination, we have none of that here.  Sums are generated by cut-elimination between $\bc$ and $\der$ or $\contr$ and $\bd$, mimicking the calculus rule for differentiation. LPDOcc do not behave like this, and fundamental solutions or differential operators are painlessly propagated into the first argument of a distribution or function.
    
    As far as syntax is concerned, we are only treating a weakened version of the (co)derelic\-tion, which is responsible for the sum in $\DiLL$. In a way, the labels, by allowing finer insight into the resource allocation, may remove or/and add such sums :
    \begin{itemize}
    \item Using positivity (i.e. the fact that $x+y = 0$ implies that $x = 0$ or $y = 0$), we could define a cut between a codereliction graded by $1$ and a contraction deterministically.
    \item Conversely, even though the additive splitting of LPDOcc, our example of interest, happens to be deterministic (see Section~\ref{sec:idill}), it is not always the case, and one may want to perform all possible choices non-deterministically, hence a new sum.
    \end{itemize}

  \end{rem}

  \subsection{The promotion rule}

  In $\DBSLL$, we do not consider the promotion rule. 
  However, this rule is crucial in programming languages' semantics, since it allows for representing higher-order programs.

  In the previous subsection, we have restricted $\Sr$ to be only a monoid, since without a promotion rule, the product operation was not useful.
  We will here study how we could add a promotion rule in $\DBSLL$, so we will consider a semiring $(\Sr, 0,+,1,\times)$.
  In $\BSLL$, the promotion rule is
  \[
      \infer[\p]{\vdash \wn_{x_1\times y}A_1,\dots,\wn_{x_n\times y}A_n, \oc_y B}{\vdash \wn_{x_1}A_1,\dots,\wn_{x_n}A_n, B}  
  \]
  Note that one has to be careful with the indices while using this rule, since the product is not necessarily commutative.
  If one wants to add this rule to $\DBSLL$, it has to extend the cut elimination procedure.
  The cases where the promotion interacts with the graded structural rules ($\w, \contr, \der, \der_I$ and itself) are studied in~\cite{splitting}.
  For the sake of completeness, we present these rules in Figure~\ref{fig:promgradcut}.
  \begin{figure}
    \begin{small}
         \begin{smallreduc}{
          \alwaysNoLine
          \AXC{$\Pi_1$}
          \UIC{$\vdash\wn_{y_1}A_1,\dots,\wn_{y_n}A_n,B$}
          \alwaysSingleLine
          \RL{$\p$}
          \UIC{$\vdash\wn_{0\times y_1}A_1,\dots,\wn_{0\times y_n}A_n,\oc_0 B$}
          \alwaysNoLine
          \AXC{$\Pi_2$}
          \UIC{$\vdash\Gamma$}
          \alwaysSingleLine
          \RL{$\w$}
          \UIC{$\vdash\Gamma,\wn_0 B^\bot$}
          \RL{$cut$}
          \BIC{$\vdash\Gamma,\wn_{0}A_1,\dots,\wn_{0}A_n$}
          \DisplayProof{}} 
          {\alwaysNoLine
          \AXC{$\Pi_2$}
          \UIC{$\vdash \Gamma$}
          \alwaysSingleLine
          \RL{$\w$}
          \UIC{$\vdash\Gamma,\wn_0 A_1$}
          \RL{$\w$}
          \UIC{$\vdots$}
          \RL{$\w$}
          \UIC{$\vdash\Gamma,\wn_0 A_1,\dots,\wn_0 A_n$}
          \DisplayProof{}
          }
         \end{smallreduc}
          \begin{bigreduc}
          {
        \alwaysNoLine
          \AXC{$\Pi_1$}
          \UIC{$\vdash\wn_{z_1}A_1,\dots,\wn_{z_n}A_n,B$}
          \alwaysSingleLine
          \RL{$\p$}
          \UIC{$\vdash\wn_{(x+y)z_1}A_1,\dots,\wn_{(x+y)z_n}A_n,\oc_{x+y} B$}
            \alwaysNoLine
            \AXC{$\Pi_2$}
            \UIC{$\vdash \Gamma, \wn_x B^\bot,\wn_y B^\bot$}
            \alwaysSingleLine
            \RL{$\contr$}
            \UIC{$\vdash \Gamma, \wn_{x+y} B^\bot$}
            \RL{$cut$}
            \BIC{$\vdash \Gamma,\wn_{xz_1+yz_1}A_1,\dots,\wn_{xz_1+yz_n}A_n$}
            \DisplayProof{}
            }
            {
            \alwaysNoLine
            \AXC{$\Pi_1$}
            \UIC{$\vdash\wn_{z_1}A_1,\dots,\wn_{z_n}A_n,B$}
            \alwaysSingleLine
            \RL{$\p$}
            \UIC{$\vdash\wn_{xz_1}A_1,\dots,\wn_{xz_n}A_n,\oc_{x} B$}
            \alwaysNoLine
            \AXC{$\Pi_1$}
            \UIC{$\vdash\wn_{z_1}A_1,\dots,\wn_{z_n}A_n,B$}
            \alwaysSingleLine
            \RL{$\p$}
            \UIC{$\vdash\wn_{yz_1}A_1,\dots,\wn_{yz_n}A_n,\oc_{y} B$}
            \alwaysNoLine
            \AXC{$\Pi_2$}
            \UIC{$\vdash \Gamma, \wn_x B^\bot,\wn_y B^\bot$}
            \alwaysSingleLine
            \RL{$cut$}
            \BIC{$\vdash \Gamma, \wn_{yz_1}A_1,\dots,\wn_{yz_n}A_n,\wn_{x} B^\bot$}
            \RL{$cut$}
            \BIC{$\vdash \Gamma,\wn_{xz_1}A_1,\dots,\wn_{xz_n}A_n,\wn_{yz_1}A_1,\dots,\wn_{yz_n}A_n$}
            \RL{$\contr$}
            \UIC{$\vdots$}
            \RL{$\contr$}
            \UIC{$\vdash \Gamma,\wn_{xz_1+yz_1}A_1,\dots,\wn_{xz_1+yz_n}A_n$}
            \DisplayProof{}
            }
          \end{bigreduc}
          \begin{bigreduc}
            {
            \alwaysNoLine
            \AXC{$\Pi_1$}
            \UIC{$\vdash\wn_{z_1}A_1,\dots,\wn_{z_n}A_n,B^\bot$}
            \alwaysSingleLine
            \RL{$\p$}
            \UIC{$\vdash\wn_{xyz_1}A_1,\dots,\wn_{xyz_n}A_n,\oc_{xy}B^\bot$}
            \alwaysNoLine
            \AXC{$\Pi_2$}
            \UIC{$\vdash\wn_{y_1}B_1,\dots,\wn_{y_m}B_m,\wn_{y}B^\bot,C$}
            \alwaysSingleLine
            \RL{$\p$}
            \UIC{$\vdash\wn_{xy_1}B_1,\dots,\wn_{xy_m}B_m,\wn_{xy}B^\bot,\oc_x C$}
            \RL{$cut$}
            \BIC{$\vdash\wn_{xyz_1}A_1,\dots,\wn_{xyz_n}A_n,\wn_{xy_1}B_1,\dots,\wn_{xy_m}B_m,\oc_x C$}
            \DisplayProof{}
            }
            {
            \alwaysNoLine
            \AXC{$\Pi_1$}
            \UIC{$\vdash\wn_{z_1}A_1,\dots,\wn_{z_n}A_n,B^\bot$}
            \alwaysSingleLine
            \RL{$\p$}
            \UIC{$\vdash\wn_{yz_1}A_1,\dots,\wn_{yz_n}A_n,\oc_{y}B^\bot$}
            \alwaysNoLine
            \AXC{$\Pi_2$}
            \UIC{$\vdash\wn_{y_1}B_1,\dots,\wn_{y_m}B_m,\wn_{y}B^\bot,C$}
            \alwaysSingleLine
            \RL{$cut$}
            \BIC{$\vdash\wn_{yz_1}A_1,\dots,\wn_{yz_n}A_n,\wn_{y_1}B_1,\dots,\wn_{y_m}B_m,C$}
            \RL{$\p$}
            \UIC{$\vdash\wn_{xyz_1}A_1,\dots,\wn_{xyz_n}A_n,\wn_{xy_1}B_1,\dots,\wn_{xy_m}B_m,\oc_x C$}
            \DisplayProof{}
            }
          \end{bigreduc}
          \begin{bigreduc}
            {
            \alwaysNoLine
            \AXC{$\Pi_1$}
            \UIC{$\vdash\wn_{x_1}A_1,\dots,\wn_{x_n}A_n,B$}
            \alwaysSingleLine
            \RL{$\p$}
            \UIC{$\vdash\wn_{(y+z)x_1}A_1,\dots,\wn_{(y+z)x_n}A_n,\oc_{y+z} B$}
            \alwaysNoLine
            \AXC{$\Pi_2$}
            \UIC{$\vdash \Gamma,\wn_y B^\bot$}
            \alwaysSingleLine
            \RL{$\der_I$}
            \UIC{$\vdash\Gamma,\wn_{y+z}B^\bot$}
            \RL{$cut$}
            \BIC{$\vdash\Gamma,\wn_{yx_1+zx_1}A_1,\dots,\wn_{yx_n+zx_n}A_n$}
            \DisplayProof{}
            }
            {
            \alwaysNoLine
            \AXC{$\Pi_1$}
            \UIC{$\vdash\wn_{x_1}A_1,\dots,\wn_{x_n}A_n,B$}
            \alwaysSingleLine
            \RL{$\p$}
            \UIC{$\vdash\wn_{yx_1}A_1,\dots,\wn_{yx_n}A_n,\oc_{y} B$}
            \alwaysNoLine
            \AXC{$\Pi_2$}
            \UIC{$\vdash \Gamma,\wn_y B^\bot$}
            \alwaysSingleLine
            \RL{$cut$}
            \BIC{$\vdash\Gamma,\wn_{yx_1}A_1,\dots,\wn_{yx_n}A_n$}
            \RL{$\der_I$}
            \UIC{$\vdots$}
            \RL{$\der_I$}
            \UIC{$\vdash\Gamma,\wn_{yx_1+zx_1}A_1,\dots,\wn_{yx_n+zx_n}A_n$}
            \DisplayProof{}
            }
          \end{bigreduc}
    \end{small}
    \caption{Cut elimination for the promotion rule in graded linear logic}
    \label{fig:promgradcut}
    \end{figure}
  Here we describe how to eliminate the cuts between a costructural rule and a promotion.
  To do so, we will need some additional properties on the semiring $\Sr$.
  \begin{defi}\label{mult_split}
    Let $(\Sr, 0,+,1,\times)$ be a semiring.
    \begin{itemize}
        \item $\Sr$ is an \emph{integral domain} if for each non-zero elements $x,y$, $xy \neq 0$.
        \item $\Sr$ is \emph{multiplicative splitting} when, if $sr = x+y$, there are elements $s_1,\dots,s_n,r_1,\dots,r_m \in \Sr$ and a set $U \subseteq \{1,\dots,n\} \times \{1,\dots,m\}$ such that 
        \[  s = \sum_{i=1}^n s_i \qquad r = \sum_{j=1}^m r_j \qquad x = \sum_{(i,j) \in U}s_i r_j \qquad y = \sum_{(i,j) \notin U}s_i r_j. \]
    \end{itemize}
    \end{defi}
In what follows, we will assume that $\Sr$ is both an integral domain and multiplicative splitting.
We can now give the rewriting cases of the cut elimination procedure with a promotion rule.
  \begin{itemize}
      \item The coweakening:
      A cut between a coweakening and a promotion is
      \[
          \begin{bpt}
              \AXC{}
              \RL{$\bw$}
              \UIC{$\vdash \oc_0 A$}
              \AXC{$\Pi$}
              \alwaysNoLine
              \UIC{$\vdash \wn_x A^\bot, \wn_{y_1}B_1,\dots \wn_{y_n}B_n,C$}
              \alwaysSingleLine
              \RL{$\p$}
              \UIC{$\vdash \wn_{xz} A^\bot, \wn_{y_1z}B_1,\dots \wn_{y_nz}B_n,\oc_z C$}
              \RL{$cut$}
              \BIC{$\vdash \wn_{y_1z}B_1,\dots \wn_{y_nz}B_n,\oc_z C$}
          \end{bpt} 
      \]
  with $xz = 0$. Since we have supposed that $\Sr$ is an integral domain, we have $x = 0$ or $z = 0$. Depending on whether $x$ or $z$ is equal to 0, the rewriting will not be the same.
  \par If $x=0$, the previous proof tree is rewritten as
      \[
      \begin{bpt}
          \AXC{}
          \RL{$\bw$}
          \UIC{$\vdash \oc_0 A$}
          \AXC{$\Pi$}
          \alwaysNoLine
          \UIC{$\vdash \wn_{x=0} A^\bot, \wn_{y_1}B_1,\dots \wn_{y_n}B_n,C$}
          \alwaysSingleLine
          \RL{$cut$}
          \BIC{$\vdash \wn_{y_1}B_1,\dots \wn_{y_n}B_n,C$}
          \RL{$\p$}
          \UIC{$\vdash \wn_{y_1z}B_1,\dots \wn_{y_nz}B_n,\oc_z C$}
      \end{bpt} 
      \]
      If $x \neq 0$, this rewriting does not work, since it is impossible to make a cut between $\Pi$ and a coweakening.
      However, $z=0$, so each index in the conclusion of the tree is equal to 0.
      We can then rewrite the tree as
      \[
          \begin{bpt}
          \AXC{}
          \RL{$\bw$}
          \UIC{$\vdash \oc_0 C$}
          \RL{$\w$}
          \UIC{$\vdash \wn_0 B_1, \oc_0 C$}
          \RL{$\w$}
          \UIC{$\vdots$}
          \RL{$\w$}
          \UIC{$\vdash \wn_0 B_1,\dots,\wn_0 B_n,\oc_0 C$}
          \end{bpt}
      \]
      where, after a first coweakening which introduces $\oc_0 C$, we do exactly $n$ weakening in order to introduce each $\wn_0 B_i$ for $1 \leq i \leq n$.
      \item The cocontraction: A cut between a cocontraction and a promotion is
      \[
          \begin{bpt}
              \alwaysNoLine
              \AXC{$\Pi_1$}
              \UIC{$\vdash \Gamma,\oc_x A$}
              \AXC{$\Pi_2$}
              \UIC{$\vdash \Delta,\oc_y A$}
              \alwaysSingleLine
              \RL{$\bc$}
              \BIC{$\vdash \Gamma, \Delta, \oc_{x+y}A$}
              \alwaysNoLine
              \AXC{$\Pi_3$}
              \UIC{$\vdash \wn_r A^\bot, \wn_{s}B,C$}
              \alwaysSingleLine
              \RL{$\p$}
              \UIC{$\vdash \wn_{rt} A^\bot, \wn_{st}B,\oc_t C$}
              \RL{$cut$}
              \BIC{$\vdash \Gamma,\Delta,\wn_{st}B,\oc_t C$}
          \end{bpt}
      \]
      with $x+y = rt$. Note that to lighten the notations, we have reduced the context to one formula $\wn_s B$, but this simplification does not change the way the rewriting works. Using the multiplicative splitting property of $\Sr$, there are elements $r_1,\dots r_k,t_1,\dots,t_l \in \Sr$, and a set $U \subseteq \{1,\dots,k\} \times \{1,\dots,l\}$ such that
      \[
          r = \sum_{i=1}^k r_i \qquad t = \sum_{j=1}^l t_j \qquad x = \sum_{(i,j) \in U}r_it_j \qquad y = \sum_{(i,j) \notin U}r_it_j.
      \]
      Before giving the rewriting of this case, we define a rule $\bc^\bot$ by 
      \[
          \begin{bpt}
              \alwaysNoLine
              \AXC{$\Pi$}
              \UIC{$\vdash \Gamma, \wn_{x+y}A$}
              \alwaysSingleLine
              \RL{$\bc^\bot$}
              \UIC{$\vdash \Gamma, \wn_x A, \wn_y A$}
          \end{bpt}
          \quad :=
          \quad
          \begin{bpt}
              \alwaysNoLine
              \AXC{$\Pi$}
              \UIC{$\vdash \Gamma, \wn_{x+y}A$}
              \alwaysSingleLine
              \AXC{}
              \RL{$ax$}
              \UIC{$\vdash \oc_x A^\bot, \wn_x A$}
              \AXC{}
              \RL{$ax$}
              \UIC{$\vdash \oc_y A^\bot, \wn_y A$}
              \RL{$\bc$}
              \BIC{$\vdash \oc_{x+y}A^\bot,\wn_x A,\wn_y A$}
              \RL{$cut$}
              \BIC{$\vdash \Gamma,\wn_x A,\wn_y A$}
          \end{bpt}
      \]
      which can be understood as the dual of the cocontraction rule. 
      One can note that this technique is used in the rewriting of a cut between a contraction and a cocontraction.
      From this, we define subtrees $\Pi_{3,j}$ for each $1\leq j \leq l$ as 
      \[  \Pi_{3,j} := \quad
          \begin{bpt}
              \alwaysNoLine
              \AXC{$\Pi_3$}
              \UIC{$\vdash \wn_{\sum_{i=1}^k r_i} A^\bot, \wn_S B,C$}
              \alwaysSingleLine
              \RL{$\bc^\bot$}
              \UIC{$\vdash \wn_{\sum_{\{i \mid (i,j) \in U\}}r_i}  A^\bot,\wn_{\sum_{\{i \mid (i,j) \notin U\}}r_i}  A^\bot, \wn_s B,C$}
              \RL{$\p$}
              \UIC{$\vdash \wn_{\sum_{\{i \mid (i,j) \in U\}}r_it_j}  A^\bot,\wn_{\sum_{\{i \mid (i,j) \notin U\}}r_it_j}  A^\bot, \wn_{st_j} B,\oc_{t_j}C$}
          \end{bpt}
      \]
      Where we use $\contr^\bot$ to split the sum $r = \sum_{i=1}^k r_i$ into two sums: the elements $r_i$ such that $r_it_j$ is in the decomposition of $x$, and the others (which are then in the decomposition of $y$).
      \par Now, we need one last intermediate step, before giving the rewriting. That is defining the subtree $\Pi_3'$, which will \emph{combine} each $\Pi_{3,j}$ using cocontractions:
      \[  \Pi_3' :=
          \begin{bpt}
              \alwaysNoLine
              \AXC{$\Pi_{3,1}$}
              \UIC{$\vdash \wn_{\sum_{(i,1) \in U}r_it_1}  A^\bot,\wn_{\sum_{(i,1) \notin U}r_it_1}  A^\bot, \wn_{st_1} B,\oc_{t_1}C$}
              \alwaysSingleLine
              \AXC{$\Pi_{3,2}$}
              \RL{$\bc$}
              \BIC{$\ddots$}
              \AXC{$\Pi_{3,l}$}
              \alwaysSingleLine
              \RL{$\bc$}
              \BIC{$\vdash \wn_{\sum_{(i,1) \in U}r_it_1}  A^\bot,\wn_{\sum_{(i,1) \notin U}r_it_1}  A^\bot,\dots,\wn_{st_1} B,\dots,\wn_{st_l}B,\oc_t C$}
              \RL{$\contr$}
              \UIC{$\vdots$}
              \RL{$\contr$}
              \UIC{$\vdash \wn_{\sum_{(i,j)\in U}r_it_j}A^\bot,\wn_{\sum_{(i,j)\notin U}r_it_j}A^\bot, \wn_{st}B,\oc_t C$}
          \end{bpt}
      \]
      Here, we have used several contractions in order to \emph{recombine} some formulas. Each $\wn_{\sum_{(i,j)\in U}r_i t_j} A^\bot$ has been contracted, and the index is now 
      \[ \left(\sum_{(i,1) \in U}r_it_1\right) + \dots + \left(\sum_{(i,l) \in U}r_it_l\right) = \sum_{(i,j) \in U}r_it_j = x. \]
      This is similar for the $\wn_{\sum_{(i,j)\notin U}r_i t_j} A^\bot$, and the final index is $y$.
      \par Finally, the rewriting is
      \[
          \begin{bpt}
              \alwaysNoLine
              \AXC{$\Pi_1$}
              \UIC{$\vdash \Gamma,\oc_x A$}
              \AXC{$\Pi_2$}
              \UIC{$\vdash \Delta,\oc_y A$}
              \AXC{$\Pi_3'$}
              \UIC{$\vdash \wn_x A^\bot, \wn_y A^\bot, \wn_{s_1t}B_1,\dots,\wn_{s_nt}B_n,\oc_t C$}
              \alwaysSingleLine
              \RL{$cut$}
              \BIC{$\vdash \Delta, \wn_x A^\bot,\wn_{s_1t}B_1,\dots,\wn_{s_nt}B_n,\oc_t C$}
              \RL{$cut$}
              \BIC{$\vdash \Gamma,\Delta,\wn_{s_1t}B_1,\dots,\wn_{s_nt}B_n,\oc_t C$}
          \end{bpt}
      \]
  \end{itemize}
  As for the promotion-free version of $\DBSLL$, we have not considered the cut elimination with an indexed (co)dereliction.
  In the previous subsection, this question was solved using a technique where these rules go up in the tree, which allows us to not consider these cases.
  In order to incorporate a promotion rule in $\DBSLL$, these indexed (co)derelictions should also commute with the promotion, if one wants to have a cut elimination procedure.
  \par Here, we face some issues.
  First, we do not know how to make these rules commute directly.
  Since the promotion involves the product, and the indexed (co)dereliction involves the order, it seems to require some algebraic properties.
  However, even with the definition of the order through the sum, and using the multiplicative splitting, we do not get any relation in the indices that would help us define a commutation.
  A natural idea to solve this issue would be to adapt what we have done for the (co)weakening: define a rule which combines a promotion and an indexed (co)dereliction.
  But even with the method, we do not know how to define the commutation.
  From a syntactical perspective, we are not able to properly understand indexed codereliction.
  The indexed dereliction represents a subtyping rule, so we hope that a commutation with this rule can be defined, from this point of view. But this is much harder for the indexed codereliction.
  \par However, this indexed codereliction rule is clear from a semantical point of view, as we will explain in Section~\ref{sec:idill}.
  One would then imagine that, thanks to the semantics, it is possible to deduce how to define a commutation.
  But, as we will explain in Section~\ref{sec:prom}, we do not know how to define an interpretation for the promotion rule in our model.
  These considerations led us to a choice in this work, in order to have a cut elimination procedure.
  We could either study a system with a promotion, or a system with indexed (co)dereliction. Since our aim here is to use this system for taking into account differential equations and their solutions, we have chosen the second option. 
  The first one is studied from a categorical point of view by Pacaud-Lemay and Vienney~\cite{lemay2023graded}.

  This cut-elimination procedure mimics that of $\DiLL$, with a few critical differences. We should properly prove the termination and confluence, neither of which is trivial. For the confluence, the key point is the multiplicative splitting, which can't really be deterministic (at least it is not in natural examples), while it is not clear that a canonical choice will lead to confluence. For the termination, however, we are pretty sure that it works, for the same reason $\DiLL$ cut elimination works :
  \begin{itemize}
  \item The $\bw/ \p$ elimination may have a supplementary case, but it is a case that erases everything, thus it will just speed up the termination.
  \item The $\bc/ \p$ elimination seems much larger that the $\DiLL$ version, but it is just that, while that of $\DiLL$ introduce one pair of $\contr+\bc$ rules, we are introducing many in parallel, which is blowing the reduction time but cannot cause a non-termination as the same process will be repeated a few more times.
  \end{itemize}

\subsection{Relational Model of $\DBSLL$}
   
We will embed the relational semantics of multiplicative linear logic into a full model of $\DBSLL$ (varying on $\Sr$), or, equivalently, grade potential variants of the usual relational model of  $\DiLL_{0}$.
These models are also models of graded linear logic, and thus contain graded digging and graded dereliction.
We hope to convey to the reader some intuitions on how we eventually intend to unfold the interaction with digging and dereliction, but also why such a general interaction scheme requires further refinements.

Following Breuvart and Pagani~\cite{splitting}, we are considering non-free exponential to be able to internalise grading. Those exponential $\oc^\Sr$ are each characterised by a semiring $\Sr$ which may or may not be the same as the gradation semiring. For simplicity, we only consider the case where both are identified.

Due to this restriction, we require the semiring of the gradation to have additional structure. 
A resource semiring $\Sr$ is given by:
\begin{itemize}
\item a semiring $(\Sr,0,+,1,\times)$ with $1$ as the unit of the new associative operation $\times$,
\item that is discrete, i.e., $x+y=1$ implies $x=0$ or $y=0$,
\item that is positive, i.e., $x+y=0$ implies $x=0$ or $y=0$,
\item that is additive splitting, as described in Definition~\ref{def:addsplit},
\item that is multiplicative splitting, as described in Definition~\ref{mult_split}
\end{itemize}

If $\Sr$ is a resource semiring, then the following define a model of linear logic~\cite{splitting,splitting0}:
\begin{itemize}
\item Let $\Rel$ be the category of sets and relations,
\item it is symmetric monoidal with $A\otimes B :=A\times B$\\ and $r\otimes r':=\{(a,a'),(b,b'))\mid (a,b)\in r, (b,b')\in r'\}$
\item it is star-autonomous, and even compact close, with $A^\bot:=A$ and $r^\bot:=\{(a,b)\mid (b,a)\in r\}$,
\item it accepts several exponentials, among which the free one, with multiset, which is the most commonly used, but many more exist. One way to create such an exponential is to look at the free modules over a fixed resource semiring $\Sr$, written $\oc^\Sr:\Rel\rightarrow \Rel$ and defined by
  \begin{itemize}
  \item $\oc^\Sr A:=[A\Rightarrow^f \Sr]$ is the set of functions $f:A\rightarrow \Sr$ finitely supported, i.e., such that $A-f^{-1}(0)$ is finite,
  \item for $r\in\Rel(A,B)$, $\oc^\Sr r$ is defined as the $\Sr$-couplings:
    \[\oc^\Sr r:= \left\{\left(\Bigl(a\mapsto\sum_{b}\sigma(a,b)\Bigr),\Bigl(b\mapsto\sum_{a}\sigma(a,b)\Bigr)\right)\ \middle|\ \sigma:[r\Rightarrow^f \Sr]\right\}\]
  \item the weakening and the contraction are defined standardly using the additive structure of the semiring:
    \[
        \w_A := \left\{\Bigl([],*\Bigr)\right\}
        \quad\quad
        \contr_{A}:=\left\{\Bigl(\bigl(a\mapsto f(a)+g(a)\bigr),(f,g)\Bigr)\ \middle|\ f,g\in\: \oc^\Sr A\right\}  
    \]
    where $[]:=(a\mapsto 0)$
  \item the dereliction and the digging use the multiplicative structure of the semiring:
    \[
        \der_A:=\Bigl\{(\delta_a,a)\Bigl\}
        \quad\quad
        \p_A:=\left\{\left(\Bigl(a\mapsto \sum_fF(f).f(a)\Bigr),F\right)\ \middle|\  F\in\: \oc^\Sr\oc^\Sr A\right\}
    \]
    where $\delta_a(a)=1$ and $\delta_a(b)=0$ for $b\neq a$.
  \item remains the monoidality:
    \[
        \m_\bot:=\{(*,f)\mid f\in\: \oc^\Sr 1\}
        \quad
        \m_{A,B}:=\left\{\left(\left(\sum_{b}\sigma(\_,b),\ \sum_{a}\sigma(a,\_)\right),\sigma\right)\ \middle|\ \sigma\in\:\oc^\Sr(A{\times} B)\right\}  
    \]
  \end{itemize}
\end{itemize}

Naturality and LL-diagrams can be found in~\cite{splitting} and ~\cite{splitting0}; they are actively using positivity, discreteness, additive splitting and multiplicative splitting. The second of those articles shows that the model can be turned into a model for $\BSLL$ with promotion (but without differential) using, as graded exponential, the restriction of $!^\Sr A$ to generalized multisets of correct weight :
$$ !_xA := \left\{ f\in!^\Sr A \ \middle|\ x\ge\sum_{a\in A}f(a) \right\}. $$
Everything else (functoriality and natural transformation) is just the restriction of the one above to the correct strata.

In a nutchel, Breuvart and Pagani defined models of $\BSLL$ (with promotion but no diferential) by first building models of $\LL$ which are ``well stratified along $\Sr$'', and then grading them by separating strata, which works on-the-nose. In order to build models of $\DBSLL$, we do follow the same pattern but starting from models of promotion-free differential linear logic that are ``well stratified along $\Sr$'', which strata will further unfold into a model of $\DBSLL$.

\begin{thm}
  For any given resource semiring $\Sr$, the above model of linear logic is a model of promotion-free differential linear logic, implementing the codereliction, coweakening and cocontraction as the inverse relation:
  $$ \bd_A:= \{(a,b)\mid (b,a)\in \der_A\}\quad \bw_A:= \{(a,b)\mid (b,a)\in \w_A\}\quad \bc_A:= \{(a,b)\mid (b,a)\in \contr_A\}$$
\end{thm}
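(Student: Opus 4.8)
The plan is to exploit the compact-closed self-duality of $\Rel$ so that the entire costructural layer is obtained by transposing the structural layer already supplied by~\cite{splitting,splitting0}, and then to check by hand only the axioms that genuinely mix the two layers, tracking exactly which resource-semiring conditions each one consumes. First I would record the ambient data: in $\Rel$ the transpose $(\_)^\bot$ is an involutive contravariant strong monoidal functor, every object is self-dual, and each hom-set is a commutative monoid under union of relations with the empty relation as $0$, composition and $\otimes$ distributing over union. This additive enrichment is what makes the differential axioms expressible at all. The linear-logic structure $(\oc^\Sr,\w,\contr,\der,\m,\p)$ is given; in particular $(\oc^\Sr A,\w_A,\contr_A)$ is a cocommutative comonoid and all the transformations are natural.

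Setting $\bw_A:=\w_A^\bot$, $\bc_A:=\contr_A^\bot$ and $\bd_A:=\der_A^\bot$ reproduces exactly the inverse relations in the statement, and because the objects are self-dual these have the correct domains and codomains to act as coweakening, cocontraction and codereliction. Since $(\_)^\bot$ is an involutive contravariant monoidal functor, it sends the cocommutative comonoid $(\oc^\Sr A,\w,\contr)$ to a commutative monoid $(\oc^\Sr A,\bw,\bc)$; hence associativity, unitality and commutativity of $(\bw,\bc)$ hold for free, as does naturality of $\bw,\bc,\bd$ (the transpose of the structural naturality, via functoriality of $\oc^\Sr$). Likewise the compatibility of the costructural maps with the Seely isomorphism $\m$ is the transpose of the corresponding structural compatibility.

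It then remains to verify the mixed axioms of a model of $\DiLLo$, which I would sort into self-dual ones (checked once) and dual pairs (one member yielding the other by transposition). The bialgebra unit law $\w\circ\bw=\id_1$ is immediate and self-dual, while $\contr\circ\bw=\bw\otimes\bw$, dual to $\w\circ\bc=\w\otimes\w$, reduces to decomposing the zero function $[]=h_1+h_2$, which positivity forces to satisfy $h_1=h_2=[]$ pointwise. The self-dual bialgebra interchange $\contr\circ\bc=(\bc\otimes\bc)\circ(\id\otimes\sigma\otimes\id)\circ(\contr\otimes\contr)$ unfolds, on its nontrivial inclusion, to the following request: given $f+g=h_1+h_2$ in $\oc^\Sr A$, produce a fourfold decomposition with the prescribed row and column sums. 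This is precisely \emph{additive splitting} applied pointwise; finiteness of supports is preserved by taking all four pieces to be $0$ off the finite union of the supports of $f,g,h_1,h_2$.

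The derivation laws are the heart of the argument. Here $\der\circ\bd=\id_A$ and $\w\circ\bd=0$ are immediate, using that $a\mapsto\delta_a$ is injective and that $\delta_a\neq[]$. The crucial case is the Leibniz rule $\contr\circ\bd=(\bd\otimes\bw)+(\bw\otimes\bd)$, together with its transpose, the co-Leibniz rule $\der\circ\bc=(\der\otimes\w)+(\w\otimes\der)$: unfolding the left-hand side gives all pairs $(a,(f,g))$ with $f+g=\delta_a$, and one must show that the only such decompositions are $(\delta_a,[])$ and $([],\delta_a)$. Off the point $a$, positivity collapses $f(b)+g(b)=0$ to $f(b)=g(b)=0$, while at $a$ discreteness splits $f(a)+g(a)=1$ into the two one-sided possibilities. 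I expect this to be the main obstacle: it is exactly positivity and discreteness that forbid spurious cross-terms, so that the relational ``convolution'' of $\contr$ and $\bd$ satisfies the product rule on the nose rather than producing extra summands. Only positivity, discreteness and additive splitting are consumed by these verifications; the multiplicative conditions of a resource semiring are reserved for promotion, which $\DiLLo$ omits. Collecting the self-dual verifications with their transposed partners then yields a model of promotion-free differential linear logic.
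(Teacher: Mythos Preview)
Your proposal is correct but follows a genuinely different route from the paper.

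The paper verifies naturality of $\bd,\bw,\bc$ by direct elementwise computation, exhibiting in each case which resource-semiring hypothesis is consumed (discreteness for $\bd$, positivity for $\bw$, additive splitting together with positivity for $\bc$), and then checks the costructural diagrams of Ehrhard's axiomatisation, namely $\w;\bw=\oc\emptyset$, $\contr;(\oc r_1\otimes\oc r_2);\bc=\oc(r_1\cup r_2)$, and the three compatibility equations between $\bw,\bd,\bc$ and the monoidality $\m$. In particular the paper never isolates the Leibniz rule $\contr\circ\bd$ nor the bialgebra exchange $\contr\circ\bc$ as such.

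Your approach instead leverages the compact-closed self-duality of $\Rel$. Since $(\oc^\Sr r)^\bot=\oc^\Sr(r^\bot)$ (which is immediate from the symmetric coupling definition of $\oc^\Sr r$, though you should state it), naturality and the passage from comonoid to monoid come for free by transposition, and you restrict explicit computation to the mixed axioms in the bialgebra-plus-Leibniz presentation, cleanly pinning positivity and discreteness to the decomposition of $\delta_a$.

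What each buys: the paper's proof is entirely self-contained and does not presuppose equivalence of the two axiomatisations of $\DiLLo$ models. Yours is shorter, more conceptual, and makes the role of each semiring condition sharper; but the one-line claim that the $\m$-compatibility of the costructural maps is the transpose of the structural compatibility is less automatic than it sounds, since $\m$ is not its own transpose, and would benefit from a sentence of justification.
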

\begin{proof}
  Naturality :
  \begin{align*}
    \bd;!r
    &= \{(a,\sum_{a'}\sigma(a',\_))\mid \delta_a=\sum_b\sigma(\_,b), \mathtt{supp}(\sigma)\subseteq r \}\\
    &= \{(a,\sum_{a'}\sigma(a',\_))\mid \exists b, \sigma=\delta_{(a,b)}, \mathtt{supp}(\sigma)\subseteq r\} &\text{(discreteness)}\\
    &= \{(a,\delta_b)\mid (a,b)\in r\}\\
    &= r;\bd\\
    \bw;!r
    &= \{(*,\sum_{a'}\sigma(a',\_))\mid []=\sum_b\sigma(\_,b), \mathtt{supp}(\sigma)\subseteq r \}\\
    &= \{(*,[])\mid \} &\text{(positivity)}\\
    &=\bw\\
    \bc;!r
    &= \{((f,g),\sum_{a'}\sigma(a',\_))\mid f+g =\sum_b\sigma(\_,b), \mathtt{supp}(\sigma)\subseteq r \}\\
    &= \{((f,g),\sum_{a'}\sigma_2(a',\_))\mid f=\sum_b\sigma_1(\_,b),
    \\&\hspace{10em}  g =\sum_b\sigma_2(\_,b), \mathtt{supp}(\sigma_1+\sigma_2)\subseteq r \} &\text{(additive splitting)}\\
    &= \{((\sum_b\sigma_1(\_,b),\sum_b\sigma_2(\_,b)),f+g) \mid \mathtt{supp}(\sigma_1),\mathtt{supp}(\sigma_2)\in r\} &\text{(positivity)}\\
    &= (!r\otimes!r);\bc
  \end{align*}
  Costructural diagrams from~\cite[Sec 2.6]{ehrhardpn}:
  \begin{align*}
    \w;\bw
    &= \{([],[])\}= !\emptyset\\
    \contr;(!r_1\otimes!r_2);\bc
    &=\{((\sum_b\sigma_1(\_,b))+(\sum_b\sigma_2(\_,b)),((\sum_a\sigma_1(a,\_))+((\sum_a\sigma_2(a,\_)))))
    \\&\hspace{10em} \mid \mathtt{supp}(\sigma_1)\subseteq r_1; \mathtt{supp}(\sigma_2)\subseteq r_2\}\\
    &=\{(\sum_b(\sigma_1+\sigma_2)(\_,b),\sum_a(\sigma_1+\sigma_2)(a,\_))\mid \mathtt{supp}(\sigma_1)\subseteq r_1; \mathtt{supp}(\sigma_2)\subseteq r_2\}\\
    &= \{(\sum_b \sigma(\_,b),\sum_a\sigma(a,\_)) \mid \mathtt{supp}(\sigma)\subseteq r_1\cup r_2\}\\
    &=!(r_1\cup r_2) 
  \end{align*}
  \begin{align*}
    (\bw\otimes \id);\m &=\{\}\\
    &=\{((*,\sum_a\sigma(a,\_)),\sigma) \mid []=\sum_b\sigma(\_,b)\}\\
    &=\{((*,[]),[])\} &\text{(positivity)}\\
    &=(\id\otimes \w);\lambda;\bw\\
    (\bd\otimes\id);\m
    &= \{((a,\sum_{a'}\sigma(a',\_)),\sigma) \mid \delta_a=\sum_b\sigma(\_,b)\} \\
    &= \{((a,\delta_a),\delta_{(a,b)})\} &\text{(discreteness)}\\
    &= (\id\otimes \der);\bd\\
    (\bc\otimes\id);\m
    &= \{(((f,g),\sum_a\sigma(a,\_)),\sigma) \mid f+g=\sum_b\sigma(\_,b)\}\\
    &= \{(((\sum_b\sigma_1(\_,b),\sum_b\sigma_2(\_,b)),\sum_a(\sigma_1+\sigma_2)(a,\_)),\sigma_1+\sigma_2) \} &\text{(add. split.)}\\
    &=\{(((f,g),h+k),\sigma_1+\sigma_2) \mid f=\sum_b\sigma_1(\_,b), 
    \\&\hspace{5em} h=\sum_a\sigma_1(a,\_),\ g=\sum_b\sigma_2(\_,b),\ k=\sum_a\sigma_2(a,\_)\}\\
    &= (\id\otimes \contr);\texttt{iso};(\m\otimes \m);\bc \tag*{\qed}
  \end{align*} 
\renewcommand{\qed}{}
\end{proof}

\begin{thm}
This  model can be turned into a model for $\DBSLL$ using, as graded exponential, the restriction of $\oc^\Sr A$ to generalized multisets of correct weight :
$$ \oc_xA := \left\{ f\in\oc^\Sr A \ \middle|\ x\ge\sum_{a\in A}f(a) \right\}. $$
Everything else (functoriality and natural transformation) is just the restriction of the one above to the correct stratum.
\end{thm}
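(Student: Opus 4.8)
The plan is to lift the model of promotion-free differential linear logic built in the preceding theorem along the \emph{weight grading} that the free-module exponential already carries, and to verify that each of its natural transformations restricts to the strata in a way matching the index arithmetic of Figure~\ref{fig:dbsllexp}. Every $f\in\oc^\Sr A$ has a weight $\|f\|:=\sum_{a\in A}f(a)\in\Sr$, and the defining condition of $\oc_x A$ is exactly $\|f\|\le x$ for the order $x\le y\iff\exists x'.\,x+x'=y$ induced by the sum (only a preorder in general, which is harmless here). Dually one takes $\wn_x A:=(\oc_x A)^\bot$, which in the compact closed $\Rel$ is just the same underlying set read on the other side of the relations. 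So the statement reduces to two claims: (i) the functorial action $\oc^\Sr r$ preserves weight, so functoriality separates the strata on the nose; and (ii) each (co)structural map shifts the weight bound by exactly the amount its index prescribes.

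For (i), I would note that if $(f,g)\in\oc^\Sr r$ is witnessed by a coupling $\sigma$ with $f=(a\mapsto\sum_b\sigma(a,b))$ and $g=(b\mapsto\sum_a\sigma(a,b))$, then $\|f\|=\sum_{a,b}\sigma(a,b)=\|g\|$, so $\oc^\Sr r$ relates $\oc_x A$ to $\oc_x B$ for every $x$. The same margin-balancing argument applies to the monoidality maps $\m_\bot$ and $\m_{A,B}$, whose witnessing couplings carry a single common weight across all margins. This is precisely the ``well-stratified on the nose'' phenomenon underlying the Breuvart--Pagani construction~\cite{splitting}.

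For (ii), no more than the resource-semiring axioms is needed. Contraction $\contr_A$ is weight-additive, $\|f+g\|=\|f\|+\|g\|$, so by the (automatic) monotonicity of $+$ for the algebraic order it sends $\oc_x A\otimes\oc_y A$ into $\oc_{x+y}A$, and its inverse relation $\bc_A$ does the same, matching the index $x+y$ in both rules. The (co)weakening maps produce or consume the empty multiset $[]$, of weight $0$, hence land in the $\oc_0$ stratum. The indexed (co)dereliction is the genuinely graded ingredient: for $x\le y$ one has $\oc_x A\subseteq\oc_y A$ (from $\|f\|+s=x$ and $x+x'=y$ one obtains $\|f\|\le y$), so the stratum inclusions, read as restricted identity relations, interpret $\bdi$ and dually $\der_I$; the non-indexed $\der$ and $\bd$ are kept exactly as in the ungraded model, in accordance with the stipulation that they do not interact with the strata.

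Finally I would assemble these facts by restricting, verbatim, every diagram established in the preceding theorem: since relational composition restricts to a chosen subset whenever the intermediate weights are forced, (i) pins the weight through the functorial action while (ii) shifts it by exactly the indexed amount, so each naturality square and each costructural coherence diagram descends to its graded counterpart. I expect the main obstacle to be the indexed (co)dereliction, the only piece with no analogue in ungraded $\DiLL$: one must check that inserting a stratum inclusion before or after $\contr,\bc,\w,\bw$ never breaks commutation, and that the resulting coercions are natural in $A$. This follows from monotonicity of the algebraic order, but precisely because it cannot be imported from the ungraded model it is the step that genuinely relies on the ordered structure of $\Sr$.
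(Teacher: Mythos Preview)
Your proposal is correct and follows the same approach the paper takes: the paper gives no explicit proof of this theorem, treating it as immediate from the stratification (``Everything else \dots\ is just the restriction of the one above to the correct stratum'') together with the Breuvart--Pagani construction already cited. Your argument simply spells out what that restriction amounts to---weight preservation under couplings for functoriality, weight-additivity for (co)contraction, weight $0$ for (co)weakening, and stratum inclusions for $\der_I,\bdi$---which is exactly the content the paper leaves implicit.
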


Since those models interpret both the full linear logic and the promotion-free linear logic, we could naively conjecture that they are models of the full differential linear logic, and could eventually be graded into a model of $\DBSLL$ with a proper graded promotion.

Pacaud-Lemay and Vienney~\cite{lemay2023graded} have defined an extension of $\DBSLL$ together with a similar relational interpretation that is basically the one below. However, they did not dwell on the ``resource semiring'' constraints and gave the semantics for $\Sr=\mathbb N$, which corresponds to the gradation of the usual multiset exponential seen as a model of $\DiLL_0$. If one wants to generalise without the resource constraints, it is at the price of the non-commutation of some diagram (functoriality, some naturality, and/or preservation of the semantics through cut elimination).

In their full generality, however, these models do not fully support the promotion in the sense that they do not respect the interactions between the promotion and the costructural morphisms. According to~\cite[Sec 2.6]{ehrhardpn}, we need three other equations to be verified:
\[\bw;\p=\m_\bot;!\bw\quad,\quad\quad\quad\quad \bd;\p=\lambda;((\bw;\p)\otimes(\bd;\bd));\bc\quad\quad \text{and}\quad \quad \bc;\p=(\p\otimes \p);\m;!\bc\ .\]
The first needs the semiring to be an integral domain, which is somehow expected, the second needs a more unusual property, that $xy=1$ implies $x=y=1$. For the third one, the required properties needed on the semiring are still an open question.

\section{An indexed differential linear logic}\label{sec:idill}
In the previous section, we have defined a logic $\DBSLL$ as the syntactical differential of an indexed linear logic $\BSLL$, with its cut elimination procedure.
It is a syntactical differentiation of $\BSLL$, as it uses the idea that differentiation is expressed through costructural rules that mirror the structural rules of $\LL$.
Here we will take a semantical point of view: starting from differential linear logic, we will index it with LPDOcc into a logic named $\IDiLL$, and then study the relation between $\DBSLL$ and $\IDiLL$.

\subsection{IDiLL: a generalization of D-DiLL}\label{ssec:gen-ddill}
As we saw in Section~\ref{sec:LL}, Kerjean gave an alternative version of $\bd$ and $\der$ in previous work~\cite{ddill}, with the idea that in $\DiLL$, the codereliction corresponds to the application of the differential operator $D_0$ whereas the dereliction corresponds to the resolution of the differential equation associated to $D_0$, with a linear map as parameter.

This led to a logic $\DDiLL$, where $\bd$ and $\der$ have the same effect but with a LPDOcc $D$ instead of $D_0$, and where the exponential connectives are indexed by this operator $D$.
One would expect that this work could be connected to $\DBSLL$, but these definitions clash with the traditional intuitions of graded logics.
The reason is syntactical: in graded logics, the exponential connectives are indexed by elements of an algebraic structure, whereas in $\DDiLL$ only one operator is used as an index.
We then change the logic $\DDiLL$ into a logic $\IDiLL$, which is much closer to what is done in the graded setting.
This logic will have the following grammar:
\[
  A,B := 0 \mid 1 \mid \top \mid \bot \mid A \otimes B \mid A \parr B \mid A \with B \mid A \oplus B
\] 
\[
  E,F := \wn_D A \mid \oc_D A \mid E \otimes F \mid E \parr F \mid E \with F \mid E \with F \mid E \oplus F
\]
which is the same grammar as $\DDiLL$, but the exponentials are non-polarised.
The reader should note that here, exponential connectives can appear only once in a formula, which is not the case in $\DBSLL$.
We need this restriction in $\IDiLL$ for semantical reasons. 
As we explained in Section~\ref{subsec:distrib}, spaces of functions and distributions must be defined on finite-dimensional vector spaces, which enforces this restriction.
A more precise discussion on this question is done in Section~\ref{ssec:grad_prom}.
\par In this new framework, we will consider the composition of two LPDOcc as our monoidal operation. Indeed, thanks to Proposition \ref{prop:Ddistrconv}, we have that $D_1(\phi) \ast D_2(\psi) = (D_1 \circ D_2)(\phi \ast \psi)$. The convolution $\ast$ being the interpretation of the cocontraction rule $\bc$, the composition is the monoidal operation on the set of LPDOcc that we are looking for. Moreover, the composition of LPDOcc is commutative, which is a mandatory property for the monoidal operation in a graded framework.
We describe the exponential rules of $\IDiLL$ in Figure~\ref{fig:idillexp}.
\begin{figure}
    \begin{center}
        \AXC{$\vdash \Gamma$}
        \RL{$\w_I$}
        \UIC{$\vdash \Gamma, \wnD A$}
        \DisplayProof
        \qquad\qquad
        \AXC{$\vdash \Gamma,\wn_{D_1} A,\wn_{D_2} A$}
        \RL{$\contr$}
        \UIC{$\vdash \Gamma,\wn_{D_1 \circ D_2} A$}
        \DisplayProof
        \qquad\qquad
        \AXC{$\vdash \Gamma,\wn_{D_1} A$}
        \RL{$\der_I$}
        \UIC{$\vdash \Gamma,\wn_{D_1 \circ D_2} A$}
        \DisplayProof
    \end{center}
    \begin{center}
        \AXC{}
        \RL{$\bwi$}
        \UIC{$\vdash \ocD A$}
        \DisplayProof
        \qquad\qquad
        \AXC{$\vdash \Gamma, \oc_{D_1} A$}
        \AXC{$\vdash \Delta, \oc_{D_2} A$}
        \RL{$\bc$}
        \BIC{$\vdash \Gamma, \Delta, \oc_{D_1 \circ D_2} A$}
        \DisplayProof
        \qquad\qquad
        \AXC{$\vdash \Gamma,\oc_{D_1} A$}
        \RL{$\bdi$}
        \UIC{$\vdash \Gamma,\oc_{D_1 \circ D_2} A$}
        \DisplayProof
    \end{center}
    \caption{Exponential rules of $\IDiLL$}\label{fig:idillexp}
\end{figure}

The indexed rules $\der_D$ and $\bd_D$ of $\DDiLL$ are generalised to rules $\der_I$ and $\bd_I$ involving a variety of LPDOcc, while rules $\der$ and $\bd$ are ignored for now (see the first discussion of Section~\ref{sec:prom}). The interpretations of~$\wn_D A$ and~$\oc_D A$, and hence the typing of $\der_I$ and $\bd_I$ are changed from what $\DDiLL$ would have directly enforced (see remark~\ref{rem:ddillold}).
Our new interpretations for~$\wn_D A$ and~$\oc_D A$ are now compatible with the intuition that in graded logics, rules are supposed to add information.
\begin{align*}
&\sem{?_D A} := \{ g \mid \exists f \in \sem{\wn A},\ D(g)=f \} \qquad && \sem{\oc_D A}  := (\sem{?_D A}')' = \hat{D}(\sem{\oc A})  \\
&\der_I : \sem{\wn_{D_1} A} \to \sem{\wn_{D_1 \circ D_2} A} \qquad && \bd_{I}  : \sem{\oc_{D_1} A} \to \sem{\oc_{D_1 \circ D_2} A} 
\end{align*}

The reader might note that these new definitions have another benefit: they ensure that the dereliction (resp. the codereliction) is well typed when it consists in solving (resp. applying) a differential equation.
This will be detailed in Section~\ref{ssec:semantics}.

Notice that a direct consequence of Proposition \ref{prop:Ddistrconv} is that for two LPDOcc $D_1$ and $D_2$, $\Phi_{D_1 \circ D_2} = \Phi_{D_1} \ast \Phi_{D_2}$. It expresses that our monoidal law is also well-defined with respect to the interpretation of the indexed dereliction. 

\begin{rem}
    \label{rem:ddillold}
    Our definition for indexed connectives and thus for the types of~$\der_D$ and~$\bd_D$ is not only a generalisation but also a dualization of the original one in $\DDiLL$ \cite{ddill}. Kerjean gave types $\der_D : \wn_{D,old} E' \to \wn E'$ and~$\bd_D : \oc_{D,old} E \to  \oc E $. However, graded linear logic carries different intuitions: indices are here to keep track of the operations made through the inference rules. As such, $\der_D$ and $\bd_D$ should introduce indices $D$ and not delete them. Compared with work in~\cite{ddill}, we then change the interpretation of $\wn_D A$ and $\oc_D A$, and the types of $\der_D$ and $\bd_D$.
    \end{rem}

\subsection{Grading linear logic with differential operators}
In this section, we will show that $\IDiLL$ and $\DBSLL$ are co-interpretable.
In order to connect $\IDiLL$ with our results from Section \ref{sec:dbsll}, we have to study the algebraic structure of the set of linear partial differential operators with constant coefficients~$\D$. 
More precisely, our goal is to prove the following theorem.

\begin{thm}\label{thm:addsplit}
    The set $\D$ of LPDOcc is an additive splitting monoid under composition, with the identity operator $id$ as the identity element. 
\end{thm}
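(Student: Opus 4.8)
The plan is to reduce the statement to the multiplicative structure of a polynomial ring and then exploit unique factorization. First I would set up the \emph{symbol map} sending an operator $D = \sum_{\alpha} a_\alpha\,\partial^\alpha$ to the polynomial $P_D = \sum_\alpha a_\alpha X^\alpha \in \R[X_1,\dots,X_n]$. Since partial derivatives commute, one has $\partial^\alpha \circ \partial^\beta = \partial^{\alpha+\beta}$, so composition of operators is sent to the product of symbols, $P_{D_1 \circ D_2} = P_{D_1}\,P_{D_2}$, and $\id$ is sent to the constant $1$. As this map is clearly a bijection (each monomial $X^\alpha$ is the symbol of $\partial^\alpha$, and it is linear), it is an isomorphism of commutative monoids $(\D, \circ, \id) \cong (\R[X_1,\dots,X_n], \times, 1)$. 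This already gives the easy part: $\D$ is a monoid with unit $\id$.

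It then remains to prove that the multiplicative monoid of $\R[X_1,\dots,X_n]$ is additive splitting in the sense of Definition~\ref{def:addsplit}, where the monoid operation is now the product $\times$. The key structural input I would invoke is that $\R[X_1,\dots,X_n]$ is a unique factorization domain (by Gauss's lemma, proved by induction on $n$ over the field $\R$), hence an integral domain in which greatest common divisors exist and Euclid's lemma holds. The whole splitting property is then an instance of the refinement property of a GCD domain.

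Concretely, given operators with $D_1 \circ D_2 = D_3 \circ D_4$, I pass to symbols and obtain $P_1 P_2 = P_3 P_4$. Assuming first that all four factors are nonzero, I set $P_{1,3} := \gcd(P_1, P_3)$ and write $P_1 = P_{1,3}\,P_{1,4}$ and $P_3 = P_{1,3}\,P_{2,3}$, so that $P_{1,4}$ and $P_{2,3}$ are coprime. Cancelling $P_{1,3}$ in $P_1 P_2 = P_3 P_4$ (legitimate in an integral domain) yields $P_{1,4}\,P_2 = P_{2,3}\,P_4$; since $P_{1,4} \mid P_{2,3} P_4$ and $\gcd(P_{1,4}, P_{2,3}) = 1$, Euclid's lemma gives $P_{1,4} \mid P_4$, say $P_4 = P_{1,4}\,P_{2,4}$, and cancelling $P_{1,4}$ leaves $P_2 = P_{2,3}\,P_{2,4}$. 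The four identities $P_1 = P_{1,3}P_{1,4}$, $P_2 = P_{2,3}P_{2,4}$, $P_3 = P_{1,3}P_{2,3}$, $P_4 = P_{1,4}P_{2,4}$ are then exactly the decomposition required by Definition~\ref{def:addsplit}, once read back through the isomorphism. The degenerate cases, where some $P_i$ vanishes, are handled directly: $P_1 P_2 = 0$ forces a zero factor on each side by integrality, and one completes the $2\times 2$ table of factors using the unit $1$ as a filler.

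The routine part is the symbol isomorphism, which merely records that composing constant-coefficient operators multiplies their symbols. The genuine mathematical content, and the step I would be most careful about, is the GCD splitting: verifying that dividing out $\gcd(P_1,P_3)$ produces coprime cofactors, and justifying the two cancellations and the use of Euclid's lemma inside the integral domain $\R[X_1,\dots,X_n]$. Everything else is bookkeeping.
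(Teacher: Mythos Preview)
Your proof is correct and follows the same overall strategy as the paper: transfer the problem to the multiplicative monoid of a polynomial ring via the symbol isomorphism, then exploit unique factorization to produce the $2\times 2$ refinement. One cosmetic point: the paper sets up the isomorphism with $\R[\Xo] = \bigcup_n \R[X_1,\dots,X_n]$ rather than a fixed $\R[X_1,\dots,X_n]$, since $\D$ collects operators in all finite dimensions; but it then immediately observes that any four given polynomials live in a common $\R[X_1,\dots,X_n]$, so your formulation loses nothing. The only substantive difference is in how the splitting is extracted from the UFD property: the paper decomposes each $P_i$ explicitly into irreducibles, invokes the bijection of irreducible factors guaranteed by factoriality, and partitions the index set to build the $P_{i,j}$ by hand (with some care about the unit factors), whereas you take $P_{1,3}=\gcd(P_1,P_3)$ and finish with cancellation and Euclid's lemma. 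Both arguments use exactly the same structural input, and yours is the tidier packaging of it.
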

To prove this result, we will use multivariate polynomials: $\R[\Xo] := \bigcup_{n \in \N} \R[X_1,\dots,\allowbreak X_n].$
It is well known that $(\R[\Xo],+,\times,0,1)$ is a commutative ring. Its monoidal restriction is isomorphic to $(\D,\circ,id)$, the LPDOcc endowed with composition, through the following monoidal isomorphism
\begin{center}
    $\fonction{\chi}{\hspace{1cm}(\D, \circ)}{(\R[\Xo], \times)}{\sum_{\alpha \in \N^n} a_\alpha \frac{\partial^{|\alpha|} (\_)}{\partial x^{\alpha}}}{\sum_{\alpha \in \N^n} a_\alpha X^{\alpha_1} \dots X_n^{\alpha_n}}$
\end{center}

The following proposition is crucial in the indexation of $\DBSLL$ by differential operators, since the monoid in $\DBSLL$ has to be additive splitting.
\begin{prop}\label{prop:multsplit}
    The monoid $(\R[\Xo], \times, 1)$ is additive splitting.   
\end{prop}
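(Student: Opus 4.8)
The plan is to reduce everything to unique factorisation. The additive splitting condition (Definition~\ref{def:addsplit}) concerns only four elements at a time, and any four polynomials of $\R[\Xo]$ already live in a common $\R[X_1,\dots,X_n]$; so it suffices to establish the splitting property, taken with respect to the ring multiplication $\times$, inside a single polynomial ring $R:=\R[X_1,\dots,X_n]$, which is a unique factorisation domain. Explicitly, given $P_1,P_2,P_3,P_4\in R$ with $P_1P_2=P_3P_4$, I must exhibit $Q_{1,3},Q_{1,4},Q_{2,3},Q_{2,4}\in R$ satisfying $P_1=Q_{1,3}Q_{1,4}$, $P_2=Q_{2,3}Q_{2,4}$, $P_3=Q_{1,3}Q_{2,3}$ and $P_4=Q_{1,4}Q_{2,4}$. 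I would first dispatch the degenerate case in which some $P_i=0$: since $R$ is an integral domain, $P_1P_2=P_3P_4=0$ forces a zero factor on each side, say $P_1=P_3=0$, and then $Q_{1,3}:=0$, $Q_{1,4}:=P_4$, $Q_{2,3}:=P_2$, $Q_{2,4}:=1$ works; the other zero configurations are symmetric.

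For the main case all $P_i$ are nonzero. I would fix representatives $p_1,\dots,p_k$ of the associate classes of those irreducibles dividing any $P_i$, so that each $P_a = c_a \prod_{l} p_l^{\,v_l(P_a)}$ with $c_a\in\R^\ast$ a unit and $v_l(P_a)\in\N$ the $p_l$-adic valuation. Comparing the two sides of $P_1P_2=P_3P_4$ through uniqueness of factorisation yields, for each $l$, the numerical identity $v_l(P_1)+v_l(P_2)=v_l(P_3)+v_l(P_4)$ in $\N$, together with $c_1c_2=c_3c_4$ in $\R^\ast$. The key observation is that $\N$ is itself additive splitting: given row sums $r_1,r_2$ and column sums $s_3,s_4$ with $r_1+r_2=s_3+s_4$, one fills a $2\times 2$ table of nonnegative integers with these margins, for instance by taking the $(1,3)$ entry to be $\min(r_1,s_3)$ and solving for the rest. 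Applying this to each valuation gives nonnegative integers $g^l_{a,b}$ satisfying the four margin equations; and since $\R^\ast$ is an abelian group it is trivially additive splitting, giving units $w_{a,b}\in\R^\ast$ with $w_{1,3}w_{1,4}=c_1$, $w_{2,3}w_{2,4}=c_2$, $w_{1,3}w_{2,3}=c_3$ and $w_{1,4}w_{2,4}=c_4$. Setting $Q_{a,b}:=w_{a,b}\prod_l p_l^{\,g^l_{a,b}}$ then makes all four factorisation equations hold, by multiplying the unit parts and adding the exponents coordinatewise.

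The argument is essentially bookkeeping once unique factorisation is in place, and the only genuine content is that the exponent monoid $\N$ is additive splitting, which is the $2\times2$ contingency-table construction above. A cleaner but equivalent way to organise this is to note that the monoid of nonzero elements of $R$ is isomorphic to $\R^\ast\times\bigoplus_{p}\N$ (the unit group times the free commutative monoid on the primes), that additive splitting is preserved under arbitrary products of monoids by splitting coordinatewise, and that both factors $\R^\ast$ and $\N$ are additive splitting; the absorbing element $0$ is then appended by hand as in the first paragraph. I expect the main care to go into the units: one must fix the factorisation representatives once and for all so that the unit parts $c_a$ are well defined and split independently of the prime parts, which is exactly what the direct-product description makes transparent.
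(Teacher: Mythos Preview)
Your proof is correct and follows essentially the same approach as the paper: reduce to a single $\R[X_1,\dots,X_n]$, handle the zero case by integrality, and then exploit unique factorisation to split the irreducible factors between the four required pieces while keeping track of units. The paper does this by explicitly chasing the bijection $\sigma$ between the irreducible factors of $P_1P_2$ and $P_3P_4$ and partitioning index sets, whereas you package the same computation more modularly via $p$-adic valuations and the observation that $(\N,+)$ and $(\R^\ast,\cdot)$ are each additive splitting, hence so is their product; the two arguments are equivalent, with your valuation framing arguably tidier on the unit bookkeeping.
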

The proof requires some algebraic definitions to make it more readable.
\begin{defi}
    Let $\Rc$ be a non-zero commutative ring. 
    \begin{enumerate}
    \item An element $u \in \Rc$ is a \textit{unit} if there is $v \in \Rc$ such that $uv = 1$.
    \item An element $x \in \Rc \backslash \{0\}$ is \textit{irreducible} if it is not a unit, and not a product of two non-unit elements.
    \item Two elements $x,y \in \Rc$ are \textit{associates} if $x$ divides $y$ and $y$ divides $x$.
    \item $\Rc$ is a \textit{factorial ring} if it is an integral domain such that for each $x \in \Rc \backslash \{0\}$ there is a unit $u \in \Rc$ and $p_1,\dots,p_n \in \Rc$ irreducible elements such that $x = u p_1 \dots p_n$ 
    and for every other decomposition $vq_1\dots q_m = u p_1 \dots p_n$ (with $v$ unit and $q_i$ irreducible for each $i$) we have $n = m$ and a bijection $\sigma : \{ 1,\dots,n\} \rightarrow \{ 1,\dots , n \}$ such that $p_i$ and $q_{\sigma(i)}$ are associated for each $i$.
    \end{enumerate}
\end{defi}

\begin{proof}[Proof of Proposition~\ref{prop:multsplit}]
For each integer $n$, the ring $\R[X_1,\dots,X_n]$ is factorial.
This classical proposition is, for example, proved in~\cite[2.7 Satz 7]{bosch}.

    Let us take four polynomials $P_1,P_2, P_3$ and $P_4$ in $\R[\Xo]$ such that $P_1 \times P_2 = P_3 \times P_4$. 
    There is $n \in \N$ such that $P_1,P_2,P_3,P_4 \in \R[X_1,\dots,X_n]$.
    \par If $P_1 = 0$ or $P_2 = 0$, then $P_3 = 0$ or $P_4 = 0$, since $\R[X_1,\dots,X_n]$ has integral domain. If for example $P_1 = 0$ and $P_3 = 0$, one can define
    \[
        P_{1,3} = 0 \qquad P_{1,4} = P_4 \qquad P_{2,3} = P_2 \qquad P_{2,4} = 1
    \]
    which gives a correct decomposition. And we can reason symmetrically for the other cases.
    \par Now, we suppose that each polynomial $P_1,P_2,P_3$ and $P_4$ are non-zero. By factoriality of $\R[X_1,\dots,X_n]$, we have a decomposition 
    \begin{equation*}
        P_i = u_i Q_{n_{i-1}+1} \times \dots Q_{n_i} \tag{for each $1 \leq i \leq 4$}
    \end{equation*}
    where $n_0 = 0 \leq n_1 \dots \leq n_4$, $u_i$ are units and $Q_i$ are irreducible. Then, the equality~${P_1P_2 = P_3P_4}$ gives
    \[
        u_1u_2Q_1 \dots Q_{n_2} = u_3u_4 Q_{n_2 + 1} \dots Q_{n_4}.
    \]
    Since $u_1u_2$ and $u_3u_4$ are units, the factoriality implies that $n_2 = n_4 - n_2$ and that there is a bijection $\sigma : \{1,\dots,n_2\} \to \{n_2+1,\dots,n_4\}$ such that $Q_i$ and $Q_{\sigma(i)}$ are associates for each~${1 \leq i \leq n_2}$. 
    It means that for each $1 \leq i \leq n_2$, there is a unit $v_i$ such that $Q_{\sigma(i)} = v_i Q_i$.
    Hence, defining two sets $A_3 = \sigma^{-1}(\{n_2+1,\dots,n_3\})$ and $A_4 = \sigma^{-1}(\{n_3+1,\dots,n_4\})$ we can rewrite our polynomials $P_1$ and $P_2$ using: 
    \begin{align*}
        &A_{1,3} = A_3 \cap \{1,\dots,n_1\} = p_1,\dots,p_{m_1} \quad &R_{1,3} = Q_{p_1} \dots Q_{p_{m_1}} \quad v_{1,3} = v_{p_1} \dots v_{p_{m_1}}& \\
        &A_{1,4} = A_4 \cap \{1,\dots,n_1\} = q_1,\dots,q_{m_2} \quad &R_{1,4} = Q_{q_1} \dots Q_{q_{m_2}} \quad v_{1,4} = v_{q_1} \dots v_{q_{m_2}}& \\
        &A_{2,3} = A_3 \cap \{n_1 + 1,\dots,n_2\} = r_1,\dots,r_{m_3} \quad &R_{2,3} = Q_{r_1} \dots Q_{r_{m_3}} \quad v_{2,3} = v_{r_1} \dots v_{r_{m_3}}& \\
        &A_{2,4} = A_4 \cap \{n_1 + 1,\dots,n_2\} = s_1,\dots,s_{m_4} \quad &R_{2,4} = Q_{s_1} \dots Q_{s_{m_4}} \quad v_{2,4} = v_{s_1} \dots v_{s_{m_4}}&
    \end{align*}
    which leads to 
    \[ P_1 = u_1 R_{1,3}R_{1,4} \hspace{1.5em} P_2 = u_2 R_{2,3}R_{2,4} \hspace{1.5em} P_3 = u_3 v_{1,3}R_{1,3}v_{2,3}R_{2,3} \hspace{1.5em} P_4 = u_4 v_{1,4}R_{1,4}v_{2,4}R_{2,4}
    \]
    Finally, we define our new polynomials
    \[ P_{1,3} = u_1 R_{1,3} \qquad P_{1,4} = R_{1,4} \qquad P_{2,3} = \frac{u_3 v_{1,3} v_{2,3}}{u_1} R_{2,3} \qquad P_{2,4} = \frac{u_1 u_2}{u_3 v_{1,3} v_{2,3}} R_{2,4}
    \]
    gives the wanted decomposition: this is straightforward for $P_1, P_2$ and $P_3$ (the coefficients are chosen for that), and for $P_4$, it comes from the fact that $u_1 u_2 = u_3 u_4$ (which is in the definition of a factorial ring), and that $v_{1,a}v_{1,b}v_{2,a}v_{2,b} = 1$ which is easy to see using our new polynomials $R_{1,3},R_{1,4},R_{2,3},R_{2,4}$ and the equality $P_1P_2 = P_3P_4$.
\end{proof}

This result ensures that $(\D,\circ,id)$ is an additive splitting monoid. 
Then, $\D$ induces a logic \textsf{DB$_\D$LL}. 
In this logic, since the preorder of the monoid is defined through the composition rule, for $D_1$ and $D_2$ in $\D$ we have
\begin{center}
    $D_1 \leq D_2 \Longleftrightarrow \exists D_3 \in \D,\ D_2 = D_1 \circ D_3$
\end{center}
which expresses that the rules $\der_I$ and $\bdi$ from $\IDiLL$ and those from \textsf{DB$_\D$LL} are exactly the same. 
In addition, the weakening and the coweakening from \textsf{DB$_\D$LL} are rules which exist in $\IDiLL$ (the (co)weakening with $D = id$), and a weakening (resp. a coweakening) in $\IDiLL$ can be expressed in \textsf{DB$_\D$LL} as an indexed weakening (resp. an indexed coweakening). In fact, this indexed weakening is the one that appears in the cut elimination procedure of $\DBSLL$. Hence, this gives the following proposition.

\begin{prop}
    Each rule of $\IDiLL$ is admissible in \textsf{DB$_\D$LL}, and each rule of \textsf{DB$_\D$LL} except $\der$ and $\bd$ is admissible in $\IDiLL$.
\end{prop}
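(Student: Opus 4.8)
The plan is to prove both admissibility statements by directly comparing the rule schemata of $\IDiLL$ and of \textsf{DB$_\D$LL}, relying on two facts already in place: by Theorem~\ref{thm:addsplit} the monoid grading \textsf{DB$_\D$LL} is $(\D,\circ,id)$, so that the additive law of the grading is read as composition $\circ$ and its unit as $id$; and the induced preorder on $\D$ is $D_1 \le D_2 \Longleftrightarrow \exists D_3 \in \D,\ D_2 = D_1 \circ D_3$. Since each rule to be shown admissible is either literally a rule of the target calculus or a one- or two-step derivation in it, it suffices in each case to exhibit that short derivation from the same premises.

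First I would show that every rule of $\IDiLL$ is admissible in \textsf{DB$_\D$LL}. The contraction $\contr$ and cocontraction $\bc$ of $\IDiLL$ are, under the reading of the additive law as $\circ$, exactly the homonymous rules of \textsf{DB$_\D$LL}. For $\der_I$ and $\bdi$, the $\IDiLL$ conclusion carries index $D_1 \circ D_2$; this is an instance of the \textsf{DB$_\D$LL} rule whose side condition $D_1 \le D_1 \circ D_2$ is discharged by the witness $D_3 = D_2$. Finally, the indexed coweakening $\bwi$ of $\IDiLL$ producing $\oc_D A$ (and dually $\w_I$) is precisely the admissible indexed (co)weakening of $\DBSLL$ recalled just after Figure~\ref{fig:dbsllexp}, namely $\bw$ introducing $\oc_{id} A$ followed by $\bdi$ using $id \le D$ (respectively $\w$ followed by $\der_I$).

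Next I would treat the converse, that every rule of \textsf{DB$_\D$LL} other than $\der$ and $\bd$ is admissible in $\IDiLL$. Again $\contr$ and $\bc$ match on the nose. The weakening $\w$ of \textsf{DB$_\D$LL} introduces $\wn_{id} A$ and is the instance $D = id$ of $\IDiLL$'s $\w_I$; dually $\bw$ is the instance $D = id$ of $\bwi$. For $\der_I$ and $\bdi$, a \textsf{DB$_\D$LL} derivation carries the side condition $D_1 \le D_2$; unfolding the order yields a witness $D_3$ with $D_2 = D_1 \circ D_3$, and applying $\IDiLL$'s $\der_I$ (resp.\ $\bdi$) with this $D_3$ produces index $D_1 \circ D_3 = D_2$, as wanted. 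The rules $\der$ and $\bd$ are necessarily excluded, since their conclusions $\wn A$ and $\oc A$ are not even well-formed formulas of $\IDiLL$, whose grammar provides only indexed exponentials.

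The proof carries no genuine difficulty once the algebraic work is done: its entire content is packaged into Theorem~\ref{thm:addsplit} and into the choice of the order $\le$ on $\D$. The only point requiring attention is that the two presentations of the indexed (co)dereliction---one with an explicit side condition $D_1 \le D_2$, the other silently shifting the index from $D_1$ to $D_1 \circ D_2$---describe the same rule; this coincidence holds precisely because the order on $\D$ is defined through composition, and it is this identification that welds the two calculi together.
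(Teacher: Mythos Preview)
Your proposal is correct and follows essentially the same approach as the paper: the paper's argument, given in the paragraph immediately preceding the proposition, also proceeds by identifying $\contr$ and $\bc$ directly, observing that the $\IDiLL$ and \textsf{DB$_\D$LL} forms of $\der_I$ and $\bdi$ coincide because the preorder on $\D$ is defined through composition, and handling the (co)weakening by noting that the \textsf{DB$_\D$LL} version is the $D=id$ instance of the $\IDiLL$ version while the $\IDiLL$ version is the admissible indexed (co)weakening of $\DBSLL$. Your write-up is simply a more explicit unfolding of the same reasoning.
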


With this proposition, Theorem~\ref{thm:cutelim} ensures that $\IDiLL$ enjoys a cut elimination procedure, which is the same as the one defined for $\DBSLL$.
This procedure is even easier to define in the case of $\IDiLL$.
One issue in the definition of the cut elimination of $\DBSLL$ is the fact that indexed (co)derelictions cannot go up in the tree when they act on a formula introduced by a (co)weakening. 
To deal with this issue, we have introduced these rules $\w_I$ and $\bwi$, which are admissible in $\DBSLL$.
This gives an additional step to prove the cut elimination theorem of $\DBSLL$.
However, if one wants to prove directly the cut elimination theorem in $\IDiLL$, it will then be easier, since $\w_I$ and $\bw_I$ will not need to be defined during the proof, because they already exist in the logic.
This additional step will hence not exist in the proof.
\par One could then define these rules directly in $\DBSLL$, but we have chosen to keep this logic as close to graded linear logic as possible, and then to have only non-indexed (co)weakenings as primitive rules in $\DBSLL$.
\par The cut elimination procedure for $\IDiLL$ is then exactly the one of $\DBSLL$.

\subsection{A concrete semantics for IDiLL}\label{ssec:semantics}
Now that we have defined the rules and the cut elimination procedure for a logic able to deal with the interaction between differential operators in its syntax, we should express how it semantically acts on smooth maps and distributions.
For $\MALL$ formulas and rules, the interpretation is the same as the one for $\DiLL$ (or $\DDiLL$), given in Section \ref{sec:LL}.
First, we give the interpretation of our indexed exponential connectives. 
Beware that we are still here in a \emph{finitary setting}, in which exponential connectives only apply to finite-dimensional vector spaces, meaning that $\br{A}=\R^n$ for some $n$ in equation (\ref{eq:semindexedexp}) below. 
This makes sense syntactically as long as we do not introduce a promotion rule, and corresponds to the denotational model exposed originally by Kerjean. 
As mentioned in the conclusion, we think that work in higher-dimensional analysis should provide a higher-order interpretation for indexed exponential connectives~\cite{Gannoun_dualite}.

Let us take $D \in \D$. Considering the formal sum associated to the operator $D$, this sum can be applied to any $f \in \Ci(\R^n,\R)$ for any $n$, regardless of the order of $D$, by injecting smoothly $\Ci(\R^n,\R) \subseteq \Ci(\R^m,\R)$ for any $m \geq n$. We give the following interpretation of graded exponential connectives:
\begin{align}
\label{eq:semindexedexp}
    \br{\oc_D A} &:= \hat{D}(\sem{\oc A}) \nonumber \\
    \br{?_D A} &:= \{ f \in \Ci(\br{A}',\R) \mid \exists g \in \Ci(\br{A}',\R),\ D(f) = g\} = D^{-1}(\sem{\wn A})
\end{align}
We recall that $\hat{D}$ appears in the definition of the application of a LPDOcc to a distribution, see equation \ref{eq:lpdodistrib}.

From this definition, one can note that when $D = id$, we get 
\[ 
    \br{\oc_{id} A} = (\Ci(\br{A},\R))' = \br{! A} \qquad \qquad \br{?_{id} A} = \Ci(\br{A}',\R) = \br{? A}.
\]

\begin{rem}
One can notice that, as differential equations always have solutions in our case, the space of solutions $\sem{\wn_D A}$ is \emph{isomorphic} to the function space $\sem{\wn A}$. The isomorphism in question is plainly the dereliction $\der_D : f \mapsto \Phi_D \ast f$. While our setting might be seen as too simple from the point of view of analysis, it is a first and necessary step before extending $\IDiLL$ to more intricate differential equations, for which these spaces would not be isomorphic since $\Phi_D$ would not exist. If we were to explore the abstract categorical setting for our model, these isomorphisms would be relevant in a \emph{bicategorical} setting, with LPDO as 1-cells. Hence, the 2-cells would be isomorphisms if one restricts to LPDOcc, but much more complicated morphisms may appear in the general case. 
\end{rem}

The exponential modality $\oc_D$ has been defined on finite-dimensional vector spaces. It can be extended into a functor, i.e. as an operation on maps acting on finite-dimensional vector spaces.
The definition is the following: for $f : E \multimap F$ a linear map between two vector spaces $E$ and $F$, we define
\[
    \fonction{\oc_D f}{\oc_D E}{\oc_D F}{\psi \circ D}{(g \in \Ci(F',\R) \mapsto \psi \circ D (g \circ f)).}
\]

The next step is to give a semantical interpretation of the exponential rules. Most of these interpretations will be quite natural, in the sense that they will be based on the intuitions given in Section \ref{ssec:gen-ddill} and on the model of $\DiLL$ described in previous work~\cite{ddill}. 
However, the contraction rule will require some refinements.
The contraction takes two formulas $\wn_{D_1} A$ and $\wn_{D_2} A$, and contracts them into a formula $\wn_{D_1 \circ D_2} A$.
In our model, it corresponds to the contraction of two functions $f \in \Ci(E',\R)$ such that $D_1(f) \in \Ci(E',\R)$ and $g \in \Ci(E',\R)$ such that $D_2(g) \in \Ci(E',\R)$ into a function $h \in \Ci(E',\R)$ such that~$D_1 \circ D_2(h) \in \Ci(E',\R)$.
In differential linear logic, the contraction is interpreted as the pointwise product of functions (see section~\ref{sec:LL}). 
This is not possible here, since we do not know how to compute $D_1 \circ D_2 (f \cdot g)$.
We will then use the fundamental solution, which has the property that $D(\Phi_D \ast f) = f$.
This leads to the following definition.
\begin{defi}\label{def:rulesem}
    We define the interpretation of each exponential rule of $\IDiLL$ by:
    \begin{align*}
        &\fonction{\w}{\R}{\wn_{id} E}{1}{cst_1}
        &&\fonction{\bw}{\R}{\oc_{id} E}{1}{\delta_0} \\
        &\fonction{\contr}{\wn_{D_1} E \ \hat{\otimes} \ \wn_{D_2} E}{\wn_{D_1 \circ D_2} E}{f \otimes g}{\Phi_{D_1 \circ D_2}\ast (D_1(f)\cdot D_2(g))}
        &&\fonction{\bc}{\oc_{D_1} E \ \hat{\otimes} \ \oc_{D_2} E}{\oc_{D_1 \circ D_2} E}{\psi \otimes \phi }{\psi \ast \phi} \\
        &\fonction{\der_I}{\wn_{D_1} E}{\wn_{D_1 \circ D_2} E}{f}{\Phi_{D_2} \ast f} 
        &&\fonction{\bdi}{\oc_{D_1} E}{\oc_{D_1 \circ D_2} E}{\psi}{\psi \circ D_2}
    \end{align*}
\end{defi}

\begin{rem}
    One can note that we have only defined the interpretation of the (co)weakening when it is indexed by the identity. This is because, as well as for $\DBSLL$, the one of $\w_I$ and $\bwi$ can be deduced from this one, using the definition of $\der_I$ and $\bdi$. This leads to
    \[ \w_I : 1 \mapsto \Phi_D \ast cst_1 = cst_{\Phi_D(cst_1)} 
    \qquad\qquad
    \bwi : 1 \mapsto \delta_0 \circ D = (f \mapsto D(f)(0)).
    \]
\end{rem}

\paragraph{Polarized multiplicative connectives}
The interpretation for $\bc$ and $\contr$ is justified by the fact that in Nuclear Fréchet or Nuclear DF-spaces~\cite{ddill}, both the $\parr$ and $\otimes$ connectors of $\LL$ are interpreted by the same completed topological tensor product $\hat{\otimes}$. They, however, do not apply to the same kind of spaces, as $\wn E$ is Fréchet while $\oc E$ is not. Thus, basic operations on the interpretation of $A \parr B$ or $A \otimes B$ are first defined on elements $a \otimes b$ on the tensor product, and then extended by linearity and completion. The duality between $A'\parr B'$ and $A \otimes B$ is the one derived from function application and scalar multiplication. A function $\ell^A \otimes \ell^B \in A'\parr B' $ acts on $A \otimes B$ as $ \ell^A \otimes \ell^B  : x \otimes y \in A \otimes B \mapsto \ell^A(x) \cdot \ell^{B}(y) \in \R$.

\begin{rem}\label{rem:diracs}
    In order to define a linear morphism $m$ from $\oc E$, one can define the action of this morphism on each dirac distribution $\delta_x$ for each $x \in E$, which is an element of $\oc E$, and extend it by linearity and completion. From the Hahn-Banach theorem, the space of linear combinations of $\{\delta_x \mid x \in E\}$ is dense in $\Ci(E,\R)$, which justifies this technique. It can be extended to the definition of linear morphisms from $\oc_D E$, just by post-composing with the operator $D$. 
\end{rem}

Now we want to address the fact that the interpretation of $\contr$ is way less elegant than the interpretation of $\bc$. We are sadly aware of this issue, and working on a generalised version of $\IDiLL$ that solves this thanks to the introduction of a Laplace operator~\cite{kerjean_lemay23}. Meanwhile, let us point out that the dual version of $\contr$ is a tad more elegant, and that the ugliness of its interpretation is, in fact, hidden in the dualization of it.

\begin{prop}
\label{prop:dual_contr}
The dual of the contraction law corresponds to \[\contr'_{D_1,D_2} : 
\delta_x \circ (D_1 \circ D_2) \in \oc_{D_1 \circ D_2} E   \mapsto  ((\delta_x \circ D_1) \otimes (\delta_x \circ D_2)) \in 
\oc_{D_1} E \otimes \oc_{D_2} E.\]
\end{prop}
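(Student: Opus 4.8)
The plan is to compute the transpose $\contr'_{D_1,D_2}$ directly from its defining adjunction relation and then to identify it on a dense generating family. By definition of the transpose, for every $\phi \in \oc_{D_1 \circ D_2} E$ (viewed as a continuous linear form on $\wn_{D_1 \circ D_2} E$) and every $u \in \wn_{D_1} E \mathbin{\hat{\otimes}} \wn_{D_2} E$ one has $\contr'_{D_1,D_2}(\phi)(u) = \phi(\contr_{D_1,D_2}(u))$; nuclearity identifies the dual of the codomain of $\contr_{D_1,D_2}$ with $\oc_{D_1 \circ D_2} E$ and the dual of its domain with $\oc_{D_1} E \mathbin{\hat{\otimes}} \oc_{D_2} E$, so $\contr'_{D_1,D_2}$ has the type announced in the statement. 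First I would reduce the verification to generators: by Remark~\ref{rem:diracs} the distributions of the form $\delta_x \circ (D_1 \circ D_2)$ span a dense subspace of $\oc_{D_1 \circ D_2} E$, so by linearity and continuity it suffices to determine $\contr'_{D_1,D_2}(\delta_x \circ (D_1 \circ D_2))$, and by bilinearity of the pairing it suffices to evaluate that functional on elementary tensors $f \otimes g$ with $f \in \wn_{D_1} E$ and $g \in \wn_{D_2} E$.

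The heart of the argument is then a short computation. Unfolding Definition~\ref{def:rulesem}, $\contr_{D_1,D_2}(f \otimes g) = \Phi_{D_1 \circ D_2} \ast (D_1(f) \cdot D_2(g))$, which lies in $\wn_{D_1 \circ D_2} E$. Applying $\delta_x \circ (D_1 \circ D_2)$ to it and recalling that precomposition with an operator acts as $(\delta_x \circ D)(u) = D(u)(x)$, I obtain
\[
\contr'_{D_1,D_2}(\delta_x \circ (D_1 \circ D_2))(f \otimes g) = \bigl[(D_1 \circ D_2)\bigl(\Phi_{D_1 \circ D_2} \ast (D_1(f) \cdot D_2(g))\bigr)\bigr](x).
\]
Here I would invoke Proposition~\ref{prop:Ddistrconv}, so that the operator $D_1 \circ D_2$ passes onto the factor $\Phi_{D_1 \circ D_2}$, together with the defining property $(D_1 \circ D_2)(\Phi_{D_1 \circ D_2}) = \delta_0$ of the fundamental solution and the fact that $\delta_0$ is the unit for convolution. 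This collapses the right-hand side to $(D_1(f) \cdot D_2(g))(x) = D_1(f)(x) \cdot D_2(g)(x)$.

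Finally I would compare with the claimed image. Using the duality pairing between a tensor product and the tensor product of duals (a form $\ell^A \otimes \ell^B$ acting on $x \otimes y$ by $\ell^A(x) \cdot \ell^B(y)$), the element $(\delta_x \circ D_1) \otimes (\delta_x \circ D_2)$ evaluates on $f \otimes g$ to $(\delta_x \circ D_1)(f) \cdot (\delta_x \circ D_2)(g) = D_1(f)(x) \cdot D_2(g)(x)$, exactly the value just computed. Hence the two functionals agree on all elementary tensors, and by bilinearity and density they agree on $\oc_{D_1} E \mathbin{\hat{\otimes}} \oc_{D_2} E$; by density of the $\delta_x \circ (D_1 \circ D_2)$ this pins down $\contr'_{D_1,D_2}$ completely. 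The main obstacle is not the algebra but the soft analytic bookkeeping: one must check that evaluating on the dense families of Diracs and of elementary tensors genuinely determines the continuous map $\contr'_{D_1,D_2}$, and that every intermediate object (the smooth function $D_1(f) \cdot D_2(g)$, its convolution with $\Phi_{D_1 \circ D_2}$, and the reapplication of $D_1 \circ D_2$) lands in the space for which the pairing is defined. The cancellation via the fundamental solution is the one genuinely load-bearing step, and it is precisely where the choice of composition as the monoidal law is used.
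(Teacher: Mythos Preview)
Your proof is correct and follows essentially the same approach as the paper's: reduce to the dense generators $\delta_x \circ (D_1\circ D_2)$, evaluate on elementary tensors $f\otimes g$, unfold the definition of $\contr$, and use the fundamental solution to collapse $(D_1\circ D_2)\bigl(\Phi_{D_1\circ D_2}\ast(D_1(f)\cdot D_2(g))\bigr)$ to $D_1(f)\cdot D_2(g)$ before identifying the result with the action of $(\delta_x\circ D_1)\otimes(\delta_x\circ D_2)$. Your write-up is a bit more explicit than the paper's about the analytic justifications (density, continuity, the role of Proposition~\ref{prop:Ddistrconv}), but the argument is the same.
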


\begin{proof}
Because we are working on finite-dimensional spaces $E$, an application of the Hahn-Banach theorem gives us that the span of $\{\delta_x \mid x \in E \}$ is dense in $\oc E$. 
As such, the interpretation of $\contr'$ can be restricted to elements of the form $\delta_x \circ {D_1 \circ D_2} \in \oc_{D_1 \circ D_2} E$.
Remember also that for a linear map $\ell : E \multimap F$, its dual $\ell ' : F'  \multimap E'$ computes as follows : 
\[ \ell' :  h \in F' \mapsto (x \in E \mapsto h(\ell(x))) \in F \]

Indeed, consider $\ell \in  (\wn_{D_1 \circ D_2} E)'$. As all the spaces considered are reflexive, one has:
\[ \oc_{D_1 \circ D_2} E \simeq \{ \phi \circ D_1 \circ D_2 \mid  \oc E \} \]
and as such there is $\phi \in \oc E$ such that $\ell = \phi \circ D_1 \circ D_2$. As such, for any $f \otimes g \in \wn_{D_1} E  \hat{\otimes}  \wn_{D_2} E$ one has:
\begin{align*}
(\ell \circ \contr) (f \otimes g) & = (\phi \circ D_1 \circ D_2) (\Phi_{D_1 \circ D_2} \ast (D_1(f)\cdot D_2(g))) \\ 
& = \phi (D_1(f) . D_2 (g)) 
\end{align*}
Considering $\phi=\delta_x$, we obtain 
\begin{align*}
(\ell \circ \contr) (f \otimes g) & = \delta_x (D_1(f) . D_2 (g)) \\
& = \delta_x (D_1(f))  \cdot \delta_x (D_2 (g)) \\ 
&=  ((\delta_x \circ D_1) \otimes (\delta_x \circ D_2)) (f \otimes g).
\end{align*}
Hence $\contr'$ corresponds to $\contr'_{D_1,D_2} : \oc_{D_1 \circ D_2} E  \to \oc_{D_1} E \otimes \oc_{D_2} E$.
\end{proof}

\par In order to ensure that Definition \ref{def:rulesem} gives a correct model of $\IDiLL$, we should verify the well-typedness of each morphism.
First, this is obvious for the weakening and the coweakening. The function $cst_1$ defined on $E$ is smooth, and $\delta_0$ is the canonical example of a distribution.
Moreover, we interpret $\w$ and $\bw$ in the same way as in the model of $\DiLL$ on which our intuitions are based.
The indexed dereliction is well-typed, because for $f \in \wnu$, there is $g \in \Ci(E',\R)$ such that $D_1(f) = g$ by definition.
Hence, ${D_1 \circ D_2(\Phi_{D_2}\ast f) = D_1(f) = g \in \Ci(E',\R)}$ so $\der_I(f) \in \wnud$.
For the contraction, if~${f \in \wnu}$ and~$g \in \wnd$, $D_1(f)$ and $D_2(g)$ are in $\Ci(E',\R)$, and so is their scalar product.
Hence, ${D_1 \circ D_2(\contr(f\otimes g)) = D_1(f)\cdot D_2(g) }$ which is in $\Ci(E',\R)$.
The indexed codereliction is also well-typed: for $\psi \in \ocu$, equation (\ref{eq:semindexedexp}) ensures that $\psi = \hat{D_1}(\psi_1)$ with $\psi_1 \in \oc E$, so~$\psi \circ D_2 = (\psi_1 \circ D_1) \circ D_2 \in \ocud$.
Finally, using similar arguments for the cocontraction, if $\psi \in \ocu$ and $\phi \in \ocd$, then $\psi = \hat{D_1}(\psi_1)$ and $\phi = \hat{D_2}(\phi_1)$, with $\psi_1, \phi_1 \in \oc E$. Hence, 
\[ \psi \ast \phi = (\psi_1 \circ D_1) \ast (\phi_1 \circ D_2) = (\psi_1 \ast \phi_1) \circ (D_1 \circ D_2) = \widehat{D_1 \circ D_2}(\psi_1 \ast \phi_1) \in \ocud.\]
We have then proved the following proposition.
\begin{prop}
    Each morphism $\w,\bw,\contr,\bc,\der_I$ and $\bdi$ is well-typed.
\end{prop}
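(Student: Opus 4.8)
The plan is to verify, morphism by morphism, that the image of each map defined in Definition~\ref{def:rulesem} lands in the declared codomain, reading the latter off the explicit descriptions $\sem{\oc_D A} = \hat{D}(\sem{\oc A})$ and $\sem{?_D A} = D^{-1}(\sem{\wn A})$ from equation~(\ref{eq:semindexedexp}). Two algebraic facts will carry all the weight: the defining property $D(\Phi_D) = \delta_0$ of the fundamental solution, and the distribution of LPDOcc over convolution (Proposition~\ref{prop:Ddistrconv}), of which $\Phi_{D_1 \circ D_2} = \Phi_{D_1} \ast \Phi_{D_2}$ is an immediate corollary.

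First I would dispatch the two nullary rules. Since $\sem{\wn_{id} E} = \sem{\wn E} = \Ci(E',\R)$ and $\sem{\oc_{id} E} = \sem{\oc E} = \Ci(E,\R)'$, the well-typedness of $\w : 1 \mapsto cst_1$ and $\bw : 1 \mapsto \delta_0$ reduces to the remarks that $cst_1$ is smooth and $\delta_0$ is a distribution. For the $\wn$-side rules $\der_I$ and $\contr$, the strategy is to apply the operator $D_1 \circ D_2$ to the output and check that the result is a genuine smooth function, which is exactly the membership condition defining $\sem{\wn_{D_1 \circ D_2} E} = (D_1 \circ D_2)^{-1}(\sem{\wn E})$. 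For $\der_I(f) = \Phi_{D_2} \ast f$ with $f \in \sem{\wn_{D_1} E}$, Proposition~\ref{prop:Ddistrconv} together with $D_2(\Phi_{D_2}) = \delta_0$ gives $(D_1 \circ D_2)(\Phi_{D_2} \ast f) = D_1\big(D_2(\Phi_{D_2}) \ast f\big) = D_1(\delta_0 \ast f) = D_1(f)$, which is smooth precisely because $f \in \sem{\wn_{D_1} E}$. The very same computation, run with $h = D_1(f) \cdot D_2(g)$ in place of $f$, settles $\contr$: both $D_1(f)$ and $D_2(g)$ are smooth, hence so is their pointwise product $h$, and $(D_1 \circ D_2)(\Phi_{D_1 \circ D_2} \ast h) = h \in \Ci(E',\R)$.

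The $\oc$-side rules $\bdi$ and $\bc$ I would handle dually, by exhibiting each output as $\widehat{D_1 \circ D_2}$ applied to an element of $\sem{\oc E}$, which is the membership condition for $\sem{\oc_{D_1 \circ D_2} E} = \widehat{D_1 \circ D_2}(\sem{\oc E})$. Using $\sem{\oc_{D_i} E} = \hat{D_i}(\sem{\oc E})$ and the involutivity of $D \mapsto \hat{D}$, write $\psi = \hat{D_1}(\psi_1) = \psi_1 \circ D_1$ and $\phi = \hat{D_2}(\phi_1) = \phi_1 \circ D_2$ with $\psi_1, \phi_1 \in \sem{\oc E}$. Then $\bdi(\psi) = \psi \circ D_2 = \psi_1 \circ (D_1 \circ D_2) = \widehat{D_1 \circ D_2}(\psi_1)$, and $\bc(\psi \otimes \phi) = \psi \ast \phi = (\psi_1 \ast \phi_1) \circ (D_1 \circ D_2) = \widehat{D_1 \circ D_2}(\psi_1 \ast \phi_1)$, where the middle equality is exactly Hörmander's distribution-over-convolution identity.

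I expect no genuine obstacle, only bookkeeping; the single place where a mismatch could occur is the interaction of $\contr$ and $\bc$ with the monoidal law $\circ$, and this is precisely where Proposition~\ref{prop:Ddistrconv} is indispensable. It is what guarantees that convolving two distributions composes their attached operators into $D_1 \circ D_2$ rather than leaving an operator unaccounted for, and, on the $\wn$ side, that the fundamental solution of $D_1 \circ D_2$ inverts the combined operator in one stroke. Everything else is routine manipulation of the fundamental-solution identity, together with the reflexivity of the ambient spaces, which lets us move freely between a space and its double dual when unfolding $\sem{\oc_D E}$.
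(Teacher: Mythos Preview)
Your proposal is correct and follows essentially the same approach as the paper: case-by-case verification that each output lies in the declared codomain, using the fundamental-solution identity $D(\Phi_D)=\delta_0$ for the $\wn$-side rules and the distribution-over-convolution identity (Proposition~\ref{prop:Ddistrconv}) for $\bc$. Your write-up is in fact slightly more explicit than the paper's, spelling out the intermediate step $D_2(\Phi_{D_2})\ast f = \delta_0\ast f = f$ where the paper simply writes $D_1\circ D_2(\Phi_{D_2}\ast f)=D_1(f)$.
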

Another crucial point to study is the compatibility between this model and the cut elimination procedure $\lto$. 
In denotational semantics, one would expect that a model is invariant with respect to the computation.
In our case, that would mean that for each step of rewriting of $\lto$, the interpretation of the proof-tree has the same value.
\par It is easy to see that this is true for the cut $\w_I/\bwi$, since $D(\Phi_D \ast cst_1)(0) = cst_1(0) = 1$.
For the cut between a contraction and an indexed coweakening, the interpretation before the reduction is $\delta_0(D_1 \circ D_2)(\Phi_{D_1 \circ D_2}(D_1(f)\cdot D_2(g))) = D_1(f)(0)\cdot D_2(g)(0)$, which is exactly the interpretation after the reduction.

Finally, proving the invariance of our semantics over the cut between a contraction or a weakening and a cocontraction takes slightly more work. 
The weakening case is enforced by linearity of the distributions, while the contraction case relies on the density of $\{ \delta_x \mid x \in E\}$ in $\oc E$.
 
\begin{lem}
The interpretation of $\DBSLL$ with $\D$ as indices is invariant over the $\contr / \bc$ and the $\bc/\w_I$ cut-elimination rules, as given in Figure \ref{fig:dbsllcut}.
\end{lem}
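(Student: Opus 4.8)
The plan is to reduce each cut-elimination step to an equality of scalars obtained by pairing a distribution coming from the $\oc$-side of the cut with a smooth function coming from the $\wn$-side, and to establish this equality using three ingredients: Proposition~\ref{prop:Ddistrconv} to commute an LPDOcc with a convolution, the defining property $D(\Phi_D)=\delta_0$ of the fundamental solution, and---for the contraction case---the density of the Diracs (Remark~\ref{rem:diracs}) together with Proposition~\ref{prop:dual_contr}. Throughout I write an element of $\sem{\oc_{D} E}=\hat D(\sem{\oc E})$ as $\psi=\psi_1\circ D$ with $\psi_1\in\sem{\oc E}$, and I use that $\widehat{(\_)}$ is a monoid homomorphism, so that $\hat D_1(\psi_1)\ast\hat D_2(\phi_1)=(\psi_1\ast\phi_1)\circ(D_1\circ D_2)$, exactly as in the computation establishing that $\bc$ is well-typed.

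For the $\bc/\w_I$ rule I would first compute the left-hand side. Writing $\psi=\psi_1\circ D_1$ and $\phi=\phi_1\circ D_2$, the cocontraction produces $\psi\ast\phi=(\psi_1\ast\phi_1)\circ(D_1\circ D_2)$ and the indexed weakening produces $\Phi_{D_1\circ D_2}\ast cst_1$, so their pairing is
\[(\psi\ast\phi)\bigl(\Phi_{D_1\circ D_2}\ast cst_1\bigr)=(\psi_1\ast\phi_1)\Bigl((D_1\circ D_2)\bigl(\Phi_{D_1\circ D_2}\ast cst_1\bigr)\Bigr)=(\psi_1\ast\phi_1)(cst_1),\]
where the last step uses Proposition~\ref{prop:Ddistrconv} and $D(\Phi_D)=\delta_0$ to collapse $(D_1\circ D_2)(\Phi_{D_1\circ D_2}\ast cst_1)=\delta_0\ast cst_1=cst_1$. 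Since $cst_1(x+y)=1$, the convolution formula gives $(\psi_1\ast\phi_1)(cst_1)=\psi_1(cst_1)\,\phi_1(cst_1)$ by linearity of $\psi_1$. On the right-hand side the two cuts are performed separately: the first contributes $\psi(\Phi_{D_1}\ast cst_1)=\psi_1\bigl(D_1(\Phi_{D_1})\ast cst_1\bigr)=\psi_1(\delta_0\ast cst_1)=\psi_1(cst_1)$, and likewise the second contributes $\phi_1(cst_1)$, so the two products agree. This is the sense in which the weakening case is settled purely by linearity of the distributions.

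For the $\contr/\bc$ rule I would invoke Remark~\ref{rem:diracs}: both interpretations are linear in the distributions supplied by $\Pi_2$ and $\Pi_3$, so it suffices to evaluate them on graded Diracs $\psi=\delta_u\circ D_3\in\sem{\oc_{D_3} E}$ and $\phi=\delta_v\circ D_4\in\sem{\oc_{D_4} E}$ with $D_3\circ D_4=D_1\circ D_2$. By Proposition~\ref{prop:Ddistrconv} and $\delta_u\ast\delta_v=\delta_{u+v}$, their cocontraction collapses to the single graded Dirac $\psi\ast\phi=\delta_{u+v}\circ(D_1\circ D_2)$. Pairing it with the contraction output $\Phi_{D_1\circ D_2}\ast(D_1(f)\cdot D_2(g))$ and collapsing $\Phi$ as before yields $\delta_{u+v}\bigl(D_1(f)\cdot D_2(g)\bigr)=D_1(f)(u+v)\cdot D_2(g)(u+v)$. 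The reduced tree of Figure~\ref{fig:dbsllcut} instead reassembles this pairing through the dual of the contraction, and I would check that its interpretation equals the pairing of $f\otimes g$ with $\contr'$ applied to $\delta_{u+v}\circ(D_1\circ D_2)$; by Proposition~\ref{prop:dual_contr} this dual is $(\delta_{u+v}\circ D_1)\otimes(\delta_{u+v}\circ D_2)$, whose pairing with $f\otimes g$ is $(\delta_{u+v}\circ D_1)(f)\cdot(\delta_{u+v}\circ D_2)(g)=D_1(f)(u+v)\cdot D_2(g)(u+v)$. Thus the two interpretations coincide on Diracs, and density concludes.

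The main obstacle I anticipate lies in the contraction case, and it is bookkeeping rather than conceptual: one must verify that the reduced proof tree of Figure~\ref{fig:dbsllcut}---with its auxiliary axiom links, the cocontractions assembling $\Pi_a$ and $\Pi_b$, and the recombining contractions indexed by an additive splitting $x_{1,3},x_{1,4},x_{2,3},x_{2,4}$---really interprets as $\contr'$ post-composed with the cocontraction, \emph{independently} of which splitting is chosen. What makes this tractable for $\D$ is precisely that cocontraction sends a pair of graded Diracs to a single graded Dirac, and that the dual contraction of Proposition~\ref{prop:dual_contr} splits that single Dirac back deterministically; hence the non-canonical choice of splitting is invisible at the level of Diracs, and the density of $\{\delta_x\}$ (Remark~\ref{rem:diracs}) then propagates the equality to all of $\sem{\oc E}$.
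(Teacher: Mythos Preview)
Your proposal is correct and follows essentially the same approach as the paper: linearity of distributions and the fundamental-solution identity for the $\bc/\w_I$ case, and density of Diracs together with Proposition~\ref{prop:dual_contr} for the $\contr/\bc$ case. The only difference is presentational---you collapse $\Phi_D$ via the decomposition $\psi=\psi_1\circ D$ whereas the paper expands the convolution directly, and the paper packages the $\contr/\bc$ verification as a commuting square rather than a scalar computation---but the key ingredients and the structure of the argument are the same.
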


\begin{proof} Before cut-elimination, the interpretation of the $\bc / \w$ as given in Figure \ref{fig:dbsllcut} is:
    \begin{align*}
        &(\psi \ast \phi)(\Phi_{D_1 \circ D_2} \ast cst_1) \\
        &= \psi(x \mapsto \phi(y\mapsto \Phi_{D_1} \ast (\Phi_{D_2} \ast cst_1)(x+y))) \\
        &= \psi(x \mapsto \phi(y\mapsto \Phi_{D_1}(z \mapsto \Phi_{D_2} \ast cst_1(x+y-z)))) \\
        &= \psi(x \mapsto \phi(y\mapsto \Phi_{D_1}(cst_{\Phi_{D_2}(cst_1)}))) \\
        &= \psi(x \mapsto \phi(y\mapsto \Phi_{D_1}(\Phi_{D_2}(cst_1).cst_1))) \\
        &= \psi(x \mapsto \phi(y\mapsto \Phi_{D_2}(cst_1).\Phi_{D_1}(cst_1))) \tag{by homogeneity of $\phi$} \\
        &= \psi(x \mapsto \phi(cst_{\Phi_{D_2}(cst_1).\Phi_{D_1}(cst_1)})) \\
        &= \psi(x \mapsto \phi(\Phi_{D_1}(cst_1).cst_{\Phi_{D_2}(cst_1)})) \\
        &= \psi(x \mapsto \Phi_{D_1}(cst_1).\phi(cst_{\Phi_{D_2}(cst_1)})) \tag{by homogeneity of $\phi$} \\
        &= \psi(cst_{\Phi_{D_1}(cst_1).\phi(cst_{\Phi_{D_2}(cst_1)})}) \\
        &= \psi(\phi(cst_{\Phi_{D_2}(cst_1)}).cst_{\Phi_{D_1}(cst_1)}) \\
        &= \phi(cst_{\Phi_{D_2}(cst_1)}).\psi(cst_{\Phi_{D_1}(cst_1)}) \tag{by homogeneity of $\psi$}
    \end{align*}
which corresponds to the interpretation of the proof after cut-elimination.

Let us tackle now the $\bc / \contr$ cut-elimination case.
Suppose that we have $D_1, D_2, D_3, D_4 \in \D$ such that $D_1 \circ D_2 = D_3 \circ D_4 $.  By the additive splitting property we have $D_{1,3},D_{1,4},D_{2,3},D_{2,4}$ such that 
\[ D_1 = D_{1,3} \circ D_{1,4} \qquad D_2 = D_{2,3} \circ D_{2,4} \qquad D_3 = D_{1,3} \circ D_{2,3} \qquad D_4 = D_{1,4} \circ D_{2,4}.\]
The diagrammatic translation of the cut-elimination rule in Figure \ref{fig:dbsllcut} is the following. 
\begin{center}
\begin{tikzcd}
\oc_{D_1} E \otimes \oc_{D_2} E
\arrow[d,"\bc_{D_1,D_2}"] \arrow[rrr,"\contr'_{D_{1,3},D_{1,4}}\otimes \contr'_{D_{2,3},D_{2,4}}"]  &  &  & {\oc_{D_{1,3}} E \otimes \oc_{D_{1,4}} E \otimes \oc_{D_{2,3}} E \otimes \oc_{D_{2,4}} E} \arrow[dd]  \\
\oc_{D_1 \circ D_2} E = \oc_{D_3 \circ D_4} E \arrow[d,"\contr'_{D_3,D_4}"] &  &  &                                                                                                 \\
\oc_{D_3} E \otimes \oc_{D_4} E                         &  &  & {\oc_{D_{1,3}} E \otimes \oc_{D_{2,3}} E \otimes \oc_{D_{1,4}} E \otimes \oc_{D_{2,4}} E} \arrow[lll,"\bc_{D_{1,3},D_{2,3}}\otimes \bc_{D_{1,4},D_{2,4}}"]
\end{tikzcd}
\end{center}

Remember that the convolution of Dirac operators is the Dirac of the sum of points, and as such, we have :
\[\bc_{D_a,D_b}  : (\delta_x \circ {D_a}) \otimes (\delta_y \circ D_b) \mapsto (\delta_{x+y} \circ D_b \circ D_a) .\] We make use of proposition~\ref{prop:dual_contr} to 
 compute easily that the diagram above commutes on elements $(\delta_x \circ {D_1}) \otimes (\delta_y \circ D_2)$ of $\oc_{D_1} E \otimes \oc_{D_2} E$, and as such commutes on all elements by density and continuity of $\bc$ and $\contr'$.
\end{proof}

\par In order to ensure that this model is fully compatible with $\lto$, it also has to be invariant by $\lto_{\der_I}$ and by $\lto_{\bdi}$.
For $\lto_{\der_I}$, the interpretation of the reduction step when the indexed dereliction meets a contraction is 
\begin{align*}
    & \Phi_{D_3} \ast (\Phi_{D_1 \circ D_2}\ast (D_1(f)\cdot D_2(g))) \\
    &= \Phi_{D_1 \circ D_2 \circ D_3}\ast ((D_1(f)\cdot D_2(g)).cst_1)\\
    &= \Phi_{D_1 \circ D_2 \circ D_3}\ast ((D_1(f)\cdot D_2(g))\cdot D_3(\Phi_{D_3}\ast cst_1))\\
    &= \Phi_{D_1 \circ D_2 \circ D_3}\ast (D_1 \circ D_2(\Phi_{D_1 \circ D_2}\ast (D_1(f)\cdot D_2(g)))\cdot D_3(\Phi_{D_3}\ast cst_1))
\end{align*} 
which is the interpretation after the application of $\lto_{\der_I,2}$.
The case with a weakening translates the fact that $\Phi_{D_1 \circ D_2} = \Phi_{D_1} \ast \Phi_{D_2}$.
Finally, the axiom rule introduces a distribution $\psi \in \ocu$ and a smooth map $f \in \ocu$, and $\lto_{\der_I,4}$ corresponds to the equality $\Phi_{D_1 \circ D_2} \ast D_1(f) = \Phi_{D_2} \ast f$.
\par The remaining case is the procedure $\lto_{\bdi}$, which is quite similar to $\lto_{\der_I}$. The invariance of the model with the cocontraction case follows from Proposition \ref{prop:Ddistrconv}.
For the weakening, this is just the associativity of the composition, and the axiom works because $\delta_0$ is the neutral element of the convolution product.
We can finally deduce that our model gives an interpretation which is invariant under the cut elimination procedure of Section \ref{sec:dbsll}.
\begin{prop}
    Each morphism $\w,\bw,\contr,\bc,\der_I$ and $\bdi$ is compatible with the cut elimination procedure $\lto$.
\end{prop}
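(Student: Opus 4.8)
The plan is to exploit compositionality of the interpretation. Since $\sem{\cdot}$ assigns to a proof a morphism built by composing and tensoring the interpretations of its inference rules, it suffices to verify semantic invariance \emph{locally}: for each elementary rewriting step $\Pi \lto \Pi'$ the interpretations of the redex and its contractum must agree, the surrounding context being interpreted identically on both sides. I would organise the verification according to the three families generating $\lto$: the principal cut reductions $\lto_{cut}$ (groups~1 and~2 of Figure~\ref{fig:dbsllcut}, restricted to the cases not involving $\der$ or $\bd$, which are absent from $\IDiLL$), the indexed-dereliction commutations $\lto_{\der_I,i}$, and the indexed-codereliction commutations $\lto_{\bdi,i}$.

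For $\lto_{cut}$, the cases $\w_I/\bwi$, $\contr/\bwi$ and $\bc/\w_I$ follow from the direct computations carried out above, each using only the fundamental-solution identity $D(\Phi_D \ast f)=f$, linearity of distributions, and the factorization $\Phi_{D_1 \circ D_2}=\Phi_{D_1}\ast\Phi_{D_2}$. The remaining $\contr/\bc$ case is precisely the content of the preceding lemma and is the technical core. There one first applies additive splitting (Theorem~\ref{thm:addsplit}) to obtain $D_{1,3},D_{1,4},D_{2,3},D_{2,4}$, recasts the cut-elimination rule as a commuting square of the tensor-product interpretations, and checks its commutativity. This is where I expect the main obstacle: the interpretation of $\contr$ is the awkward convolution-with-fundamental-solution expression of Definition~\ref{def:rulesem}, so a head-on computation is unwieldy. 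The remedy is to dualize: by Proposition~\ref{prop:dual_contr} the dual contraction acts transparently on Dirac elements $\delta_x \circ (D_1\circ D_2)$, and since $\bc$ is convolution with $\delta_x \ast \delta_y=\delta_{x+y}$, the square collapses to a bookkeeping identity on such elements; density of $\mathrm{span}\{\delta_x\}$ in $\oc E$ (Hahn--Banach) together with continuity of $\bc$ and $\contr'$ then extends the equality to all of $\oc_{D_1}E \otimes \oc_{D_2}E$.

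For $\lto_{\der_I}$ I would dispatch the four subcases in turn: $\lto_{\der_I,1}$ is immediate by functoriality, since the exchanged rules act on disjoint formulas; $\lto_{\der_I,2}$ reduces to the identity $\Phi_{D_3}\ast(\Phi_{D_1\circ D_2}\ast(D_1(f)\cdot D_2(g)))=\Phi_{D_1\circ D_2\circ D_3}\ast(D_1(f)\cdot D_2(g)\cdot D_3(\Phi_{D_3}\ast cst_1))$, itself a consequence of the fundamental-solution and factorization identities; $\lto_{\der_I,3}$ is just $\Phi_{D_1\circ D_2}=\Phi_{D_1}\ast\Phi_{D_2}$; and $\lto_{\der_I,4}$ is $\Phi_{D_1\circ D_2}\ast D_1(f)=\Phi_{D_2}\ast f$. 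The family $\lto_{\bdi}$ is handled symmetrically: the commuting case is functorial, the cocontraction case is a direct application of Proposition~\ref{prop:Ddistrconv}, the weakening case is associativity of composition, and the axiom case uses that $\delta_0$ is the unit of convolution. Assembling these local checks, every generator of $\lto$ preserves the interpretation, hence so does its transitive closure; since each rule is interpreted exactly by the named morphism, this yields the claimed compatibility.
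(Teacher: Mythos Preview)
Your proposal is correct and follows essentially the same approach as the paper: you verify local invariance case by case, using the fundamental-solution identities for the group~1 cuts, the dualization via Proposition~\ref{prop:dual_contr} and density of Diracs for the $\contr/\bc$ case, and the same computations for the four $\lto_{\der_I}$ and $\lto_{\bdi}$ subcases (your identifications of the key equalities in $\lto_{\der_I,2}$, $\lto_{\der_I,3}$, $\lto_{\der_I,4}$ and the corresponding $\bdi$ cases match the paper's exactly).
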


\section{Conclusion}

In this paper, we define a multi-operator version of $\DDiLL$, which turns out to be the finitary differential version of Graded Linear Logic. We describe the cut-elimination procedure and give a denotational model of this calculus in terms of differential operators. This provides a new and unexpected semantics for Graded Linear Logic, and tightens the links between Linear Logic and Functional Analysis.
\subsection{Related work}
This work is an attempt to give notions of differentiation in programming language semantics.
Recently, other works have made some advances in this direction.
We compare our approaches and explain the choices that we have made. 
\paragraph*{Graded differential categories} In recent works, Pacaud-Lemay and Vienney have defined a graded extension of differential categories~\cite{lemay2023graded}. 
If one wants to give a categorical semantics for $\IDiLL$, their work is a natural starting point.
However, some major differences have to be noted.
First, since they follow what has been done in graded logics, their indices are elements of some semiring, whose elements are not necessarily differential operators.
Secondly, they do not have indexed derelictions and coderelictions. 
While we use these rules to solve or apply differential equations, their notion of differentiation comes from the non-indexed codereliction, which is the usual point of view in $\DiLL$, and the indexes are here to possibly refine the notion of differentiability.
More precisely, for a semiring $\Sr$, they define an $\Sr$-graded monoidal coalgebra modality as follows. 
\begin{defi}
  A \emph{$\Sr$-graded monoidal coalgebra modality} on a symmetric monoidal category $(\mathcal{L},\otimes,I)$ is a tuple $(\oc,\p,\der,\contr,\w,\m^\otimes,\m_\bot)$ where:
  \begin{itemize}
    \item for each $s \in \Sr$, $\oc_s : \mathcal{L} \to \mathcal{L}$ is an endomorphism;
    \item for each $s,t \in \Sr$, $\p_{s,t}:\oc_{st}A\to\oc_{s}\oc_{t}A$ and $\contr_{s,t}:\oc_{s+t}A\to\oc_{s}A\otimes\oc_{t}A$ are natural transformations;
    \item $\der : \oc_1 A \to A$ and $\w : \oc_0 A \to I$ are natural transformations;
    \item for each $s\in\Sr$, $\m_s^\otimes : \oc_s A \otimes \oc_s B \to \oc_s(A\otimes B)$ and $\m_{\bot,s}:I\to \oc_s I$ are natural transformations. 
  \end{itemize}
    In addition, some categorical equalities have to be satisfied.
\end{defi}
The equalities are detailed in Definitions 2.1 and 2.2 of~\cite{lemay2023graded}.
This can be extended, with costructural morphisms, in order to capture the notion of differentiation.
\begin{defi}
    A \emph{$\Sr$-graded monoidal additive bialgebra differential modality} on an additive symmetric monoidal category $(\mathcal{L},\otimes,I)$ is a tuple $(\oc,\p,\der,\contr,\w,\m^\otimes,\m_\bot,\bc,\bw,\bd)$ where $(\oc,\p,\der,\contr,\w,\m^\otimes,\m_\bot)$ is a $\Sr$-graded monoidal coalgebra modality on $\mathcal{L}$, and 
    \begin{itemize}
        \item for each $s,t \in \Sr$, $\bc_{s,t} : \oc_s A \otimes \oc_t A \to \oc_{s+t}A$ is a natural transformation;
        \item $\bw : I \to \oc_0 A$ is a natural transformation;
        \item $\bd : A \to \oc_1 A$ is a natural transformation.
    \end{itemize}
    In addition, some categorical equalities have to be satisfied.
\end{defi}
\begin{rem}
    In differential categories, deriving transformations are the natural way to consider differentiation. 
    In their paper, Pacaud-Lemay and Vienney define graded deriving transformations and graded Seely isomorphisms.
    Alternatively, they define graded costructural morphisms ($\bw,\bc$ and $\bd$), and prove that this is equivalent to graded deriving transformation and graded Seely isomorphisms.
    Here, we only consider the second version, with the costructural rules, since it is closer to our work.
\end{rem}
The semantics that we have defined for $\IDiLL$ is not a $\Sr$-graded monoidal additive bialgebra differential modality. 
Of course, the main reason is that the set of LPDOcc is not a semiring, since we do not know which rule would correspond to the product.
This implies that we do not know how to define $\der,\bd$ and $\p$.
However, some natural transformations are still possible to define in our concrete model.
The ones interpreting the logical rules are the ones given in Definition~\ref{def:rulesem}.
But in addition, the transformations $\m^\otimes$ and $\m_\bot$, which express the monoidality of the functors $\oc_D$ can be defined as well.
Using Remark~\ref{rem:diracs}, we define these morphisms on diracs for each LPDOcc $D$ and each finite dimensional vector spaces $E,F$:
\[
    \fonction{\m_D^\otimes}{\oc_D E \otimes \oc_D F}{\oc_D (E \otimes F)}{(\delta_x \circ D) \otimes (\delta_y \circ D)}{\delta_{x \otimes y}\circ D}
    \qquad\qquad
    \fonction{\m_{\bot,D}}{\R}{\oc_D \R}{x}{x \delta_1 \circ D.}
\]
\paragraph{Higher-order models of smooth functions}
Our paper is based on a specific interpretation of finitary $\DiLL$, which was first explained in ~\cite{ddill}. This semantics extends in fact to full Differential Linear Logic,  by describing higher order functions on Fréchet or DF-spaces~\cite{kerjean_lemay19, Gannoun_dualite}. Several other higher-order semantics of $\DiLL$ exist, among them the already mentioned work by Dabrowski~\cite{dabrowski_models} or Ehrhard~\cite{ehrhard_kosequence_2002}. Convenient structures~\cite{kriegl_convenient_1997,diffeology,blute_convenient_2012} also give a model of $\DiLL$ and higher-order differentiation: they share the common idea that a (higher-order) smooth function $f : E \to F$ is defined as a function sending a smooth curve $c : \R \to E$ to a smooth curve $f : \R \to R$. They share particularly nice categorical structure, and enjoy limits, colimits, quotients\dots However, they crucially lack good $\ast$-autonomous structure, on which the present work is build. Specifically, convenient vector spaces do not form a $\ast$-autonomous category, and cannot, due to the use of bornologies~\cite[Section 6]{kerjean_tasson_2016}. Likewise, diffeological spaces enjoy good cartesian structure but do not have any $\ast$-autonomous structure.

\subsection{Promotion and higher-order differential operators}\label{sec:prom}

In Section~\ref{sec:idill}, we have defined a differential extension of graded linear logic, which is interpreted thanks to exponentials indexed by a monoid of differential operators. This extension is done \emph{up-to promotion}, meaning that we do not incorporate promotion in the set of rules. There are two reasons why it makes sense to leave promotion out of the picture:
\begin{itemize}
\item $\DiLL$ was historically introduced without it, with a then perfectly symmetric set of rules.
\item Concerning semantics, LPDOcc are only defined when acting on functions with a finite-dimensional codomain: $D : \Ci (\R^n, \R) \to \Ci (\R^n, \R)$. Introducing a promotion rule would mean extending the theory of LPDOcc to higher-order functions.
\end{itemize}
In this section, we sketch a few of the difficulties one faces when trying to introduce promotion and dereliction rules indexed by differential operators, and explore possible solutions. 

\paragraph{Graded (co)dereliction}
Indexing the promotion goes hand-in-hand with indexing the \emph{dereliction}. In Figure \ref{fig:dbsllexp}, we introduced a basic (not indexed) dereliction and codereliction rule $\der$ and $\bd$. The original intuition of $\DiLL$ is that codereliction computes the differentiation at $0$ of some proof. Following the intuition of $\DDiLL$, dereliction computes a solution to the equation $D_0 (\_) = \ell$ for some $\ell$. Therefore, as indices are here to \emph{keep track} of the computations, and following  equation (\ref{eq:semindexedexp}), we should have (co)derelictions indexed by $D_0$ as below:

\begin{center}
\AXC{$\vdash \Gamma, A$}
\RL{$\bd$}
\UIC{$\vdash \Gamma, \oc A$}
\DisplayProof
\qquad
\AXC{$\vdash \Gamma, A$}
\RL{$\der$}
\UIC{$\vdash \Gamma, \wn A$}
\DisplayProof
\hspace{2cm}
\AXC{$\vdash \Gamma, A$}
\RL{$\bd_{D_0}$}
\UIC{$\vdash \Gamma, \oc_{D_0} A$}
\DisplayProof
\qquad
\AXC{$\vdash \Gamma, A$}
\RL{$\der_{D_0}$}
\UIC{$\vdash \Gamma, \wn_{D_0} A$}
\DisplayProof
\end{center}

Mimicking what happens in graded logics, $D_0$ should be the identity element for the second law in the semiring interpreting the indices of exponentials in $\DBSLL$.
However, $D_0$ is \emph{not} a linear partial differential operator (even less with constant coefficient).
Let us briefly compare how a LPDOcc  $D$ and $D_0$ act on a function $f \in \Ci (\R^n, \R)$:
\[D : f \mapsto \left( y \in \R^n \mapsto \sum_{\alpha \in \N^n} a_\alpha \frac{\partial^{|\alpha|} f}{\partial x^{\alpha}}(y) \right) 
\qquad
D_0 : f \mapsto \left( y \in \R^n \mapsto \sum_{0 \leq i \leq n} y_i \frac{\partial^{} f}{\partial x_i}(0) \right)\]
where $(x_i)_i$  is the canonical base of $\R^n$, $y_i$ is the $i$-th coordinate of $y$ in the base $(x_i)_i$, and~$a_{\alpha} \in \R$. To include LPDOcc and $D_0$ in a single semiring structure, one would need to consider global differential operators generated by:

\begin{center}
$\mathsf{D} : f \mapsto \left(  (y,v) \mapsto \sum_{\alpha \in \N^n} a_\alpha(v) \frac{\partial^{|\alpha|} f}{\partial x^{\alpha}}(y) \right), \text{ with $a_{\alpha} \in \Ci(\R^n,\R)$.}$
\end{center}
The algebraic structure of such a set would be more complicated, and the composition in particular would not be commutative, and as such, not suitable for the first law of a semi-ring, which is essential since it ensures the symmetry of the contraction and the cocontraction.

\paragraph{Graded promotion with differential operators}\label{ssec:grad_prom}

To introduce a promotion law in $\IDiLL$, we need to define a multiplicative law $\odot$ on $\D$, with~$D_0$ as a unit. We will write it under a digging form:

\begin{center}
\AXC{$\vdash \Gamma, \wn_{D_1} \wn_{D_2} A$}
\RL{$\mathsf{dig}$}
\UIC{$\vdash \Gamma, \wn_{D_1 \odot D_2} A$}
\DisplayProof
\end{center}

This relates to recent work by Kerjean and Lemay \cite{kerjean_lemay23}, inspired by preexisting mathematical work in infinite-dimensional analysis \cite{Gannoun_dualite}. 
They show that in particular quantitative models, one can define the exponential of elements of $\oc A$, such that $e^{D_0} : \Ci(\R^n,\R) \to \Ci(\R^n,\R)$ is the identity. It hints at a possible definition of the multiplicative law as $D_1 \odot D_2 := D_1 \circ e^{D_2}$.

Even if one finds a semi-ring structure on the set of all LPDOcc, the introduction of promotion in the syntax means higher-order functions in denotational models. Indexed exponential connectives are defined so far thanks to the action of LPDOcc on functions with a finite number of variables. To make LPDOcc act on higher order function (\textit{e.g.} elements of $\Ci(\Ci(\R^n,\R),\R)$ and not only $\Ci(\R^n,\R)$) one would need to find a definition of partial differential operators independent from any canonical base, which seems difficult. Moreover, contrary to what happens regarding the differentiation of the composition of functions, no higher-order version of the chain rule exists for the action of LPDOcc on the composition of functions. A possible solution could come from differentiable programming \cite{mazza_diffprog}, in which differentials of first-order functions are propagated through higher-order primitives.

As a trick to bypass some of these issues, we could consider that the $!_D$ modalities are not composable. This is possible in a framework similar to the original $\BLL$ or that of $\IndLL$~\cite{ill}, where indexes have a source and a target.

\subsection{Other perspectives}
There are several directions to explore now that the proof theory of $\DBSLL$ has been established. The obvious missing piece in our work is the \emph{categorical axiomatization} of our model. In a version with promotion, that would consist of a differential version of bounded linear exponentials \cite{corequantitative}. A first study based on differential categories \cite{diffcat} was recently done by Pacaud-Lemay and Vienney \cite{lemay2023graded}. While similarities will certainly exist in categorical models of $\DBSLL$, differences between the dynamics of LPDOcc and those of differentiation at $0$ will certainly require adaptation. In particular, the treatment of the sum will require attention (proofs do not need to be summed here, while differential categories are additive). Finally, beware that our logic does not yet extend to higher-order and that, without a concrete higher model, it might be difficult to design elegant categorical axioms. 

Another line of research would consist of introducing more complex differential operators as indices of exponential connectives. Equations involving LPDOcc are extremely simple to manipulate as they are solved in a single step of computation (by applying a convolution product with their fundamental solution). The vast majority of differential equations are difficult, if not impossible, to solve. One could introduce fixpoint operators within the theory of $\DBSLL$, to try and modelize the resolution of differential equations by fixed point. This could also be combined with the study of particularly stable classes of differential operators, as \emph{D-finite operators}. We would also like to understand the link between our model, where exponentials are graded with differential operators, and another new model of linear logic where morphisms correspond to linear or non-linear differential operators~\cite{jets}.

The need for $\ast$-autonomous structure is not surprising from a mathematical point of view, as reflexive spaces are central in distribution theory. It is, however, unexpected from a logical point of view, as a traditional graded exponential does not need an involutive duality and can be described in the setting of Intuitionistic Linear Logic. We suggest that a categorical exploration of the interactions between differentiation and $\ast$-autonomy might help us understand potential generalisations of the present work to higher orders. In particular, as mentioned several times in this paper, the isomorphism $E \simeq E''$ is frequently overlooked. While the dual of a graded "of course" $\oc_a E$, for $a$ in a monoid or a semi-ring, should be a graded "why not" $\wn_{a^\ast} E'$, nothing a priori enforces $a=a^\ast$. Works by Ouerdiane~\cite{Gannoun_dualite}, in particular, feature higher-order functions bounded by exponential $e^{\theta}$ where $\theta$ is a Young function. These young functions are also indices for interpretations of $\oc$ and $\wn$ on DF and Fréchet spaces, and duality transforms an index $\theta$ into its convex conjugate $\theta^{\ast}$. We gather that higher-order functional analysis has much to offer on the topic of graded exponentials.

\bibliographystyle{alphaurl}
\bibliography{biblio}

\end{document}